     \def\section{\@startsection{section}{1}%
     \z@{.7\linespacing\@plus\linespacing}{.5\linespacing}%
     {\bfseries%\normalfont\scshape
     \centering
     }}
     \def\@secnumfont{\bfseries}
\newtheorem{theorem}{Theorem}[section]
\newtheorem{lemma}[theorem]{Lemma}
\newtheorem{proposition}[theorem]{Proposition}
\theoremstyle{definition}
\theoremstyle{remark}
\newtheorem{remark}[theorem]{Remark}
\numberwithin{equation}{section}
\newcommand{\E}{\mathop {\rm E}}
\begin{document}

\title[Optimal Execution with Uncertain Market Impact]{Theoretical and Numerical Analysis 
of an Optimal Execution Problem with Uncertain Market Impact}

%%% Author 1
\author{Kensuke Ishitani}
\address{Kensuke Ishitani: Department of Mathematics, 
Faculty of Science and Technology, Meijo University, 
Tempaku, Nagoya 468-8502, Japan}
\email{kishitani@meijo-u.ac.jp}
%%%\urladdr{http://www.math.univ.edu/$\sim$johndoe \bf(optional)}

%%% Author 2
\author[Takashi Kato]{Takashi Kato*}
\address{Takashi Kato: Division of Mathematical Science for Social Systems, 
Graduate School of Engineering Science, Osaka University, 
1-3 Machikaneyama-cho, Toyonaka, Osaka 560-8531, Japan}
\email{kato@sigmath.es.osaka-u.ac.jp}
\thanks{* This work was supported by a grant-in-aid from the Zengin Foundation for Studies on Economics and Finance. }
%%%TK is also supported by JSPS KAKENHI Grant Number 15K17091.}

\subjclass[2010] {Primary 91G80; Secondary 93E20, 49L20.}

\keywords{Optimal execution, market impact, liquidity uncertainty, L\'evy process}

\begin{abstract}
This paper is a continuation of \cite {Ishitani-Kato_COSA1}, 
in which we derived a continuous-time value function corresponding to 
an optimal execution problem with uncertain market impact 
as the limit of a discrete-time value function. 
Here, we investigate some properties of the derived value function. 
In particular, we show that the function is continuous and has the semigroup property, 
which is strongly related to the Hamilton--Jacobi--Bellman quasi-variational inequality. 
Moreover, we show that noise in market impact causes risk-neutral assessment to underestimate the impact cost. 
We also study typical examples under a log-linear/quadratic market impact function with Gamma-distributed noise.  
\end{abstract}

\maketitle

\section{Introduction and the Model}\label{sec_intro}

In \cite {Ishitani-Kato_COSA1}, 
we derive a continuous-time value function corresponding to 
an optimal execution problem with uncertain market impact (MI) 
as a limit of a discrete-time value function. 
In this paper, we study some mathematical properties of the value function, 
and give an interpretation from the point of view of mathematical finance. 

First, we recall the continuous-time value function 
derived in \cite {Ishitani-Kato_COSA1}. 
Denote by $\mathcal {C}$ the set of non-decreasing, non-negative, and continuous functions $u$ on 
$D := \Bbb {R}\times [0, \Phi_0] \times [0,\infty )$, with $\Phi _0 > 0$ fixed, such that 
\begin{eqnarray}\label{growth_C}
u(w,\varphi ,s)\leq C_u(1+|w|^{m_u}+s^{m_u}), \ \ (w,\varphi ,s)\in D
\end{eqnarray}
for some constants $C_u, m_u>0$. 
For $t\in [0, 1]$,\ $(w,\varphi ,s)\in D$ and $u\in \mathcal {C}$, define 
\begin{eqnarray}\label{def_conti_v}
V_t(w,\varphi ,s ; u) = \sup _{(\zeta _r)_{r}\in \mathcal {A}_t(\varphi )}
\E [u(W_t,\varphi _t, S_t)] 
\end{eqnarray}
subject to 
\begin{align}\nonumber 
dW_r &= \zeta _rS_rdr, \\\nonumber 
d\varphi _r &= -\zeta _rdr, \\
dX_r &= \sigma (X_r)dB_r+b(X_r)dr-g(\zeta _r)dL_r, 
\label{SDE_X_g}\\\nonumber 
S_r &= \exp (X_r) 
\end{align}
and $(W_0,\varphi _0,S_0) = (w,\varphi ,s)$, 
where $(B_r)_{0\leq r\leq 1}$ is a standard one-dimensional Brownian motion 
defined on a complete probability space $(\Omega , \mathcal {F}, P)$ 
and $(L_r)_{0\leq r\leq 1}$ is a one-dimensional non-decreasing L\'evy process 
(subordinator) defined on the same probability space. 
(Note that $V_0(w,\varphi ,s ; u) = u(w,\varphi ,s)$.)
Assume that $(B_r)_r$ and $(L_r)_r$ are independent. 
Further assume that $\sigma , b : \Bbb {R}\longrightarrow \Bbb {R}$ are Lipschitz continuous bounded functions satisfying
\begin{eqnarray}\label{Bdd_Lipschitz_Constant}
|\sigma (x) - \sigma (y)| + |b(x) - b(y)| \leq K|x - y|, \ \ 
|\sigma (x)| + |b(x)| \leq K, \ \ x, y\in \Bbb {R} 
\end{eqnarray}
for some $K > 0$, 
and $g : [0, \infty )\longrightarrow [0, \infty )$ is a function defined by 
\begin{align*}
g(\zeta)=\int_0^{\zeta} h(\zeta ')d\zeta ' ,
\end{align*}
where $h:[0, \infty) \to [0, \infty)$ is a non-decreasing continuous function. 
$\mathcal {A}_t(\varphi )$ is the set of 
$(\mathcal {F}_r)_{0\leq r\leq t}$-adapted and caglad processes 
(i.e., those that are left-continuous with finite right-limit for arbitrary values of $r$) 
$\zeta  = (\zeta _r)_{0\leq r\leq t}$ 
such that 
$\zeta _r\geq 0$ for each $r\in [0,t]$,\ 
$\int ^t_0\zeta _r dr\leq \varphi $ almost surely, and 
\begin{eqnarray}\label{def_sup}
||\zeta ||_\infty  := \sup _{(r,\omega )\in [0,t]\times \Omega }\zeta _r(\omega ) < \infty , 
\end{eqnarray}
where $\mathcal {F}_r=\sigma \{ B_v, L_v;v\leq r\}\vee \{\mbox{Null sets}\} $. 
Here, the supremum in (\ref {def_sup}) is taken over all values in $[0, t]\times \Omega $. 
As noted in \cite {Ishitani-Kato_COSA1}, 
we may use the essential supremum in (\ref {def_sup}) in place of the supremum. 

We assume that the L\'evy measure $\nu $ of $(L_r)_r$ satisfies 
\begin{eqnarray}\label{assumption_C}
||\nu ||_1 + ||\nu ||_2 < \infty , 
\end{eqnarray}
where $||\nu ||_p = \left( \int _{(0, \infty )}z^p\nu (dz)\right)^{1/p} $. 
Note that the L\'evy decomposition of $(L_r)_r$ is given by
\begin{eqnarray}
L_r = \gamma r + \int ^r_0\int _{(0, \infty )}zN(dv, dz), 
\end{eqnarray}
where $\gamma \geq 0$ and $N(\cdot, \cdot)$ is a Poisson random measure 
(see, for example, \cite {Papapantoleon, Sato}). 

Here, we introduce the financial interpretation of these notations. 
We consider a simple market model 
in which only two financial assets are traded: cash and a security. 
Assume that a single trader is to sell (liquidate) 
the owned shares of the security by time $t$. 
Also assume that the price of the cash is always $1$ 
(in other words, the risk-free rate is $0$) and 
that the security price fluctuates due to market noise and in response to the trader's sales. 
The function $u$ in $\mathcal {C}$ is regarded as the trader's utility function. 
With this, $V_t(w, \varphi , s; u)$ is the supremum of the expected utility of the trader 
with initial cash amount $w$, initial shares $\varphi \in [0, \Phi_0]$, 
and initial security price $s$. 
Here, $\Phi _0 > 0$ denotes an upper bound of $\varphi $ and 
can be arbitrarily chosen; 
$(\zeta _r)_{0\leq r\leq t}$ denotes the trader's execution strategy; and 
$\zeta _r$ denotes the execution speed at time $r$. 
The trader chooses an admissible execution strategy from $\mathcal {A}_t(\varphi )$ 
to optimize the expected utility of the triplet $(W_t, \varphi _t, S_t)$, where 
$S_r$ describes the security price at time $r$ and $X_r$ is its log-price; 
$W_r$ denotes the cash amount at time $r$; and $\varphi_r$ denotes the shares of the security at time $r$. 
The fluctuation of the triplet $(W_r, \varphi _r, S_r)_{0\leq r\leq t}$ 
is characterized by the differential equations in (\ref {SDE_X_g}). 
$(B_r)_r$ represents the component of the market noise reflected in fluctuation of the security price. The term 
\begin{eqnarray}\label{MI_decomp}
g(\zeta _r)dL_r = \gamma g(\zeta _r)dr + g(\zeta _r)\int _{(0, \infty )}zN(dr, dz) 
\end{eqnarray}
describes the (infinitesimal) MI of the trader's selling with speed $\zeta _r$. 
$\gamma $ (resp., $g$) denotes the magnitude (resp., shape) of the MI. 
Because $g$ is non-decreasing and convex, the MI becomes huge 
when $\zeta _r$ is large. The last term in the right-hand side of (\ref {MI_decomp}) 
indicates the effect of noise in the MI, which is mathematically described by the jump of $(L_r)_r$. 

In this paper, we study some properties of the continuous-time value function $V_t(w,\varphi ,s ; u)$. 
We find that the value function is continuous in $(w, \varphi , s)\in D$ and $t > 0$. 
In addition, right-continuity at $t = 0$ depends on the state of 
$h(\infty ) := \lim _{\zeta \rightarrow \infty }h(\zeta )$. 
In particular, noise in the MI does not affect the continuity of the value function. 
We also show that the Bellman principle (the semi-group property) holds 
and perform a comparison with the result in the case of a deterministic MI, which was studied in \cite {Kato}, 
and show that noise in the MI causes risk-neutral assessment to underestimate the MI cost. 
This means that a trader who attempts to minimize the expected liquidation cost 
is not sensitive enough to uncertainty in the MI. 
Last, we present generalizations of the examples from \cite{Kato} 
and investigate the effects of noise in the MI on the optimal strategy of a trader, by numerical experiments. 
We consider a risk-neutral trader's execution problem 
with a log-linear/quadratic MI function with Gamma-distributed noise. 

The rest of this paper is organized as follows. 
In Section \ref {section_Results}, we present our results on the properties of the value function. 
In Section \ref {sec_SO}, we consider the case where the trader must sell all shares of the security, 
which is referred to as the ``sell-off condition.'' 
We also study the optimization problem under the sell-off condition and show that 
the results in \cite [Sect. 4]{Kato} also hold in our model. 
Section \ref {section_comparison} compares deterministic MIs with random (stochastic) MIs in a risk-neutral framework. 
In Section \ref {section_examples}, we present some examples based on the proposed model. 
We conclude this paper in Section \ref {section_conclusion}. 
All proofs are in Section \ref {sec_proof}.

\section{Properties of Value Functions}\label{section_Results}

Regarding the continuity of the continuous-time value function, we have the following theorem: 

\begin{theorem}\label{conti_random} \ Let $u\in \mathcal {C}$.\\
$\mathrm {(i)}$ \ If $h(\infty )=\infty $, then
$V_t(w,\varphi ,s ; u)$ is continuous in $(t,w,\varphi ,s)\in [0,1]\times D$. \\
$\mathrm {(ii)}$ If $h(\infty )<\infty $, then 
$V_t(w,\varphi ,s ; u)$ is continuous in $(t,w,\varphi ,s)\in (0,1]\times D$ and 
$V_t(w,\varphi ,s ; u)$ converges to $Ju(w,\varphi ,s)$ 
uniformly on any compact subset of $D$ as $t\downarrow 0$, where $Ju(w,\varphi ,s)$ is given as 
\begin{eqnarray*}
\left\{
                   \begin{array}{l}
                    	\sup _{\psi \in [0,\varphi ]}
u\Big (w+\frac{1-e^{-\gamma h(\infty )\psi }}{\gamma h(\infty )}s,\varphi -\psi , s e^{-\gamma h(\infty )\psi} \Big ) 
\hspace{5mm} (\gamma h(\infty )>0), \\
                    	\sup _{\psi \in [0,\varphi ]}
u(w+\psi s,\varphi -\psi ,s) \hspace{35.0mm} (\gamma h(\infty )=0). 
                   \end{array}
                   \right. 
\end{eqnarray*}
\end{theorem}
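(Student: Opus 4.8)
The plan is to prove continuity of $V_t(w,\varphi,s;u)$ through a sequence of a priori estimates on the controlled state processes, combined with a careful treatment of the boundary behavior at $t=0$. First I would establish the basic moment and stability estimates: for a fixed admissible strategy $\zeta\in\mathcal A_t(\varphi)$, the solution $(W_r,\varphi_r,S_r)$ of (\ref{SDE_X_g}) depends continuously (in $L^p$) on the initial data $(w,\varphi,s)$, uniformly over strategies with a common bound on $\|\zeta\|_\infty$, and satisfies polynomial moment bounds $\E[|W_r|^p + S_r^p]\le C(1+|w|^p+s^p)$ that match the growth condition (\ref{growth_C}); here one uses that $\sigma,b$ are bounded Lipschitz, that $L$ has finite first and second moments by (\ref{assumption_C}), and that $X_r\le X_0 + \sigma B_r + br$ since the impact term $-g(\zeta_r)dL_r$ only pushes the log-price down. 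From this, $u(W_t,\varphi_t,S_t)$ is uniformly integrable over bounded sets of initial data, and taking suprema over $\zeta$ gives continuity of $(w,\varphi,s)\mapsto V_t(w,\varphi,s;u)$ for each fixed $t>0$, together with the polynomial growth bound that keeps us inside a workable class.

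Next I would handle joint continuity in $(t,w,\varphi,s)$ for $t>0$. The natural tool is the semigroup (Bellman) property, which the excerpt says is established in this paper: $V_{t+h}(\cdot;u) = V_t(\cdot; V_h(\cdot;u))$, or an infinitesimal version comparing a strategy on $[0,t+h]$ with its restriction/extension. To compare $V_{t+h}$ with $V_t$ I would take a near-optimal strategy for one horizon, transplant it to the other (padding with zero execution, or truncating and rescaling time on a short interval), and estimate the resulting change in $\E[u(W,\varphi,S)]$ using the $L^p$-stability estimates above and the fact that running the state for an extra short time $h$ with bounded controls moves $(W,\varphi,S)$ by $O(h)$ in $L^p$. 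This yields a modulus of continuity in $t$ that is uniform for $(w,\varphi,s)$ in compacts, and combined with the spatial continuity already obtained gives joint continuity on $(0,1]\times D$ — and this argument is insensitive to whether the impact has noise, since only moments of $L$ enter.

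The main obstacle is the behavior as $t\downarrow 0$, which is exactly where the dichotomy $h(\infty)=\infty$ versus $h(\infty)<\infty$ appears. When $h(\infty)<\infty$, a trader can, in a vanishingly short time, execute a nonzero quantity $\psi$ by using a huge execution speed $\zeta_r\sim \psi/t$; in the limit this produces an \emph{instantaneous block sale}. I would analyze this by computing, along a sequence $t\downarrow 0$ and strategies $\zeta^{(t)}$, the limit of the triplet: with $\varphi_t = \varphi-\psi$, the log-price drops by essentially $\int g(\zeta_r)\gamma\,dr \to \gamma\int_0^\psi h(\infty)\,d\varphi' = \gamma h(\infty)\psi$ in the high-intensity regime (the jump part $\int g(\zeta_r)\int z N(dr,dz)$ also contributes, but one checks its cumulative effect converges to the same deterministic drift-type limit because over a short interval the compensated part vanishes in $L^2$ while the compensator gives $\gamma$-like scaling — this is the delicate computation), and the cash accrues $\int \zeta_r S_r\,dr \to \int_0^\psi s e^{-\gamma h(\infty)\varphi'}\,d\varphi' = \frac{1-e^{-\gamma h(\infty)\psi}}{\gamma h(\infty)}s$. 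So $\limsup_{t\downarrow0} V_t \le Ju$ follows from an upper bound on what any strategy can achieve in time $t$, and $\liminf_{t\downarrow0} V_t\ge Ju$ by exhibiting the explicit block-sale approximating strategies; uniformity on compacts comes from the uniformity of the estimates in the first paragraph. When $h(\infty)=\infty$, the same upper-bound computation shows that executing a nonzero $\psi$ in time $t$ forces the impact cost (the log-price drop) to blow up, so no instantaneous sale is profitable and $V_t \to V_0 = u$, giving right-continuity at $t=0$ and hence continuity on all of $[0,1]\times D$. I expect the technical heart of the proof to be the precise $t\downarrow 0$ asymptotics of the jump integral $\int_0^t g(\zeta^{(t)}_r)\int_{(0,\infty)} z\,N(dr,dz)$ and showing these asymptotics are uniform enough to pass to the limit inside the supremum over strategies.
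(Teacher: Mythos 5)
Your outline gets the statement of the dichotomy right, but the mechanism you propose for the ``technical heart'' --- the small-time behaviour of the jump integral --- is incorrect, and it matters. You argue that $\int_0^t g(\zeta_r)\int z\,N(dr,dz)$ can be handled by splitting off the compensator: ``the compensated part vanishes in $L^2$ while the compensator gives $\gamma$-like scaling.'' Neither half is true. With $\zeta_r=\psi/t$ the compensator contributes $t\,g(\psi/t)\int z\,\nu(dz)\to \psi h(\infty)(\tilde\gamma-\gamma)$, which does not vanish (here $\tilde\gamma=\E[L_1]=\gamma+\int z\,\nu(dz)$); if that computation were the right one, the limit function $Ju$ would involve $\tilde\gamma$ rather than $\gamma$, contradicting the theorem. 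Also the compensated martingale does not vanish in $L^2$: its variance is $g(\psi/t)^2 t\,\|\nu\|_2^2\sim \psi h(\infty)\,g(\psi/t)\,\|\nu\|_2^2\to\infty$. The correct mechanism, and the one the paper uses, is pathwise: by the small-time law for subordinators (Sato), $L_t/t\to\gamma$ a.s.\ as $t\downarrow 0$, and since $t\,g(\psi/t)=\int_0^\psi h(\zeta'/t)\,d\zeta'\le \psi h(\infty)$ is bounded, one gets $g(\psi/t)L_t\to\gamma h(\infty)\psi$ a.s.; dominated convergence (the relevant exponentials lie in $[0,1]$) and a continuity lemma for $u$ then transfer this to the expected utilities, uniformly on compacts. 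This is precisely why $\gamma$, and not $\E[L_1]$, appears in $Ju$ --- the jump part is a.s.\ $o(t)$ even though it is not $o(t)$ in expectation --- and it is the conceptual point your $L^2$/compensator argument misses. For the upper bound $\limsup_{t\downarrow0}(V_t-Ju)\le 0$ you also need a pathwise convexity step (Jensen: $\int_0^r g(\zeta_v)\,dL_v\ge \gamma r\,g\bigl(\tfrac1r\int_0^r\zeta_v\,dv\bigr)$), which your ``upper bound on what any strategy can achieve'' glosses over; and in the case $h(\infty)=\infty$ the uniform-in-strategy estimate of the discounted proceeds (the analogue of the paper's Lemma 6.6, proved with a truncation and a Jensen argument under the measure $dL_v/(\tilde\gamma r)$) is the real content, not a one-line blow-up remark.

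Your second paragraph (continuity in $t$ on $(0,1]$) also has a gap. The transplant-and-estimate argument with an ``$O(h)$ modulus for bounded controls'' is not available: the supremum defining $V$ is over strategies with no uniform bound on $\|\zeta\|_\infty$, and a near-optimal strategy for horizon $t+h$ may execute a non-negligible block at enormous speed during the final window of length $h$, so truncating or time-rescaling it does not produce an $O(h)$ error. Indeed, if such a uniform modulus existed it would also give right-continuity at $t=0$, which is false when $h(\infty)<\infty$. The paper instead reduces right-continuity at $t>0$, via the semigroup property $V_{t+h}=Q_hV_t$ and the already-established limit $Q_hv\to Jv$, to the inequality $JV_t\le V_t$, and proves that inequality by an explicit construction: append to an arbitrary strategy the burst $(\psi/\delta)1_{[0,\gamma\delta]}(L_{r-})$, and control the error using $L_r\ge\gamma r$, the a.s.\ facts $L_\delta/\delta\to\gamma$ and $\tilde u(\delta)/\delta\to1$, and stability lemmas for the SDE. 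So this step, too, relies on the pathwise structure of the subordinator rather than ``only moments of $L$,'' contrary to your remark; as written, your proposal does not contain the idea needed to close it.
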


\begin{remark} \ 
\begin{itemize}
 \item [ (i) ] 
The assertions of Theorem \ref {conti_random} are also quite similar to the result in \cite{Kato}, which showed that 
continuities in $w$, $\varphi $, and $s$ of the value function are always guaranteed, but 
continuity in $t$ at the origin depends on the state of the function $h$ at infinity. 
When $h(\infty ) = \infty $, MI for large sales 
is sufficiently strong ($g(\zeta )$ diverges rapidly with $\zeta \rightarrow \infty $) 
to prevent the trader from performing instant liquidation: 
an optimal policy is ``no trading'' in infinitesimal time, 
and thus $V_t$ converges to $u$ as $t\downarrow 0$. 
When $h(\infty ) < \infty $, the value function is not always continuous at $t = 0$ and has the right limit $Ju(w, \varphi ,s)$. 
In this case, MI for large sales is not particularly strong ($g(\zeta )$ still diverges, although with low divergence speed) 
and there is room for liquidation within infinitesimal time. 
The function $Ju(w,\varphi ,s)$ corresponds to the utility of liquidation by the trader, 
who sells part of the shares of a security $\psi $ by dividing it infinitely within an infinitely short time 
(sufficiently short that the fluctuation in the price of the security can be ignored) 
and obtains an amount $\varphi -\psi $; that is, 
\begin{eqnarray}\label{almost_block}
\zeta ^\delta _r = \frac{\psi }{\delta }1_{[0, \delta ]}(r), \ \ r\in [0, t] \ \ (\delta \downarrow 0). 
\end{eqnarray}

Note that, similarly to the argument in Remark 2.6 in \cite {Ishitani-Kato_COSA1}, 
we obtain significant improvement in the strength of the proofs over that given in \cite {Kato}, 
and this is one of the main mathematical contributions of this paper. 
See Section \ref {sec_proof} for details. 
 \item [ (ii) ]
Note that the jump part 
\begin{eqnarray}\label{jump_term}
g(\zeta _r)\int _{(0, \infty )}z N(dr, dz) 
\end{eqnarray}
does not change the result. 
Also note that if $\gamma = 0$ and $h(\infty ) < \infty $, 
then the effect of MI disappears in $Ju(w, \varphi , s)$. 
This situation may occur even if $\E [c^n_k] \geq \varepsilon _0$ 
(or $\E [L_1]\geq \varepsilon _0$) for some $\varepsilon _0 > 0$. 
\end{itemize}
\end{remark}

Here, we present the Bellman principle (dynamic programming principle or ``semi-group'' property). 
Let us define $Q_t:\mathcal {C}\longrightarrow \mathcal {C}$ by 
$Q_t u(w,\varphi,s)=V_t(w,\varphi,s;u)$. 
Then we can easily show that $Q_t$ is well defined as a nonlinear operator. 
The same proof as that for Theorem 3.2 in \cite {Kato} gives the following proposition: 

\begin{proposition}\label{th_semi} \ 
For each $r, t\in [0, 1]$ with $t + r\leq 1$, 
$(w, \varphi , s)\in D$ and $u\in \mathcal {C}$, it holds that 
$Q_{t+r}u(w,\varphi ,s)=Q_t Q_r u(w,\varphi ,s)$. 
\end{proposition}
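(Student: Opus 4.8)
The plan is to prove the semigroup identity $Q_{t+r}u = Q_t Q_r u$ via the dynamic programming principle, following the standard two-inequality strategy: first show $Q_{t+r}u \le Q_t Q_r u$, and then the reverse inequality $Q_{t+r}u \ge Q_t Q_r u$. Throughout we use that the coefficients $\sigma, b$ and the function $g$ generate a unique strong solution of \eqref{SDE_X_g} for any admissible control, and that the filtration $(\mathcal{F}_v)_v$ is generated by the independent pair $(B, L)$, so the process has the strong Markov / flow property: solving on $[0,t+r]$ from $(w,\varphi,s)$ is the same as solving on $[0,t]$ to reach some $\mathcal{F}_t$-measurable state $(W_t,\varphi_t,S_t)$ and then solving a fresh copy of the SDE on $[0,r]$ driven by the shifted, independent increments of $(B,L)$, started from that state.

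For the inequality $Q_{t+r}u \le Q_t Q_r u$: fix any $\zeta \in \mathcal{A}_{t+r}(\varphi)$. Restricting $\zeta$ to $[0,t]$ gives an element of $\mathcal{A}_t(\varphi)$ producing a state $(W_t,\varphi_t,S_t) \in D$ with $\varphi_t = \varphi - \int_0^t \zeta_v\,dv \ge 0$. Conditioning on $\mathcal{F}_t$ and using that the time-shifted driving noise $(B_{t+\cdot}-B_t, L_{t+\cdot}-L_t)$ is independent of $\mathcal{F}_t$ with the same law, the continuation strategy $(\zeta_{t+v})_{0\le v\le r}$ is, conditionally, an admissible strategy in $\mathcal{A}_r(\varphi_t)$ — here one must check the caglad, adaptedness, boundedness ($\|\zeta\|_\infty < \infty$ on the whole interval bounds the restriction), and fuel constraint $\int_0^r \zeta_{t+v}\,dv \le \varphi_t$ a.s. Hence $\E[u(W_{t+r},\varphi_{t+r},S_{t+r}) \mid \mathcal{F}_t] \le V_r(W_t,\varphi_t,S_t;u) = Q_r u(W_t,\varphi_t,S_t)$. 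Taking expectations and then the supremum over $\zeta$ on $[0,t]$ gives $\E[u(W_{t+r},\varphi_{t+r},S_{t+r})] \le \sup_{\zeta|_{[0,t]}} \E[Q_r u(W_t,\varphi_t,S_t)] \le Q_t Q_r u(w,\varphi,s)$, using that $Q_r u \in \mathcal{C}$ (so it is an admissible terminal utility) and the growth bound \eqref{growth_C} to control integrability. Taking the supremum over all $\zeta \in \mathcal{A}_{t+r}(\varphi)$ yields the claim.

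For the reverse inequality $Q_{t+r}u \ge Q_t Q_r u$: fix $\varepsilon > 0$ and pick a near-optimal $\eta \in \mathcal{A}_t(\varphi)$ for $Q_t(Q_r u)(w,\varphi,s)$, reaching state $(W_t,\varphi_t,S_t)$ with $\E[Q_r u(W_t,\varphi_t,S_t)] \ge Q_t Q_r u(w,\varphi,s) - \varepsilon$. The main technical work is a measurable selection: for (almost) every value of the reached state one chooses an $\varepsilon$-optimal continuation control for $V_r(\cdot;u)$ and assembles these into a single process $\tilde{\zeta} \in \mathcal{A}_r(\varphi_t)$ measurable with respect to the shifted noise, so that $\E[u(W_{t+r},\varphi_{t+r},S_{t+r}) \mid \mathcal{F}_t] \ge Q_r u(W_t,\varphi_t,S_t) - \varepsilon$ a.s. Concatenating $\eta$ on $[0,t]$ with $\tilde{\zeta}$ (shifted to $[t,t+r]$) gives an admissible $\zeta \in \mathcal{A}_{t+r}(\varphi)$; caglad-ness at the join point $r=t$ and the uniform bound $\|\zeta\|_\infty<\infty$ need a brief check (one may use a left-continuous version and that both pieces are uniformly bounded). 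Then $Q_{t+r}u(w,\varphi,s) \ge \E[u(W_{t+r},\varphi_{t+r},S_{t+r})] \ge \E[Q_r u(W_t,\varphi_t,S_t)] - \varepsilon \ge Q_t Q_r u(w,\varphi,s) - 2\varepsilon$, and letting $\varepsilon \downarrow 0$ finishes.

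The main obstacle is the measurable-selection / conditioning argument in the second inequality: making precise that near-optimal continuation strategies can be chosen in an $\mathcal{F}_t$-measurable way and pasted into a genuinely admissible caglad adapted process, and justifying the interchange of conditional expectation with the supremum. Since the statement explicitly says ``the same proof as that for Theorem 3.2 in \cite{Kato} gives the following proposition,'' the cleanest route is to cite that the control set $\mathcal{A}$, the state dynamics, and the independence structure here are exactly parallel to \cite{Kato} — the presence of the jump term $g(\zeta_r)\int_{(0,\infty)} z\,N(dr,dz)$ in \eqref{SDE_X_g} does not alter the argument, because $(L_r)_r$ still has independent stationary increments, so the flow property and the noise-shift invariance used above hold verbatim. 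One only needs the integrability from \eqref{assumption_C} and the growth condition \eqref{growth_C} to keep all expectations finite; these are already in force.
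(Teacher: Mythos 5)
Your outline is the classical direct proof of the dynamic programming principle, and its first half ($Q_{t+r}u\le Q_tQ_ru$, by restricting a control to $[0,t]$ and conditioning on $\mathcal{F}_t$, using the independent, stationary increments of $(B,L)$ and the fact that $Q_ru$ lies in $\mathcal{C}$) is sound as sketched. The genuine gap is in the reverse inequality: you name the measurable-selection/pasting step --- choosing $\varepsilon$-optimal continuation controls measurably in the reached state $(W_t,\varphi_t,S_t)$ and assembling them into a single caglad, adapted, uniformly bounded element of $\mathcal{A}_{t+r}(\varphi)$ --- as ``the main technical work,'' but you do not carry it out, and this is exactly the step that makes direct proofs of the DPP delicate (it is the reason weak formulations such as \cite{Bouchard-Touzi} exist). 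Appealing to the parallelism with \cite{Kato} does not fill this hole, because the proof of Theorem 3.2 in \cite{Kato}, which the present paper invokes verbatim, does not perform any continuous-time pasting or selection at all.

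The route actually taken there (and hence here) goes through the discrete-time approximation: the discrete-time value functions $V^n_k$ are built from a one-step optimization, so the discrete analogue $V^n_{k+l}(\cdot\,;u)=V^n_k(\cdot\,;V^n_l(\cdot\,;u))$ of the semigroup identity is essentially immediate from that recursive structure, and the continuous-time identity $Q_{t+r}u=Q_tQ_ru$ is then obtained by letting $n\to\infty$, using the locally uniform convergence $V^n_{[nt]}\to V_t$ (supplied in this setting by \cite{Ishitani-Kato_COSA1}) together with a stability/continuity argument in the terminal function and the continuity of $Q_ru$ --- the same facts that make $Q_r$ a well-defined operator from $\mathcal{C}$ to $\mathcal{C}$. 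This sidesteps the selection problem entirely. So to make your argument complete you should either carry out the selection explicitly (for instance, cover a compact region of reached states by small cells, use the continuity of $V_r(\cdot\,;u)$ from Theorem \ref{conti_random} and the continuity of the state flow in its initial condition, and paste cell-wise choices of $\varepsilon$-optimal controls shifted to $[t,t+r]$), or replace the second half of your proof by the limit-from-discrete-time argument that the paper actually relies on.
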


\begin{remark}
By using the above proposition, 
we can formally derive the Hamilton--Jacobi--Bellman (HJB) equation 
corresponding to our value function on 
the generalized domain of the utility function 
$\hat{D} = \Bbb {R}\times [0, \infty ) \times [0, \infty )$: 
\begin{eqnarray}\label{HJB_V}
\frac{\partial }{\partial t}V_t(w,\varphi ,s ; u) - \sup _{\zeta \geq 0}
\mathscr {L}^\zeta V_t(w,\varphi ,s ; u) = 0 
\end{eqnarray}
with the same boundary conditions as (3.5) in \cite {Kato}, where 
\begin{align*}
\mathscr {L}^\zeta v(t,w,\varphi ,s)
&=
\overline{\mathscr {L}}^{\zeta }v(t,w,\varphi , s) - \tilde{\mathscr {L}}^{\zeta }v(t,w,\varphi , s), \\
\overline{\mathscr {L}}^{\zeta }v(t,w,\varphi , s)
&= 
\frac{1}{2}\hat{\sigma }(s)^2\frac{\partial ^2}{\partial s^2}v(t,w,\varphi ,s) + 
\hat{b}(s)\frac{\partial }{\partial s}v(t,w,\varphi ,s)\\
& + 
\zeta \Big (s\frac{\partial }{\partial w}v(t,w,\varphi ,s) - 
\frac{\partial }{\partial \varphi }v(t,w,\varphi ,s)\Big ) - 
\gamma g(\zeta )s\frac{\partial }{\partial s}v(t,w,\varphi ,s), \\
\tilde{\mathscr {L}}^{\zeta }v(t,w,\varphi , s) 
&= \int _{(0, \infty )}
\left\{ v(w, \varphi , s) - v(w, \varphi , s e^{-g(\zeta )z})\right\} \nu (dz). 
\end{align*} 
(\ref {HJB_V}) is a partial integro-differential equation (PIDE). 
When $\tilde {\mathscr {L}}^{\zeta }\equiv 0$, that is, when there is no jump, 
characterization of our value function 
as the unique viscosity solution of (\ref {HJB_V}) is 
studied by \cite {Kato} under some additional technical conditions. 
Showing these properties in the general case is a more challenging task. 
Here we introduce some related literature in place of 
presenting a detailed argument on the solvability of (\ref {HJB_V}): 
in \cite {Holden}, the existence 
(i.e., characterization of a value function as a viscosity solution) 
and uniqueness of the solution of the HJB equation 
corresponding to the optimal investment/consumption problem with durability and local substitution 
in the L\'evy version of the Black--Scholes-type market model is studied. 
Reference \cite {Seydel} shows existence and uniqueness of a solution to the 
Hamilton--Jacobi--Bellman quasi-variational inequalities (HJBQVIs) 
appearing in combined impulse and (regular) stochastic control problems with jump diffusions 
(existence in this case is also introduced in \cite {Oksendal-Sulem} without detailed technical arguments). 
In \cite {Bouchard-Touzi}, by means of the weak dynamic programming principle, 
the characterization of a value function of stochastic control problems under L\'evy processes 
with finite L\'evy measure, which arises as a discontinuous viscosity solution of the corresponding HJB equation, is studied.  
The strong comparison principle (which is closely related to the uniqueness of viscosity solutions) 
for second-order non-linear PIDEs on a bounded domain is studied in \cite {Ciomaga}. 
\end{remark}

\section{Sell-Off Condition}\label{sec_SO}
In this section, we consider the optimal execution problem under the 
``sell-off condition" introduced in \cite {Kato}. 
A trader has a certain quantity of shares of a security at the initial time, and 
must liquidate all of them by the time horizon. 
Then, the space of admissible strategies is reduced to 
\begin{eqnarray*}
\mathcal {A}^{\mathrm {SO}}_t(\varphi ) = 
\left \{ (\zeta _r)_r\in \mathcal {A}_t(\varphi )\ ; \ 
\int ^t_0\zeta _r dr = \varphi \right \} . 
\end{eqnarray*}
We define a value function with the sell-off condition by 
\begin{eqnarray*} 
V^{\mathrm {SO}}_t(w,\varphi ,s ; U) &=& \sup _{(\zeta _r)_r\in \mathcal {A}^{\mathrm {SO}}_t(\varphi )}\E [U(W_t)] 
\end{eqnarray*}
for a continuous, non-decreasing and polynomial growth function $U : \Bbb {R}\longrightarrow \Bbb {R}$. 

The following theorem is analogous to Theorem 4.1 in \cite {Kato} 
(we omit the proof because it is nearly identical): 

\begin{theorem} \ \label{thesame}$V^{\mathrm {SO}}_t(w,\varphi ,s ; U) = V_t(w,\varphi ,s ; u)$, 
where $u(w, \varphi , s) = U(w)$. 
\end{theorem}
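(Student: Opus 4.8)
The goal is to show that the value function with the sell-off constraint coincides with the unconstrained value function evaluated at a utility depending only on cash, namely $u(w,\varphi,s) = U(w)$. The inequality $V^{\mathrm{SO}}_t(w,\varphi,s;U) \leq V_t(w,\varphi,s;u)$ is immediate, since $\mathcal{A}^{\mathrm{SO}}_t(\varphi) \subseteq \mathcal{A}_t(\varphi)$ and, under any strategy, $\E[U(W_t)] = \E[u(W_t,\varphi_t,S_t)]$ because $u$ ignores its second and third arguments; taking suprema over the smaller set can only decrease the value. So the whole content is the reverse inequality $V_t(w,\varphi,s;u) \leq V^{\mathrm{SO}}_t(w,\varphi,s;U)$.

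For the reverse inequality, the plan is to take an arbitrary admissible strategy $(\zeta_r)_r \in \mathcal{A}_t(\varphi)$ (so $\int_0^t \zeta_r\,dr \leq \varphi$ a.s.) and modify it into a sell-off strategy that does at least as well. Write $\varphi_t = \varphi - \int_0^t \zeta_r\,dr \geq 0$ for the leftover shares under $\zeta$. Since $W_t$ is non-decreasing in the execution rate (selling more shares, at non-negative prices, only adds cash: $dW_r = \zeta_r S_r\,dr \geq 0$) and $U$ is non-decreasing, one expects that liquidating the remaining $\varphi_t$ shares in addition can only increase $U(W_t)$. The concrete construction: use the strategy $\zeta$ on $[0,t)$, and then append a block-type liquidation of the residual shares $\varphi_t$ in the final instant, in the spirit of the approximating strategies $\zeta^\delta_r = (\psi/\delta)\mathbf{1}_{[0,\delta]}(r)$ from the Remark following Theorem~\ref{conti_random}. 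Because this extra selling happens over a vanishingly short horizon, one either works with a time-shifted/concatenated strategy on $[0,t]$ directly, or — cleanly — invokes the semigroup property (Proposition~\ref{th_semi}): decompose $Q_t u = Q_{t-\varepsilon} Q_\varepsilon u$ and handle the forced liquidation of the residual within the $Q_\varepsilon$ step, letting $\varepsilon \downarrow 0$.

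The key steps, in order, are: (1) record the trivial inequality and the identity $\E[U(W_t)] = \E[u(W_t,\varphi_t,S_t)]$; (2) fix $\zeta \in \mathcal{A}_t(\varphi)$, identify the residual $\varphi_t$, and construct for each small $\delta > 0$ a modified strategy $\zeta^{(\delta)} \in \mathcal{A}^{\mathrm{SO}}_t(\varphi)$ that agrees with $\zeta$ except on a final window of length $\delta$ where it liquidates $\varphi_t$; (3) show $\E[U(W^{(\delta)}_t)] \geq \E[U(W_t)] - o(1)$ as $\delta \downarrow 0$, using monotonicity of $U$, the fact that additional selling adds non-negative cash, the bounded-speed and integrability conditions (\ref{assumption_C}), (\ref{Bdd_Lipschitz_Constant}) to control the price process over the short window, and dominated convergence together with the polynomial growth bound (\ref{growth_C}) on $U$; (4) take the supremum over $\zeta$ and conclude. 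This is precisely the structure of the proof of Theorem 4.1 in \cite{Kato}, which is why the authors omit it.

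The main obstacle is step~(3): one must verify that forcing liquidation of the residual $\varphi_t$ over a shrinking window does not lose expected utility in the limit. The subtlety is that $\varphi_t$ is itself a random variable (it depends on the realized path of $\zeta$, hence on $B$ and $L$ up to time $t$), so the appended block-liquidation rate $\varphi_t/\delta$ need not be uniformly bounded in $\omega$ — which is required for membership in $\mathcal{A}_t^{\mathrm{SO}}(\varphi)$ by (\ref{def_sup}). The remedy is a truncation argument: first restrict to strategies with $\varphi_t \leq M$ (or approximate $\varphi_t$ from below by a bounded adapted quantity), push the liquidation through for those, and then remove the truncation using the growth condition and the fact that $\varphi \leq \Phi_0$ is already bounded — so in fact $\varphi_t \in [0,\Phi_0]$ automatically and the appended rate $\varphi_t/\delta \leq \Phi_0/\delta$ is bounded for each fixed $\delta$, which resolves the admissibility point and leaves only the routine limit estimate.
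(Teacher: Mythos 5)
Your proposal is correct and follows essentially the same route as the proof the paper omits (the argument of Theorem 4.1 in \cite{Kato}): the inclusion $\mathcal{A}^{\mathrm{SO}}_t(\varphi)\subseteq\mathcal{A}_t(\varphi)$ gives one inequality, and the other follows by replacing the tail of an arbitrary strategy on a window of length $\delta$ with a liquidation of the residual at rate at most $\Phi_0/\delta$, the discarded proceeds being $O(\|\zeta\|_\infty\,\delta)$ in expectation while the appended proceeds are non-negative, so monotonicity of $U$, the moment bound of Lemma \ref{Lemma_Moment_Estimate}, and dominated convergence give the limit. Only a small bookkeeping point: the residual to be liquidated must be the $\mathcal{F}_{t-\delta}$-measurable quantity $\varphi_{t-\delta}$ rather than $\varphi_t$ (adaptedness), but the difference is at most $\|\zeta\|_\infty\delta$ and is absorbed into the same $o(1)$ estimate, as your truncation remark effectively anticipates.
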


By Theorem \ref {thesame}, we see that the sell-off condition does not introduce changes 
in the value of the value function in a continuous-time model. 

Analogously to Theorem 4.2 in \cite {Kato}, 
a similar result to Theorem 3 in \cite {Lions-Lasry} holds when $g(\zeta )$ is linear: 
\begin{theorem} \ \label{th_LL}Assume $g(\zeta ) = \alpha _0\zeta $ for $\alpha _0> 0$. \\
$\mathrm {(i)}$ \ $V^\mathrm {SO}_t(w, \varphi , s ; U) = \overline{V}^\varphi _t
\left( w + \frac{1 - e^{-\gamma \alpha _0 \varphi }}{\gamma \alpha _0}s, e^{-\gamma \alpha _0 \varphi }s ; U\right)$, where 
\begin{eqnarray*}
\overline{V}^\varphi _t(\bar{w}, \bar{s} ; U) &=& \sup _{(\overline{\varphi }_r)_r\in \overline {\mathcal {A}}_t(\varphi )}
\E [U(\overline{W}_t)]\\
&&\hspace{1mm}\mathrm {s.t.}\hspace{3.3mm}d\overline{S}_r = 
e^{-\gamma \alpha _0\overline{\varphi }_r}\hat{b}(\overline{S}_re^{\gamma \alpha _0\overline{\varphi }_r})dr + 
e^{-\gamma \alpha _0 \overline{\varphi }_r}\hat{\sigma }(\overline{S}_re^{\gamma \alpha _0 \overline{\varphi }_r})dB_r\\
&&\hspace{19mm} -\overline{S}_{r-}dG_r, \\
&&\hspace{8mm}d\overline{W}_r = \frac{e^{\gamma \alpha _0\overline{\varphi }_r} - 1}{\gamma \alpha _0}d\overline{S}_r, \\
&&\hspace{11mm}\overline{S}_0 = \bar{s}, \ \ \overline{W}_0 = \bar{w} 
\end{eqnarray*}
and
\begin{eqnarray*}
\overline {\mathcal {A}}_t(\varphi ) &=& \left\{ \left( \varphi - \int ^r_0\zeta _vdv\right) _{0\leq r\leq t}\ ; \ 
(\zeta _r)_{0\leq r\leq t} \in \mathcal {A}^\mathrm {SO}_t(\varphi )\right\} , \\
G_r &=& \int_0^r \int_{(0, \infty)}(1-e^{-\alpha _0\zeta _sz})N(ds, dz). 
\end{eqnarray*}
$\mathrm {(ii)}$ \ If $U$ is concave and $\hat{b} (s)\leq 0$ for $s\geq 0$, then 
\begin{eqnarray}\label{eq_LL}
V^\mathrm {SO}_t(w, \varphi , s ; U) = 
U\left( w + \frac{1 - e^{-\gamma \alpha _0 \varphi }}{\gamma \alpha _0 }s\right) . 
\end{eqnarray}
\end{theorem}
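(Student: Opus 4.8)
The plan is to handle the two parts separately: (i) is an exact pathwise change of variables — a moving-frame transformation that absorbs the linear permanent impact $\gamma\alpha_0\zeta$ into the price and shifts the cash accordingly — while (ii) is obtained by combining a supermartingale/Jensen upper bound with a lower bound read off from the small-time asymptotics already recorded in Theorems \ref{conti_random} and \ref{thesame}. Throughout I use the conventions $(e^{x}-1)/x\to 1$ and $(1-e^{-x})/x\to 1$ as $x\to 0$, so that the $\gamma\alpha_0=0$ case (where $\bar w = w+\varphi s$, $\bar s = s$) is covered by the same formulas.

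For (i), fix $(\zeta_r)_r\in\mathcal A^{\mathrm{SO}}_t(\varphi)$ and let $\overline\varphi_r=\varphi-\int_0^r\zeta_v\,dv$ be the inventory process, so $\overline\varphi\in\overline{\mathcal A}_t(\varphi)$ and $\overline\varphi_t=0$. I would set
\[
\overline S_r:=S_r\,e^{-\gamma\alpha_0\overline\varphi_r},\qquad
\overline W_r:=W_r+\frac{e^{\gamma\alpha_0\overline\varphi_r}-1}{\gamma\alpha_0}\,\overline S_r .
\]
Applying Itô's formula for semimartingales with jumps to $\overline S_r=\exp(X_r-\gamma\alpha_0\overline\varphi_r)$, the drift $-\gamma\alpha_0\zeta_r\,dr$ arising from $-g(\zeta_r)\,dL_r=-\alpha_0\zeta_r(\gamma\,dr+\int z\,N(dr,dz))$ is cancelled exactly by $-\gamma\alpha_0\,d\overline\varphi_r=\gamma\alpha_0\zeta_r\,dr$, so $\overline S$ solves the displayed SDE with $\overline S_0=se^{-\gamma\alpha_0\varphi}=\bar s$; the term $-\overline S_{r-}\,dG_r$ appears because a jump of $N$ of size $z$ multiplies $\overline S$ by $e^{-\alpha_0\zeta_r z}$. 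The coefficients $e^{-\gamma\alpha_0\overline\varphi_r}\hat b(\overline S_re^{\gamma\alpha_0\overline\varphi_r})$ and $e^{-\gamma\alpha_0\overline\varphi_r}\hat\sigma(\overline S_re^{\gamma\alpha_0\overline\varphi_r})$ are written precisely so as to equal $\overline S_r(b+\tfrac12\sigma^2)(\log S_r)$ and $\overline S_r\sigma(\log S_r)$, which is what Itô produces (here $\overline S_re^{\gamma\alpha_0\overline\varphi_r}=S_r$). A second application of the product rule, using $dW_r=\zeta_rS_r\,dr$, $d\overline\varphi_r=-\zeta_r\,dr$ and again $\overline S_re^{\gamma\alpha_0\overline\varphi_r}=S_r$, makes the $\zeta_rS_r\,dr$ terms cancel and gives $d\overline W_r=\frac{e^{\gamma\alpha_0\overline\varphi_r}-1}{\gamma\alpha_0}\,d\overline S_r$ with $\overline W_0=w+\frac{1-e^{-\gamma\alpha_0\varphi}}{\gamma\alpha_0}s=\bar w$. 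Since $\overline\varphi_t=0$, we get $\overline W_t=W_t$, hence $\E[U(W_t)]=\E[U(\overline W_t)]$ for every admissible strategy; as $\overline{\mathcal A}_t(\varphi)$ is by definition the image of $\mathcal A^{\mathrm{SO}}_t(\varphi)$ under $(\zeta_r)_r\mapsto(\overline\varphi_r)_r$, taking suprema yields (i).

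For the upper bound in (ii) I would work in the transformed picture. Put $\pi_r:=\frac{e^{\gamma\alpha_0\overline\varphi_r}-1}{\gamma\alpha_0}$, which is nonnegative and bounded by $\frac{e^{\gamma\alpha_0\Phi_0}-1}{\gamma\alpha_0}$, and note $\overline S_r>0$. Then $\overline W_t=\bar w+\int_0^t\pi_r\,d\overline S_r$, where $d\overline S_r$ has a local-martingale part $e^{-\gamma\alpha_0\overline\varphi_r}\hat\sigma(S_r)\,dB_r$, a drift $e^{-\gamma\alpha_0\overline\varphi_r}\hat b(S_r)\,dr\le 0$ (this is where $\hat b\le 0$ enters), and a decreasing jump part $-\overline S_{r-}\,dG_r$ (as $G$ is non-decreasing). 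Standard moment estimates — the SDE for $X$ has bounded coefficients, $\zeta$ is bounded, and $-\alpha_0\int_0^\cdot\zeta\,dL\le 0$, so $S_r\le se^{K}\exp(\int_0^r\sigma\,dB)$ and $\E[\sup_{r\le 1}S_r^2]<\infty$; together with $\|\nu\|_1<\infty$ this also bounds the compensator of $\int_0^\cdot\pi_r\overline S_{r-}\,dG_r$ — show the $dB$-integral is a true martingale and the $dG$-integral is integrable, hence $\E[\overline W_t]\le\bar w$. Concavity of $U$ (Jensen) and monotonicity of $U$ then give $\E[U(\overline W_t)]\le U(\E[\overline W_t])\le U(\bar w)$, so by (i), $V^{\mathrm{SO}}_t(w,\varphi,s;U)\le U\big(w+\frac{1-e^{-\gamma\alpha_0\varphi}}{\gamma\alpha_0}s\big)$.

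For the matching lower bound I would not build a strategy by hand but invoke what is already available. By Theorem \ref{thesame}, $V^{\mathrm{SO}}_t(w,\varphi,s;U)=V_t(w,\varphi,s;u)$ with $u(w,\varphi,s)=U(w)$ (after, if needed, replacing $U$ by $\max(U(\cdot),U(w))$ plus a constant, which changes nothing because $W_t\ge w$ in the sell-off problem); since $g(\zeta)=\alpha_0\zeta$ forces $h\equiv\alpha_0$, hence $h(\infty)=\alpha_0<\infty$, Theorem \ref{conti_random}(ii) gives $V_t(w,\varphi,s;u)\to Ju(w,\varphi,s)$ as $t\downarrow0$, and since $u$ depends only on $w$ while $\psi\mapsto\frac{1-e^{-\gamma\alpha_0\psi}}{\gamma\alpha_0}$ (resp. $\psi\mapsto\psi$) is increasing and $U$ is non-decreasing, the supremum over $\psi\in[0,\varphi]$ is attained at $\psi=\varphi$, so $Ju(w,\varphi,s)=U\big(w+\frac{1-e^{-\gamma\alpha_0\varphi}}{\gamma\alpha_0}s\big)$. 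Finally $t\mapsto V^{\mathrm{SO}}_t$ is non-decreasing: any $\zeta\in\mathcal A^{\mathrm{SO}}_t(\varphi)$ extends to $\mathcal A^{\mathrm{SO}}_{t'}(\varphi)$ for $t'>t$ by setting $\zeta\equiv0$ on $(t,t']$, which leaves $W_{t'}=W_t$; hence $V^{\mathrm{SO}}_t\ge\lim_{s\downarrow0}V^{\mathrm{SO}}_s=U\big(w+\frac{1-e^{-\gamma\alpha_0\varphi}}{\gamma\alpha_0}s\big)$ for every $t>0$. Combining the two bounds proves (ii). I expect the delicate point to be precisely this lower bound: the clean route above rests on the nontrivial small-time analysis of Theorem \ref{conti_random}(ii), and a self-contained alternative — testing with the near-block liquidation $\zeta^\delta_r=\frac{\varphi}{\delta}1_{[0,\delta]}(r)$, for which (i) gives $\overline W_t=W_\delta=w+\frac{\varphi}{\delta}\int_0^\delta S_r\,dr$, and showing $\frac{\varphi}{\delta}\int_0^\delta S_r\,dr\to\frac{1-e^{-\gamma\alpha_0\varphi}}{\gamma\alpha_0}s$ in $L^1$ with uniform integrability of $U(W_\delta)$ — runs into the genuine technical obstacle that on the vanishing window $[0,\delta]$ the subordinator's jumps are amplified by the exploding rate $\varphi/\delta$; controlling this amplified market-impact noise is exactly where the hypothesis $\|\nu\|_1+\|\nu\|_2<\infty$ and a careful splitting of $\nu$ into small and large jumps would be needed.
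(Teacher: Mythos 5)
Your proof is correct, and parts of it coincide with the paper's: for (i) the paper likewise just applies It\^o's formula to $\overline{S}_r$ and $\overline{W}_r$ (your computation, with the drift $-\gamma\alpha_0\zeta_r\,dr$ cancelled by $-\gamma\alpha_0\,d\overline{\varphi}_r$ and the jump factor $e^{-\alpha_0\zeta_r z}$ producing $-\overline{S}_{r-}\,dG_r$, is exactly what is meant), and for the upper bound in (ii) the paper also uses Jensen's inequality together with the observations that $\hat{b}\le 0$, that the jump compensator term $\int_{(0,\infty)}\frac{e^{\gamma\alpha_0\overline{\varphi}_r}-1}{\gamma\alpha_0}\overline{S}_r(1-e^{-\alpha_0\zeta_r z})\,\nu(dz)$ is non-negative, and that $U$ is non-decreasing. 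Where you genuinely diverge is the lower bound $\overline{V}^\varphi_t(\bar{w},\bar{s})\ge U(\bar{w})$: the paper obtains it ``similarly to Section 7.9 in \cite{Kato},'' i.e.\ by directly testing with a near-block liquidation and estimating the resulting proceeds, which in the present setting requires controlling the subordinator's jumps amplified by the rate $\varphi/\delta$ on the shrinking window --- precisely the obstacle you flag. Your route sidesteps this by invoking Theorem \ref{thesame}, the small-time limit of Theorem \ref{conti_random}(ii) (valid here since $h\equiv\alpha_0$ gives $h(\infty)<\infty$, and the supremum in $Ju$ is attained at $\psi=\varphi$ because $U$ is non-decreasing), and the monotonicity of $t\mapsto V^{\mathrm{SO}}_t$ obtained by extending strategies by zero. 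This is logically sound (Theorem \ref{conti_random} is proved independently of Theorem \ref{th_LL}, so there is no circularity), and it has the merit of reusing Proposition \ref{thm2_pro1}, where the amplified-jump estimate has already been carried out, instead of repeating it; the paper's direct construction is more self-contained relative to \cite{Kato} but leaves that adaptation implicit. Your side remarks --- the normalization $\max(U(\cdot),U(w))$ to place $u$ in $\mathcal{C}$ (harmless since $W_t\ge w$ under the sell-off dynamics, and concavity of the modified $U$ is not needed for the lower bound), and the fact that the martingale part of $\int_0^t\pi_r\,d\overline{S}_r$ is a true martingale by the moment bound on $\sup_r S_r$ --- are exactly the right technical caveats.
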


The proof is in Section \ref {sec_proof_th_LL}. 
Note that the assertion (ii) is the same as Theorem 3 in \cite {Lions-Lasry}, 
and in this case we can also obtain the explicit form of the value function. 
The right side of (\ref {eq_LL}) is equal to $Ju(w,\varphi ,s)$ for $u(w, \varphi , s) = U(w)$ 
and the nearly optimal strategy for $V^{\mathrm {SO}}_t(w,\varphi ,s ; U) = V_t(w,\varphi ,s ; u)$ 
is given by (\ref {almost_block}). 
This implies that when considering a linear MI function, 
a risk-averse (or risk-neutral) trader's optimal liquidation strategy with negative risk-adjusted drift 
is nearly the same as block liquidation (i.e., selling all shares at once) at the initial time.

\section{Effect of Uncertainty in MI in the Risk-neutral Framework}\label{section_comparison}

The purpose of this section is to investigate how noise in the MI function affects the trader. 
Particularly, we focus on the case where the trader is risk-neutral, that is, 
$u(w, \varphi , s) = u_{\mathrm {RN}}(w, \varphi , s) = w$. Note that such a risk-neutral setting 
is a typical and standard assumption in the study of the execution problem 
(see e.g. \cite{Alfonsi-Fruth-Schied,Cheng-Wang,Kato2,Kato_VWAP,Konishi-Makimoto,Makimoto-Sugihara,Schied-Zhang}).

First, we prepare a value function of the execution problem with a deterministic MI function to 
compare with the case of random MI. 
Let $\bar{V}_t(w, \varphi , s ; u)$ be the same as in (\ref {def_conti_v}) 
by replacing $g(\zeta )$ and $L_t$ with $\tilde{\gamma }g(\zeta )$ and $t$, that is, 
the SDE for $(X_r)_r$ is given as 
\begin{eqnarray*}
dX_r = \sigma (X_r)dB_r + b(X_r)dr - \tilde{\gamma }g(\zeta _r)dr, 
\end{eqnarray*}
where 
\begin{eqnarray}\label{def_tilde_gamma}
\tilde{\gamma } = \E [L_1] = \gamma + \int _{(0, \infty )}z\nu (dz). 
\end{eqnarray}
The following proposition is proved in Section \ref {sec_proof_th_comp_noise}: 

\begin{proposition} \ \label{th_comp_noise}We have
\begin{eqnarray}\label{comp_noise}
V_t(w, \varphi , s ; u_\mathrm {RN}) \geq \bar{V}_t(w, \varphi , s ; u_\mathrm {RN}). 
\end{eqnarray}
\end{proposition}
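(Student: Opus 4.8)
The plan is to exploit the risk-neutrality $u_{\mathrm{RN}}(w,\varphi,s)=w$ so that $\E[u_{\mathrm{RN}}(W_t,\varphi_t,S_t)]=\E[W_t]$ reduces to an expectation of a single quantity with an explicit integral representation, and then show that for \emph{any fixed admissible strategy} $(\zeta_r)_r\in\mathcal{A}_t(\varphi)$, the expected terminal cash in the random-MI model is at least the expected terminal cash obtained by using the \emph{same} strategy in the deterministic-MI model $\bar V_t$. Since the two problems share the same admissible set $\mathcal{A}_t(\varphi)$, taking the supremum over $\zeta$ on both sides then yields (\ref{comp_noise}) immediately. So the crux is a pathwise/expectation comparison at the level of a fixed control.

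First I would write $W_t = w + \int_0^t \zeta_r S_r\,dr$, so that $\E[W_t] = w + \int_0^t \E[\zeta_r S_r]\,dr$ by Fubini (justified by the $L^\infty$ bound on $\zeta$ and standard moment estimates on $S_r=e^{X_r}$ coming from (\ref{Bdd_Lipschitz_Constant}) and (\ref{assumption_C}), exactly as in \cite{Ishitani-Kato_COSA1}). Thus it suffices to compare $\E[S_r]$ (more precisely $\E[\zeta_r S_r]$) in the two models for each $r$. Next I would solve the SDE for $X_r$ explicitly enough to see the multiplicative structure: writing $X_r = X^0_r - \int_0^r g(\zeta_v)\,dL_v$ where $X^0$ is the "no-impact" part driven only by $\sigma(X)\,dB + b(X)\,dr$ is not quite legitimate because $\sigma,b$ are evaluated at $X_r$ itself, so instead I would argue via the conditional expectation: condition on the Brownian filtration $\mathcal{F}^B$ and use the independence of $(L_r)_r$ from $(B_r)_r$. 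Given $\mathcal{F}^B$ (hence given the whole path of $\zeta$, since $\zeta$ is adapted to the joint filtration — here one must be a little careful, and the clean route is to first establish the comparison for $\zeta$ adapted to $\mathcal{F}^B$ alone, or to condition appropriately), the factor contributed by the jumps is $\exp\!\big(-\int_0^r g(\zeta_v)\,dL_v\big)$, and by the Lévy–Khintchine/exponential formula for subordinators its conditional expectation is $\exp\!\big(-\gamma\int_0^r g(\zeta_v)\,dv - \int_0^r\!\int_{(0,\infty)}(1-e^{-g(\zeta_v)z})\nu(dz)\,dv\big)$. The deterministic-MI model replaces the impact term by $\tilde\gamma\int_0^r g(\zeta_v)\,dv = \gamma\int_0^r g(\zeta_v)\,dv + \int_0^r\!\int_{(0,\infty)} g(\zeta_v)\,z\,\nu(dz)\,dv$.

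The comparison then reduces to the elementary inequality $1-e^{-x}\le x$ for $x\ge 0$, applied with $x = g(\zeta_v)z$: this gives $\int_{(0,\infty)}(1-e^{-g(\zeta_v)z})\nu(dz) \le \int_{(0,\infty)} g(\zeta_v)z\,\nu(dz)$, i.e. the random-MI drift penalty on $\log S$ is \emph{smaller} than the deterministic one, hence $\E[S_r \mid \mathcal{F}^B]$ is \emph{larger} in the random-MI model, pathwise in the Brownian randomness. Since $\zeta_r\ge 0$ and $S_r>0$, multiplying by $\zeta_r$ preserves the inequality; integrating in $r$ and taking expectations gives $\E[W_t]\ge\E[\bar W_t]$ for the common strategy, and then the supremum over $\mathcal{A}_t(\varphi)$ finishes the proof. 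I expect the main obstacle to be purely technical rather than conceptual: namely, making the conditioning argument rigorous when $\zeta$ is adapted to the \emph{joint} filtration $\mathcal{F}_r=\sigma\{B_v,L_v;v\le r\}$ rather than to $\mathcal{F}^B$ alone, so that conditioning on $\mathcal{F}^B$ does not "freeze" $\zeta$. The clean fix is to condition instead on $\mathcal{F}^B_1\vee\sigma\{L_v; v\le r'\}$ progressively, or simply to observe that in the risk-neutral case it suffices to prove the inequality against an arbitrary deterministic (or $\mathcal{F}^B$-adapted) comparison strategy since $\bar V_t$'s optimizer can be taken in that smaller class — but one should check that claim carefully, and if it fails the honest route is a Jensen/convexity argument applied to the conditional law of the jump integral. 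Secondary bookkeeping includes verifying all moment bounds needed for Fubini and for $S_r\in L^1$, which follow from the standing assumptions exactly as in \cite{Ishitani-Kato_COSA1}.
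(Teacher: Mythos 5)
Your core idea is the right one and is, at heart, the same convexity mechanism the paper uses: the random impact enters through $e^{-(\text{noise})\cdot g}$, and Jensen (equivalently $1-e^{-x}\le x$ applied to the L\'evy exponent) says that replacing the noise by its mean $\tilde\gamma$ can only lower a risk-neutral trader's expected proceeds. However, the paper does not run this argument in continuous time at all: it proves the inequality for the \emph{discrete-time} value functions, $V^n_k\ge \bar V^n_k$, by a one-step conditional Jensen inequality (using that $c^n_l$ is independent of $\mathcal F^n_l$ and $\E[c^n_l]=\tilde\gamma$, so $e^{-\tilde\gamma g_n(\psi^n_l)}\le \E[e^{-c^n_l g_n(\psi^n_l)}\mid\mathcal F^n_l]$), and then obtains (\ref{comp_noise}) by letting $n\to\infty$ through the convergence theorem of \cite{Ishitani-Kato_COSA1}. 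That detour is not cosmetic: it is exactly what avoids the continuous-time difficulties your direct argument runs into.

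Concretely, two steps of your plan do not go through as written. First, conditioning on $\mathcal F^B$ does not produce the factorization you use: since $\sigma$ and $b$ are evaluated at $X_r$ itself, the diffusion/drift contribution to $X_r$ depends on the impact term $\int_0^r g(\zeta_v)\,dL_v$, so $S_r$ is \emph{not} an $\mathcal F^B$-measurable factor times $\exp\bigl(-\int_0^r g(\zeta_v)\,dL_v\bigr)$; you flagged that the naive decomposition is "not quite legitimate," but the conditioning you propose does not repair it, and the claimed formula for $\E[S_r\mid\mathcal F^B]$ (hence the pathwise comparison of the two models' conditional expectations via the L\'evy exponent versus its linearization) is unjustified. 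Second, even granting a factorization, the exponential formula $\E\bigl[\exp\bigl(-\int_0^r g(\zeta_v)\,dL_v\bigr)\bigr]=\exp\bigl(-\int_0^r\{\gamma g(\zeta_v)+\int_{(0,\infty)}(1-e^{-g(\zeta_v)z})\nu(dz)\}\,dv\bigr)$ holds for integrands independent of $L$ (e.g.\ deterministic or $\mathcal F^B$-measurable); for $\zeta$ adapted to $\mathcal F_r=\sigma\{B_v,L_v;v\le r\}$ one only gets that the compensated exponential is a martingale, and that martingale density cannot simply be dropped because it is correlated with the remaining factor. Your two proposed fixes (reducing $\bar V_t$ to $\mathcal F^B$-adapted or deterministic controls, or a "Jensen on the conditional law" argument) are exactly the missing steps: the first requires its own proof that independent randomization does not increase $\bar V_t$, and the second is not spelled out. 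So the proposal identifies the right inequality but, as it stands, has a genuine gap at its central step; the paper's discretize-then-pass-to-the-limit argument is the device that closes it.
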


This proposition shows that 
noise in MI is welcome because it decreases the liquidation cost for a risk-neutral trader. 

For instance, we consider a situation where 
the trader estimates the MI function from historical data and 
tries to minimize the expected liquidation cost. 
Then, a higher sensitivity of the trader to the volatility risk of MI 
results in a lower estimate for the expected proceeds of the liquidation. 
This implies that accommodating the uncertainty in MI makes the trader prone to
underestimating the liquidation cost. 
Thus, as long as the trader's target is the expected cost, 
the uncertainty in MI is not an incentive for being conservative 
with respect to the unpredictable liquidity risk. 
In Section \ref {section_examples}, we present the results of numerical experiments 
conducted to simulate the above phenomenon. 

\section{Examples}\label{section_examples}

In this section, we show two examples of our model, 
which are both generalizations of the ones in \cite {Kato}. 

Motivated by the Black--Scholes-type market model, 
we assume that $b(x) \equiv -\mu $ and $\sigma (x) \equiv \sigma $ for some 
constants $\mu , \sigma \geq 0$ and assume that $\tilde{\mu } := \mu - \sigma ^2/2$ is positive. 
We also assume a risk-neutral trader with utility function $u(w, \varphi ,s) = u_\mathrm {RN}(w) = w$. 
In this case, if there is no MI, then a risk-neutral trader will fear a decrease in the expected stock price, 
and thus will liquidate all the shares immediately at the initial time. 

We consider MI functions that are log-linear and log-quadratic with respect to liquidation speed, 
and assume Gamma-distributed noise; 
that is, $g(\zeta ) = \alpha _0\zeta ^p$ for $\alpha _0 > 0$ and $p = 1, 2$, and $L_t$ satisfies 
\begin{eqnarray*}
P(L_t - \gamma t\in dx) &=& \mathrm {Gamma}(\alpha _1t, \beta _1)(dx) \\
&:=& 
\frac{1}{\Gamma (\alpha _1t)(\beta _1)^{\alpha _1t}}x^{\alpha _1t - 1}e^{-x/\beta _1} 1_{(0,\infty )}(x)\,dx, 
\end{eqnarray*}
where $\Gamma (x)$ is the Gamma function. 
Here, $\alpha _1, \beta _1$, and $\gamma > 0$ are constants. The corresponding L\'evy measure is 
\begin{eqnarray*}
\nu (dz) = \frac{\alpha _1}{z}e^{-z/\beta _1}1_{(0, \infty )}(z)\,dz. 
\end{eqnarray*}
Note that for the discrete-time model studied in \cite {Ishitani-Kato_COSA1}, 
we can define the corresponding discrete-time MI function as 
$g^n_k(\psi ) = c^n_kg_n(\psi )$, where $g_n(\psi ) = n^{p-1}\alpha _0\psi ^p$ and 
$(c^n_k)_k$ is a sequence of i.i.d.~random variables with distribution 
\begin{eqnarray*}
P(c^n_k - \gamma \in dx) = \mathrm {Gamma}(\alpha _1/n, n\beta _1)(dx). 
\end{eqnarray*}
In each case, assumptions [A], [B1]--[B3], and [C] of \cite {Ishitani-Kato_COSA1} are satisfied.

\subsection{Log-Linear Impact \& Gamma Distribution}\label{sec_linear_eg}

In this subsection, we set $g(\zeta ) = \alpha _0\zeta $ ($p=1$). 
Theorem \ref {th_LL} directly implies the following: 

\begin{theorem}\label{th_eg_random} \ 
We have
\begin{eqnarray}\label{eg_Ju}
V_t(w,\varphi ,s ; u_\mathrm {RN}) = w+\frac{1-e^{-\gamma \alpha _0\varphi }}{\gamma \alpha _0}s 
\end{eqnarray}
for each $t\in (0,1]$ and $(w,\varphi ,s)\in D$. 
\end{theorem}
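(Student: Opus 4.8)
The plan is to obtain the formula as an immediate specialization of Theorems~\ref{thesame} and~\ref{th_LL}(ii), working within the standing assumptions of this section, for which $g(\zeta)=\alpha_0\zeta$. First I would record that $u_\mathrm{RN}(w,\varphi,s)=w$ depends on $w$ alone, so that writing $U(w):=w$ we have $u_\mathrm{RN}(w,\varphi,s)=U(w)$ with $U$ continuous, non-decreasing, and of (at most) linear growth; hence $V^\mathrm{SO}_t(\,\cdot\,;U)$ is well defined for $t\in(0,1]$ and Theorem~\ref{thesame} gives
\[
V_t(w,\varphi,s;u_\mathrm{RN})=V^\mathrm{SO}_t(w,\varphi,s;U),\qquad (w,\varphi,s)\in D,\ \ t\in(0,1].
\]
(The restriction $t>0$ is genuine: for $t=0$ the sell-off constraint $\int_0^t\zeta_r\,dr=\varphi$ can be met only when $\varphi=0$.)

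Next I would check that the extra hypotheses of Theorem~\ref{th_LL}(ii) hold here. The map $U(w)=w$ is concave (in fact affine), so it remains to verify $\hat b(s)\le 0$ for $s\ge0$. Since $S_r=\exp(X_r)$, Itô's formula turns the log-price SDE \eqref{SDE_X_g} into a price SDE whose non-impact part has diffusion coefficient $\hat\sigma(s)=s\,\sigma(\ln s)$ and drift $\hat b(s)=s\bigl(b(\ln s)+\tfrac12\sigma(\ln s)^2\bigr)$. In this section $b\equiv-\mu$ and $\sigma\equiv\sigma$ are constants, whence
\[
\hat b(s)=s\Bigl(-\mu+\tfrac12\sigma^2\Bigr)=-\tilde\mu\,s ,
\]
and the standing assumption $\tilde\mu=\mu-\sigma^2/2>0$ forces $\hat b(s)=-\tilde\mu s\le 0$ on $[0,\infty)$. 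Applying Theorem~\ref{th_LL}(ii) with the given $\alpha_0>0$ then yields
\[
V^\mathrm{SO}_t(w,\varphi,s;U)=U\!\left(w+\frac{1-e^{-\gamma\alpha_0\varphi}}{\gamma\alpha_0}\,s\right)=w+\frac{1-e^{-\gamma\alpha_0\varphi}}{\gamma\alpha_0}\,s ,
\]
and combining this with the previous display proves \eqref{eg_Ju}.

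I do not anticipate a real obstacle: the argument is a direct corollary once the two cited results are in hand. The only points needing care are the bookkeeping that converts \eqref{SDE_X_g} into the dynamics of $S_r$ so as to read off $\hat b$ and deduce $\hat b\le 0$ from $\tilde\mu>0$, and the (already noted) observation that the Gamma-type noise is admissible, i.e.\ its L\'evy measure $\nu(dz)=\alpha_1 z^{-1}e^{-z/\beta_1}1_{(0,\infty)}(z)\,dz$ satisfies the moment condition \eqref{assumption_C}, having finite first and second moments. As a consistency check one may note that the right-hand side of \eqref{eg_Ju} is exactly $Ju(w,\varphi,s)$ for $u=u_\mathrm{RN}$ in the case $\gamma h(\infty)=\gamma\alpha_0>0$, so the theorem upgrades the $t\downarrow0$ limit of Theorem~\ref{conti_random}(ii) to an exact identity for every $t\in(0,1]$. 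Alternatively, one could bypass part~(ii) and argue from Theorem~\ref{th_LL}(i) directly: under $\hat b\le0$ and linearity of $U$, the process $\overline W_t$ satisfies $\E[\overline W_t]\le\overline W_0$ with the bound approached along the block-type strategy \eqref{almost_block}; but invoking Theorem~\ref{th_LL}(ii) is the shortest route.
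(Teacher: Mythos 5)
Your proposal is correct and follows essentially the same route as the paper, which obtains Theorem \ref{th_eg_random} directly from Theorem \ref{th_LL}(ii) (with $U(w)=w$ concave and $\hat b(s)=-\tilde\mu s\le 0$ under $\tilde\mu>0$) together with the identification $V_t(\cdot;u_{\mathrm{RN}})=V^{\mathrm{SO}}_t(\cdot;U)$ from Theorem \ref{thesame}. Your verifications of $\hat b\le 0$ and of the admissibility of the Gamma noise are exactly the bookkeeping the paper leaves implicit.
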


The implication of this result is the same as in \cite {Kato}: 
the right side of (\ref {eg_Ju}) is equal to $Ju(w,\varphi ,s)$ and converges to 
$w+\varphi s$ as $\alpha _0\downarrow 0$ or $\gamma \downarrow 0$, which is the profit 
gained by choosing the execution strategy of block liquidation at $t = 0$. 
Therefore, the optimal strategy in this case is 
to liquidate all shares by dividing infinitely within an infinitely short time at $t = 0$ 
(we refer to such a strategy as a nearly block liquidation at the initial time). 
Note that the jump part of MI (\ref {jump_term}) 
does not influence the value of $V_t(w,\varphi ,s ; u_\mathrm {RN})$. 

\subsection{Log-Quadratic Impact \& Gamma Distribution}\label{sec_quad_eg}

Next we study the case of $g(\zeta ) = \alpha _0\zeta ^2$ ($p=2$). 
In \cite {Kato}, we obtained a partial analytical solution to the problem: 
when $\varphi $ is sufficiently small or large, we obtain the explicit form of optimal strategies. 
However, the noise in MI complicates the problem, and deriving the explicit solution is more difficult. 
Thus, we rely on numerical simulations. 
Under the assumption that the trader is risk-neutral, 
we can assume that an optimal strategy is deterministic. 
Here, we introduce the following additional condition: \vspace{2mm}\\
$[D]$ \ $\gamma \geq  \alpha _1\beta _1 / 8$. \vspace{2mm}\\
In fact, we can replace our optimization problem with the deterministic control problem 
\begin{eqnarray*}
f(t, \varphi ) = \sup _{(\zeta _r)_r}\int ^t_0
\exp \left( -\int ^r_0q(\zeta _v)dv \right) \zeta _rdr 
\end{eqnarray*}
for a deterministic process $(\zeta _r)_r$ under the above assumption, where 
\begin{align*}
q(\zeta ) &= \tilde {\mu } + \hat{g}(\zeta ), \\
\hat{g}(\zeta ) &= \gamma \alpha _0\zeta ^2 + \alpha _1\log (\alpha _0\beta _1 \zeta ^2 + 1). 
\end{align*}
This gives the following theorem: 
\begin{theorem}\label{th_f} \ 
$V_t(w, \varphi , s ; u_\mathrm {RN}) = w + sf(t, \varphi )$ under $[D]$. 
\end{theorem}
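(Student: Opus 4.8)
The plan is to reduce the stochastic optimal execution problem to the one-dimensional deterministic control problem defining $f(t,\varphi)$, under the extra hypothesis $[D]$. First I would observe that, since $u_{\mathrm{RN}}(w,\varphi,s)=w$ is linear in $w$ and the dynamics $dW_r=\zeta_r S_r\,dr$ are linear in $S_r$, we may use Fubini and the product structure to write $\E[W_t]=w+\E\big[\int_0^t\zeta_r S_r\,dr\big]=w+\int_0^t\E[\zeta_r S_r]\,dr$. Because the trader is risk-neutral, a standard argument (cf.\ the reasoning behind Theorem~\ref{th_LL}(ii) and the examples in \cite{Kato}) shows it suffices to optimize over \emph{deterministic} controls $(\zeta_r)_r$; I would either cite this fact from the companion analysis or give the short convexity/Jensen argument that randomization cannot help when the objective is the expectation of a quantity that is, after taking conditional expectations, a concave (here affine) functional of the averaged control path. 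With $(\zeta_r)_r$ deterministic, the log-price solves $X_r=x+\int_0^r\sigma\,dB_v+\int_0^r b\,dv-\int_0^r g(\zeta_v)\,dL_v$ with $b\equiv-\mu$, $\sigma\equiv\sigma$, $g(\zeta)=\alpha_0\zeta^2$, so $S_r=s\exp(X_r-x)$ and the only randomness left in $\E[\zeta_r S_r]=\zeta_r s\,\E[\exp(X_r-x)]$ is through $B$ and $L$.

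The key computation is the explicit Laplace/exponential functional. Using independence of $B$ and $L$, $\E[\exp(X_r-x)]=\exp\!\big(-\mu r+\tfrac12\sigma^2 r\big)\cdot\E\big[\exp(-\int_0^r g(\zeta_v)\,dL_v)\big]$, and the first factor is $\exp(-\tilde\mu r)$ with $\tilde\mu=\mu-\sigma^2/2$. For the Lévy factor I would apply the Lévy--Khintchine / exponential formula for subordinators: for deterministic $\zeta$,
\begin{eqnarray*}
\E\Big[\exp\Big(-\int_0^r g(\zeta_v)\,dL_v\Big)\Big]
=\exp\Big(-\gamma\!\int_0^r g(\zeta_v)\,dv-\int_0^r\!\!\int_{(0,\infty)}\big(1-e^{-g(\zeta_v)z}\big)\nu(dz)\,dv\Big).
\end{eqnarray*}
Plugging in the Gamma Lévy measure $\nu(dz)=\alpha_1 z^{-1}e^{-z/\beta_1}\,dz$ gives $\int_{(0,\infty)}(1-e^{-g(\zeta)z})\nu(dz)=\alpha_1\log(\beta_1 g(\zeta)+1)=\alpha_1\log(\alpha_0\beta_1\zeta^2+1)$, and $\gamma g(\zeta)=\gamma\alpha_0\zeta^2$, so the total exponent rate is exactly $\tilde\mu+\gamma\alpha_0\zeta^2+\alpha_1\log(\alpha_0\beta_1\zeta^2+1)=\tilde\mu+\hat g(\zeta)=q(\zeta)$. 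Hence $\E[\zeta_r S_r]=s\,\zeta_r\exp(-\int_0^r q(\zeta_v)\,dv)$, and summing,
\begin{eqnarray*}
\E[W_t]=w+s\int_0^t\exp\Big(-\int_0^r q(\zeta_v)\,dv\Big)\zeta_r\,dr,
\end{eqnarray*}
so that taking the supremum over admissible deterministic $(\zeta_r)_r$ yields $V_t(w,\varphi,s;u_{\mathrm{RN}})=w+sf(t,\varphi)$, where the constraint $\int_0^t\zeta_r\,dr\le\varphi$ carries over and matches the definition of $f(t,\varphi)$.

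The role of condition $[D]$, $\gamma\ge\alpha_1\beta_1/8$, is the subtle point and I expect it to be the main obstacle to state cleanly. It is precisely what guarantees that the reduction to deterministic controls is legitimate and that the sup is attained in the stated class: concretely, one checks that $\zeta\mapsto q(\zeta)\zeta$ (equivalently the map controlling the Hamiltonian $\sup_\zeta\{\zeta-q(\zeta)\cdot(\text{adjoint})\}$) has the right convexity/coercivity so that a measurable-selection or convexification argument applies; the bound $\alpha_1\beta_1/8$ comes from requiring the "effective" drift-plus-impact rate to dominate a term arising from the log in $\hat g$ when one bounds the value of randomized strategies against deterministic ones. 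I would isolate this as a lemma: under $[D]$, for the risk-neutral problem the optimizer may be taken deterministic and $V_t=w+sf(t,\varphi)$ follows from the computation above. The remaining steps — admissibility of the limiting deterministic strategies, justification of Fubini (finite via \eqref{assumption_C} and the bound $||\zeta||_\infty<\infty$), and the exponential formula for the subordinator — are routine given the standing assumptions and \eqref{assumption_C}.
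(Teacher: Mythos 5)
Your lower bound is essentially right, and it is the continuous-time counterpart of the computation the paper has in mind: for a \emph{deterministic} $(\zeta_r)_r$, independence of $B$ and $L$, the exponential formula for subordinators, and the Frullani integral $\int_{(0,\infty)}(1-e^{-\lambda z})\nu(dz)=\alpha_1\log(1+\beta_1\lambda)$ give $\E[W_t]=w+s\int_0^t\exp\big(-\int_0^r q(\zeta_v)dv\big)\zeta_r\,dr$, hence $V_t(w,\varphi,s;u_{\mathrm{RN}})\ge w+sf(t,\varphi)$, and no condition $[D]$ is needed for this direction. The genuine gap is the reverse inequality, which is the actual content of the theorem: for adapted $(\zeta_r)_r$ the factorization $\E[\zeta_rS_r]=\zeta_r\,\E[S_r]$ fails ($\zeta_r$ is correlated with the past of $B$ and $L$, and $S_r$ depends on the past of $\zeta$ through the impact term), the ``reasoning behind Theorem \ref{th_LL}(ii)'' does not transfer (it rests on linear $g$, concavity of $U$ and $\hat b\le 0$), and your paragraph on $[D]$ is an acknowledged placeholder rather than an argument. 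Moreover your guess about the role of $[D]$ is off: $[D]$ is exactly the convexity of $\hat g$. Indeed the second derivative of $\alpha_1\log(\alpha_0\beta_1\zeta^2+1)$ has minimum $-\alpha_1\alpha_0\beta_1/4$ (attained at $\alpha_0\beta_1\zeta^2=3$), so $\hat g''\ge 0$ if and only if $2\gamma\alpha_0\ge\alpha_1\alpha_0\beta_1/4$, i.e.\ $\gamma\ge\alpha_1\beta_1/8$.

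The paper's (implicit) proof avoids your missing step by arguing in discrete time, as in Proposition 5.1 of \cite{Kato}: backward induction together with the Gamma Laplace transform $\E[e^{-\lambda c^n_k}]=\exp\big(-\gamma\lambda-\tfrac{\alpha_1}{n}\log(n\beta_1\lambda+1)\big)$ shows that the discrete value function is \emph{exactly} $V^n_k(w,\varphi,s;u_{\mathrm{RN}})=w+sf^n_k(\varphi)$ — the separable form makes the one-step optimizer depend only on the remaining inventory and time, which is the honest version of ``the optimal strategy may be taken deterministic,'' and needs no convexity. Then $V^n_{[nt]}\to V_t$ by Theorem 2.3 of \cite{Ishitani-Kato_COSA1}, and the role of $[D]$ is to make $\hat g$ convex (i.e.\ expressible as $\int_0^\zeta\hat h$ with $\hat h$ non-decreasing), so that the limiting deterministic problem fits the convex-impact framework of \cite{Kato}/\cite{Ishitani-Kato_COSA1} and the limit $\lim_n f^n_{[nt]}(\varphi)$ can be identified with $f(t,\varphi)$; without convexity, chattering controls could make the continuous-time supremum differ from the discrete limit. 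So to repair your write-up you must either prove the upper bound over adapted strategies directly (a nontrivial comparison in which convexity of $\hat g$ would again be the operative hypothesis), or fall back on this discrete-time dynamic-programming identification.
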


This theorem is obtained by a similar proof to Proposition 5.1 in \cite {Kato} 
by using the following Laplace transform of the Gamma distribution: 
\begin{eqnarray*}
\E[e^{- \lambda c^n_k}] = 
\exp \left( -\gamma \lambda  - \frac{ \alpha_{1}}{n}\log ( n\beta_{1}\lambda  + 1) \right). 
\end{eqnarray*}

From Theorem \ref {th_f} and (\ref {HJB_V}), 
we derive the HJB equation for the function $f$ as 
\begin{eqnarray}\label{HJB_f_eg}
\frac{\partial }{\partial t}f + \tilde{\mu }f - \sup _{\zeta \geq 0}
\left\{ \zeta \left( 1 - \frac{\partial }{\partial \varphi }f\right) - 
\hat{g}(\zeta )f\right\} = 0 
\end{eqnarray}
with the boundary condition 
\begin{eqnarray}\label{boundary_f}
f(0, \varphi ) = f(t, 0) = 0. 
\end{eqnarray}
When $\gamma \geq \alpha _1 / 2$, the function $\hat{g}$ becomes convex, so 
we can apply Theorems 3.3 and 3.6 in \cite {Kato} to show the following proposition: 
\begin{proposition} \ \label{prop_eg_HJB} 
Assume $\gamma \geq \alpha _1/2$. 
Then $f(t, \varphi )$ is the viscosity solution of $(\ref {HJB_f_eg})$. 
Moreover, if $\tilde{f}$ is a viscosity solution of 
$(\ref {HJB_f_eg})$ and $(\ref {boundary_f})$ and 
has a polynomial growth rate, then $f = \tilde{f}$. 
\end{proposition}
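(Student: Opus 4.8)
The plan is to verify that $f$ solves the HJB equation \eqref{HJB_f_eg} in the viscosity sense and then invoke the comparison principle from \cite{Kato}. Since Theorem \ref{th_f} identifies $V_t(w,\varphi,s;u_{\mathrm{RN}}) = w + sf(t,\varphi)$ under $[D]$, and Theorem \ref{th_eg_random}-type reasoning plus the semigroup property (Proposition \ref{th_semi}) carry the dynamic programming structure over to $f$, the first step is to check that $\hat g$ is convex and satisfies the structural hypotheses needed to apply \cite[Theorems 3.3 and 3.6]{Kato}. The convexity claim is the computation $\hat g''(\zeta) \geq 0$: writing $\hat g(\zeta) = \gamma\alpha_0\zeta^2 + \alpha_1\log(\alpha_0\beta_1\zeta^2+1)$, one differentiates twice, and the condition $\gamma \geq \alpha_1/2$ is exactly what makes the negative contribution from the logarithmic term be dominated by the positive quadratic term; I would carry out this elementary calculus check first, since everything downstream depends on it. One must also confirm the remaining regularity and growth conditions on $\hat g$ (continuity, $\hat g(0)=0$, $\hat g$ nondecreasing, appropriate behavior at infinity) so that the results of \cite{Kato} genuinely apply to this particular $\hat g$.

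Next I would translate the optimal execution problem into the reduced form so that \cite[Theorem 3.3]{Kato} (existence of a viscosity solution) applies. The point is that $f(t,\varphi) = \sup_{(\zeta_r)_r}\int_0^t \exp(-\int_0^r q(\zeta_v)\,dv)\zeta_r\,dr$ with $q(\zeta) = \tilde\mu + \hat g(\zeta)$ is precisely the deterministic control problem handled abstractly in \cite{Kato}: the running reward is $\zeta_r$, discounted by the accumulated rate $q(\zeta_v)$, and the value function satisfies the associated dynamic programming principle by Proposition \ref{th_semi} (or by a direct Bellman argument on $f$). Given the DPP for $f$ together with the continuity of $f$ (which follows from Theorem \ref{conti_random}, noting that here $h(\infty) = \lim_{\zeta\to\infty} 2\alpha_0\zeta = \infty$ so $f$ is continuous up to $t = 0$ and the boundary condition \eqref{boundary_f} holds), the standard verification that $f$ is a viscosity solution of \eqref{HJB_f_eg} goes through by testing against smooth functions touching $f$ from above and below and using short-time expansions of the controlled dynamics — this is exactly the content imported from \cite[Theorem 3.3]{Kato}.

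Finally, for uniqueness I would appeal to \cite[Theorem 3.6]{Kato}, the comparison principle: if $\tilde f$ is any viscosity solution of \eqref{HJB_f_eg} with the boundary data \eqref{boundary_f} and polynomial growth, then the comparison theorem yields both $\tilde f \leq f$ and $f \leq \tilde f$, hence $f = \tilde f$. The convexity of $\hat g$ under $\gamma \geq \alpha_1/2$ is again essential here, since the comparison argument in \cite{Kato} requires the Hamiltonian $\sup_{\zeta\geq 0}\{\zeta(1-\partial_\varphi f) - \hat g(\zeta)f\}$ to be well-behaved — in particular finite and with the right monotonicity in $f$ — which convexity of $\hat g$ guarantees (for $f>0$ the supremum is attained at a finite $\zeta$ because $\hat g$ grows superlinearly). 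I expect the main obstacle to be not any single hard estimate but rather the bookkeeping of checking that the abstract hypotheses of \cite[Theorems 3.3 and 3.6]{Kato} are met verbatim by the concrete $\hat g$ here — most delicately, confirming the polynomial-growth and boundary behavior are compatible with the comparison principle's assumptions, and ensuring that passing from the stochastic value function $V$ to the deterministic $f$ (valid under $[D]$, which is stronger than $\gamma \geq \alpha_1/2$... note $[D]$ reads $\gamma \geq \alpha_1\beta_1/8$ and is a separate hypothesis) does not quietly require more than $\gamma \geq \alpha_1/2$. Since the proposition only assumes $\gamma \geq \alpha_1/2$, one should be careful that the identification $V_t = w + sf(t,\varphi)$ and the DPP for $f$ are available under that weaker assumption, or else restrict attention to the regime where both hold; I would flag this and, if necessary, note that the proof uses the representation of $f$ as the reduced control problem together with the convexity of $\hat g$, both of which are valid once $\gamma \geq \alpha_1/2$.
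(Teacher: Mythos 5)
Your proposal coincides with the paper's own argument: the paper offers no proof beyond the remark that $\gamma \geq \alpha_1/2$ makes $\hat{g}$ convex and then cites Theorems 3.3 and 3.6 of \cite{Kato} for the viscosity-solution property and the comparison/uniqueness, which is exactly your route (convexity check, verification of the structural hypotheses on $\hat{g}$, existence via the dynamic programming argument, uniqueness via comparison). One minor observation, shared with the paper rather than a gap in your attempt: computing $\hat{g}''$ shows the exact convexity threshold is $\gamma \geq \alpha_1\beta_1/8$ (i.e.\ condition [D]) rather than $\gamma \geq \alpha_1/2$, so your flagged concern about reconciling the proposition's hypothesis with [D] and with Theorem \ref{th_f} is well placed but does not put you at odds with the paper's proof.
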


It is difficult to obtain an explicit form of the solution of (\ref {HJB_f_eg}) and (\ref {boundary_f}). 
Instead, we solve this problem numerically by considering 
the deterministic control problem $f^n_{[nt]}(\varphi )$ 
in the discrete-time model for a sufficiently large $n$: 
\begin{align*}
f^n_k(\varphi ) &= 
\sup _{
\substack{(\psi ^n_l)^{k-1}_{l = 0}\subset [0, \varphi ]^k,\\
\sum _l\psi ^n_l \leq \varphi  }
}
\sum ^{k-1}_{l = 0}\psi ^n_l\exp \left( -\tilde{\mu }\times \frac{l}{n} - \sum ^l_{m = 0}
I_m\right), \\
I_m &= n\gamma \alpha _0(\psi ^n_m)^2 
+ \frac{\alpha _1}{n}\log (n^2\alpha _0\beta _1 (\psi ^n_m)^2 + 1). 
\end{align*}
Note that the convergence 
$\lim_{n\to \infty} f^n_{[nt]}(\varphi ) = f(t,\varphi )$ 
is guaranteed by Theorem 2.3 of \cite {Ishitani-Kato_COSA1}.
We set each parameter as follows: $\alpha _0 = 0.01, t = 1, 
\tilde{\mu } = 0.05, w = 0, s = 1$, and $n = 500$. 
We examine three patterns for $\varphi$, 
$\varphi  = 1, 10$, and $100$.

\subsubsection{The case of fixed $\gamma $}\label{sec_fixed_gamma}

In this subsection, we set $\gamma = 1$ 
to examine the effects of the shape parameter $\alpha _1$ of the noise in MI. 
Here, we also set $\beta _1 = 2$. As seen in the numerical experiment in \cite {Kato}, 
the forms of optimal strategies vary according to the value of $\varphi$. 
Therefore, we summarize our results separately for each $\varphi$. 

Figure \ref {graph_phi1_0} shows graphs of the optimal strategy $(\zeta _r)_r$ and 
its corresponding process $(\varphi _r)_r$ of the security holdings 
in the case of $\varphi=1$, that is, the number of initial shares of the security is small. 
As found in \cite {Kato}, if there is no noise in the MI function (i.e., if $\alpha _1 = 0$), 
then the optimal strategy is to sell the entire amount at the same speed 
(note that the roundness at the corner in the left graph of Figure \ref {graph_phi1_0} represents 
the discretization error and is not essential). 
The same tendency is found in the case of $\alpha _1 = 1$, 
but in this case the execution time is longer than in the case of $\alpha _1 = 0$. 
When we take $\alpha _1 = 3$, the situation is completely different. 
In this case, the optimal strategy is to increase the execution speed as the time horizon approaches. 

When the amount of the security holdings is $10$, which is larger than in the case of $\varphi=1$, 
the optimal strategy and the corresponding process of the security holdings are as shown in Figure \ref {graph_phi10_0}. 
In this case, a trader's optimal strategy is to increase the execution speed as 
the end of the trading time approaches, which is the same as in the case of $\varphi=1$ with $\alpha _1 = 3$. 
Clearly, a larger value of $\alpha _1$ corresponds to a higher speed of execution closer to the time horizon. 
We should add that a trader cannot complete the liquidation when $\alpha _1 = 3$. 
However, as mentioned in Section \ref {sec_SO}, 
we can choose a nearly optimal strategy from $\mathcal {A}^\mathrm {SO}_1(\varphi )$ 
without changing the value of the expected proceeds of liquidation by 
combining the execution strategy in Figure \ref {graph_phi10_0} (with $\alpha _1 = 3$) 
and the terminal (nearly) block liquidation. See Section 5.2 of \cite {Kato} for details. 

When the amount of the security holdings is too large, as in the case of $\varphi = 100$, 
a trader cannot complete the liquidation regardless of the value of $\alpha _1$, 
as Figure \ref {graph_phi100_0} shows. This is similar to the case of $\varphi = 10$ with $\alpha _1 = 3$. 
The remaining amount of shares of the security at the time horizon is larger for larger noise in MI. 
Note that the trader can also sell all the shares of the security without decreasing the profit 
by combining the strategy with the terminal (nearly) block liquidation.

\begin{figure}[htbp]
\begin{center}
\includegraphics[scale=0.3]{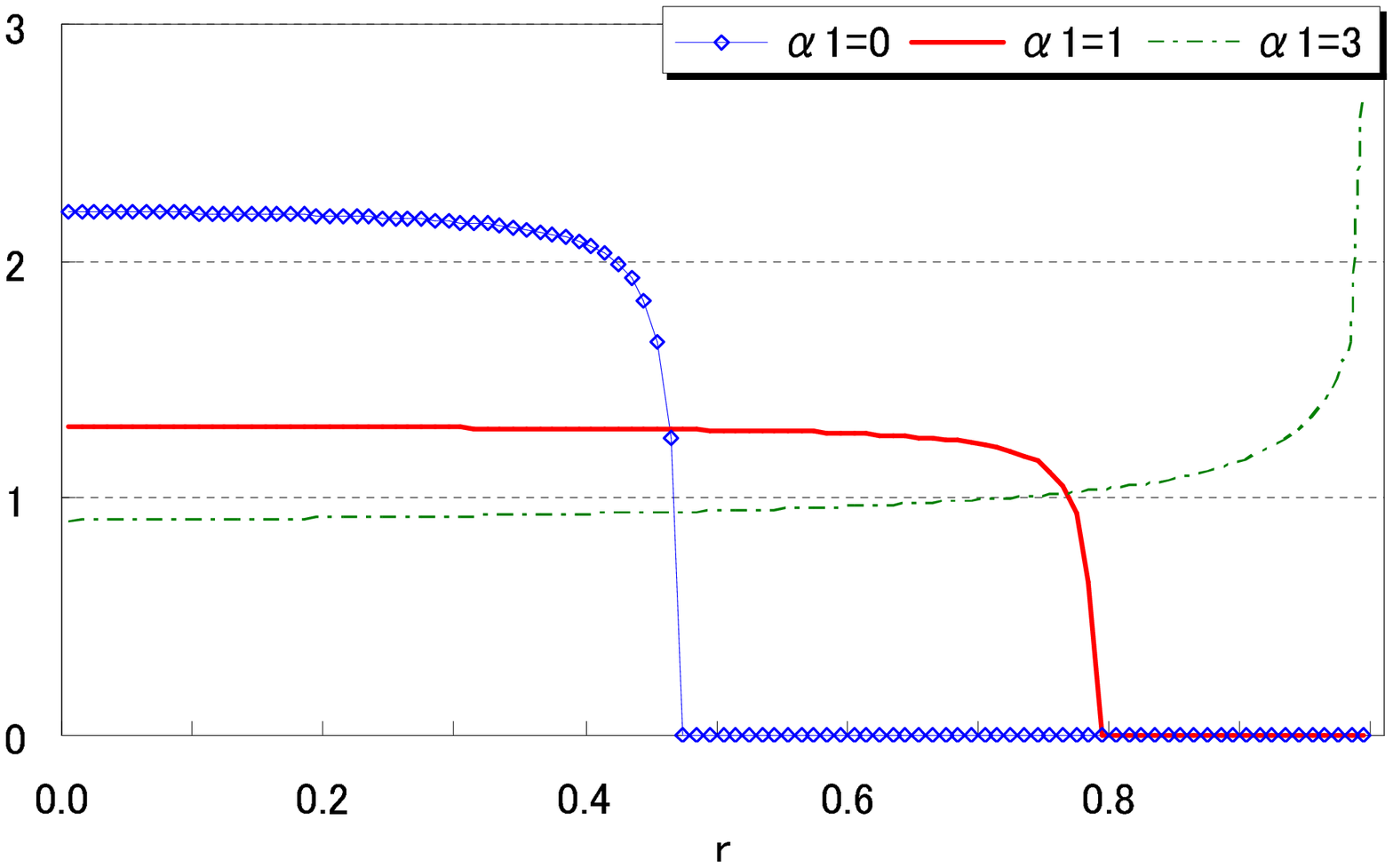}
\includegraphics[scale=0.3]{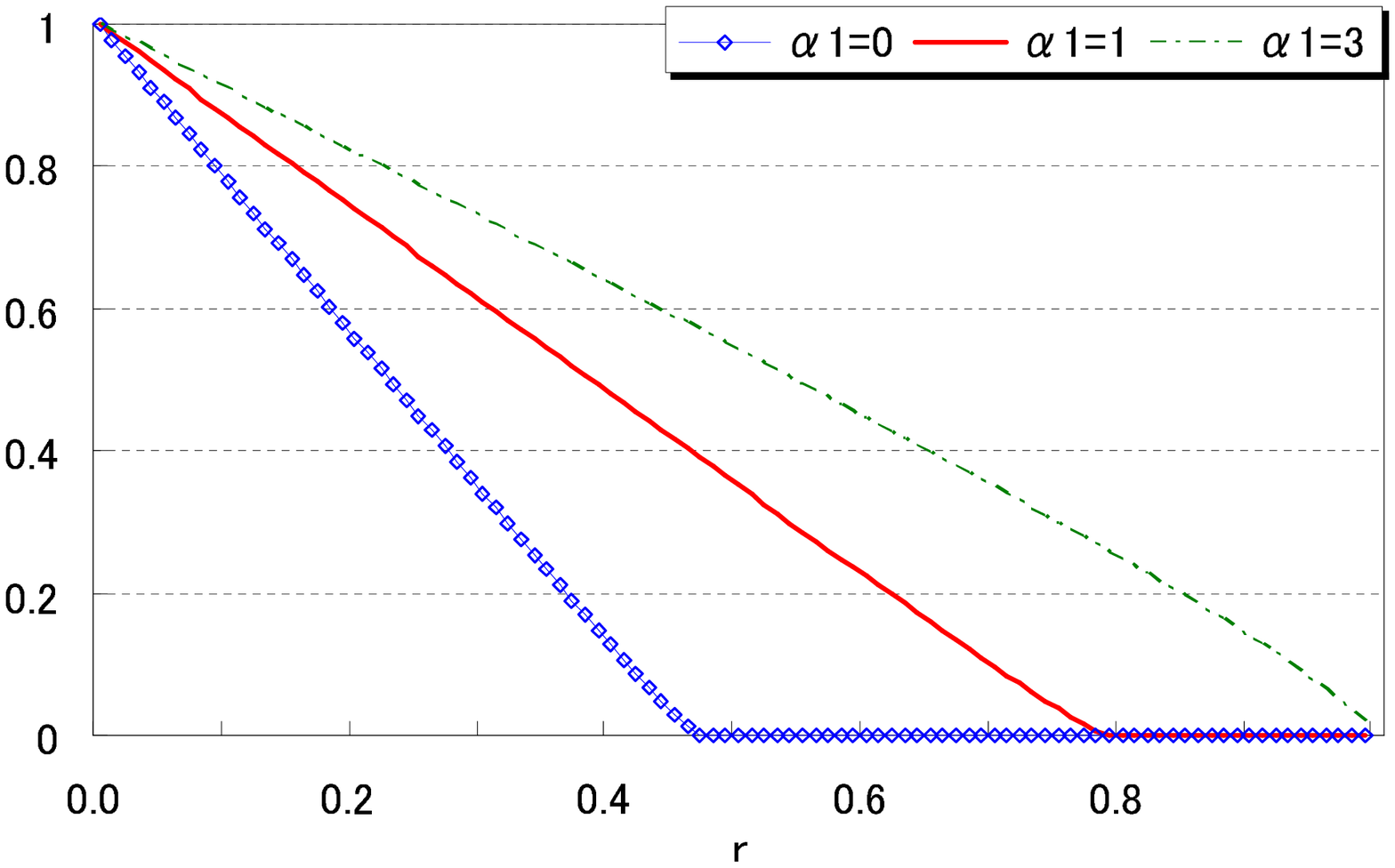}
\caption{Result for $\varphi = 1$ in the case of fixed $\gamma $. 
Left: The optimal strategy $\zeta _r$. 
Right: The amount of security holdings $\varphi _r$. 
}
\label{graph_phi1_0}
\includegraphics[scale=0.3]{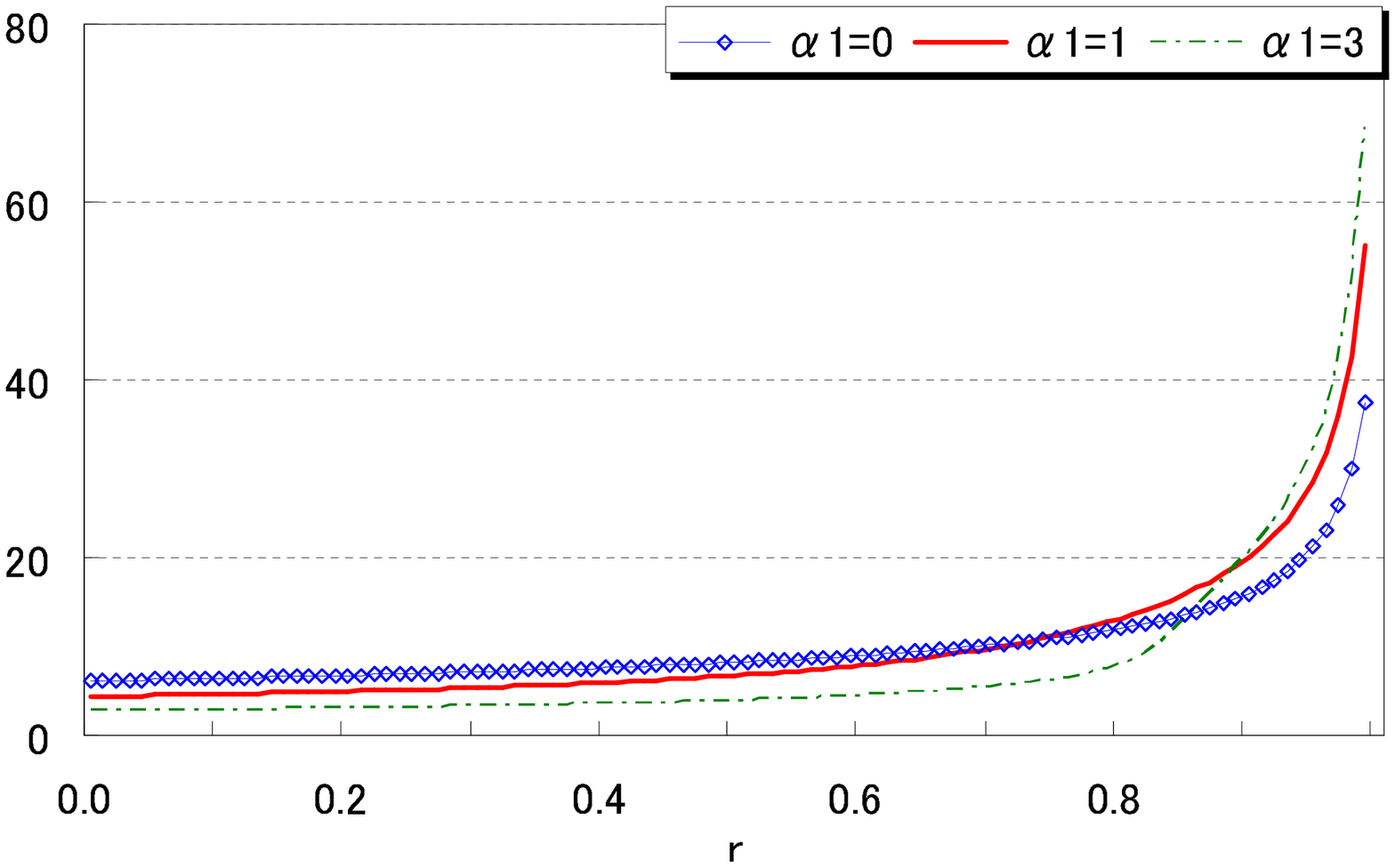}
\includegraphics[scale=0.3]{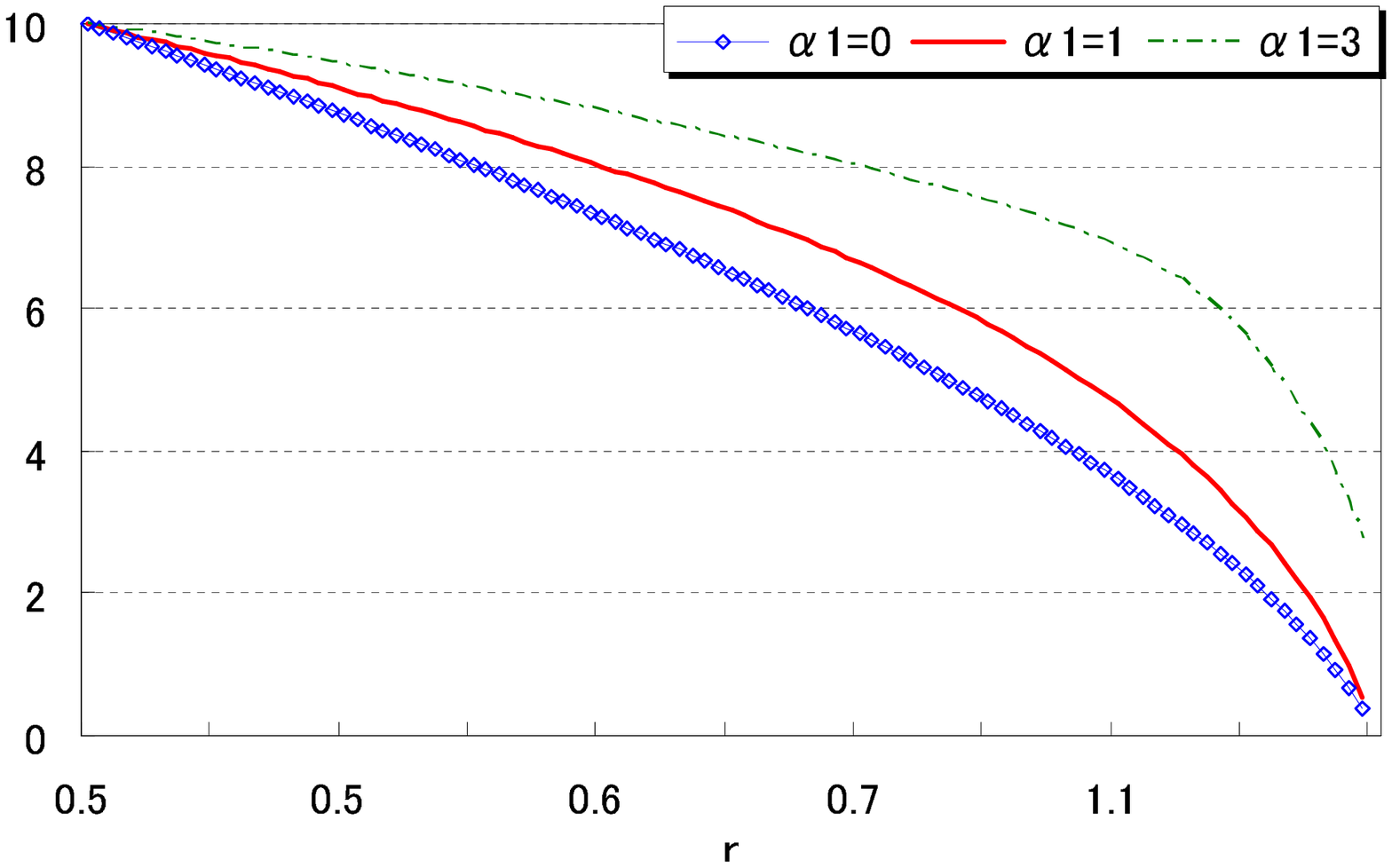}
\caption{Result for $\varphi = 10$ in the case of fixed $\gamma $. 
Left : The optimal strategy $\zeta _r$. 
Right : The amount of security holdings $\varphi _r$. 
}
\label{graph_phi10_0}
\includegraphics[scale=0.3]{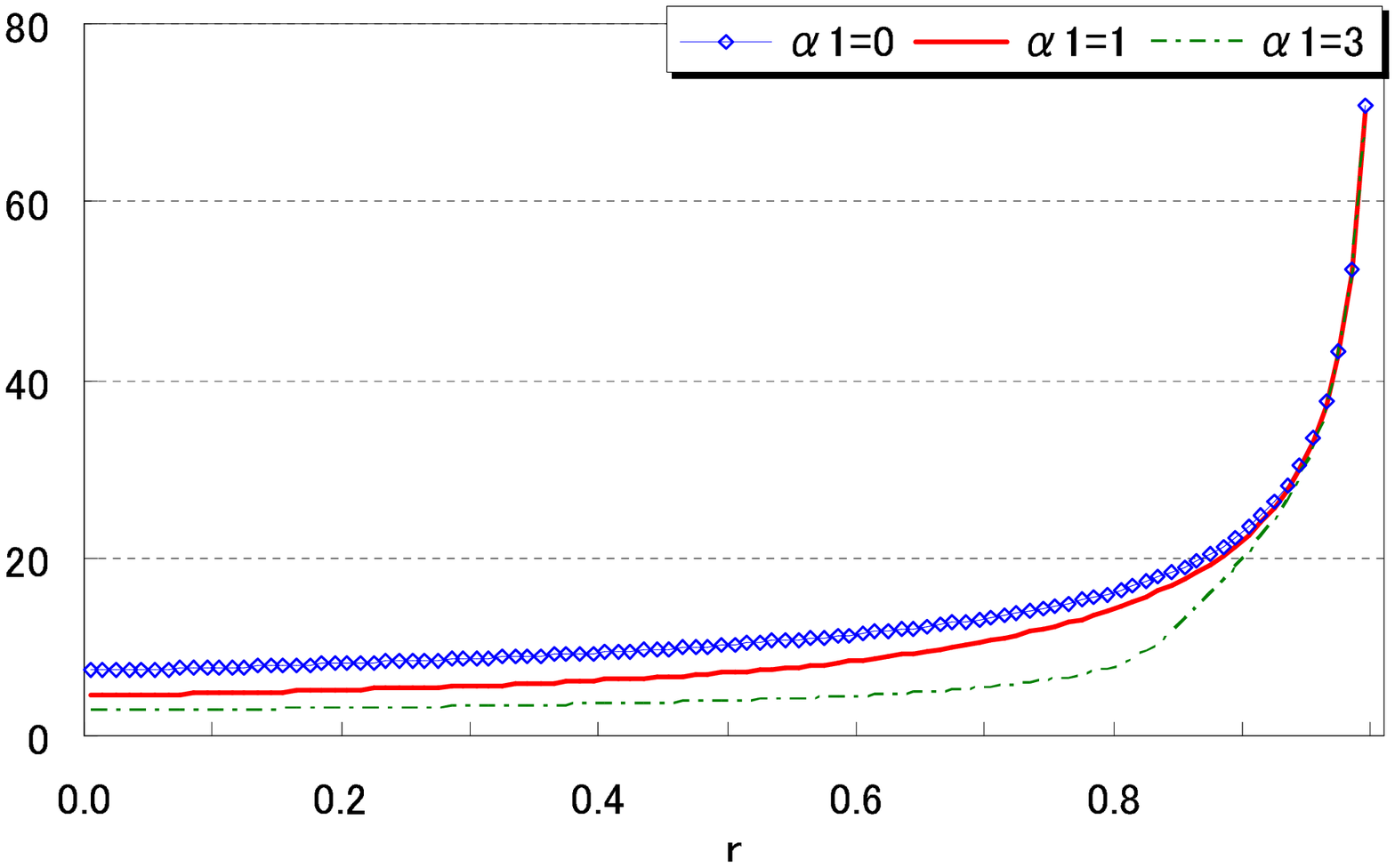}
\includegraphics[scale=0.3]{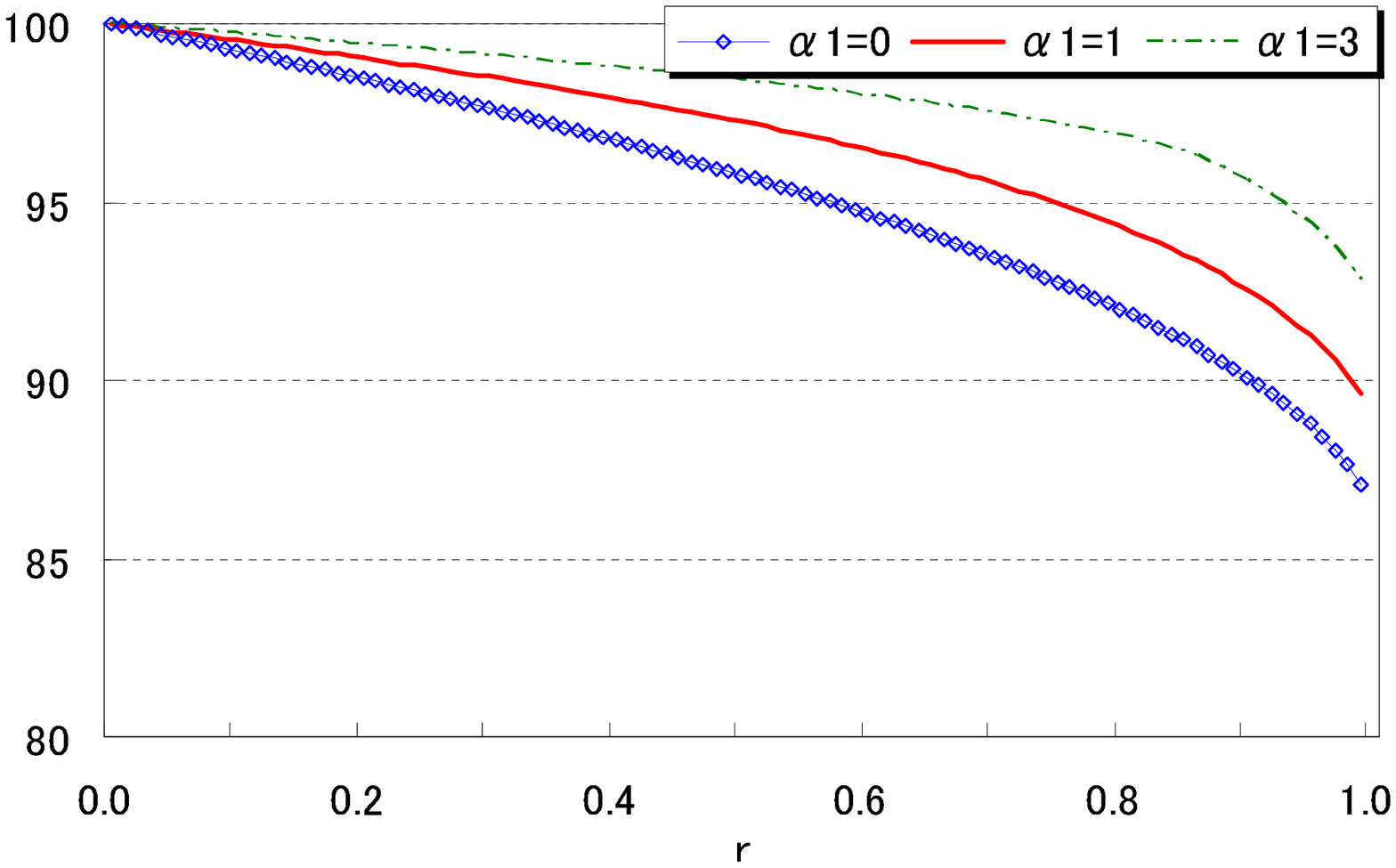}
\caption{Result for $\varphi = 100$ in the case of fixed $\gamma $. 
Left : The optimal strategy $\zeta _r$. 
Right : The amount of security holdings $\varphi _r$. 
}
\label{graph_phi100_0}
\end{center}
\end{figure}

\subsubsection{The case of fixed $\tilde{\gamma }$}

In the above subsection, we presented a numerical experiment performed to 
compare the effects of the parameter $\alpha _1$ by fixing $\gamma $. 
Here, we perform numerical comparison from a different viewpoint. 

The results in Section \ref {section_comparison} imply that accounting for the uncertainty in MI 
will cause a risk-neutral trader to be optimistic about the estimation of liquidity risks. 
To obtain a deeper insight, we investigate the structure of the MI function in more detail. 
In Theorems \ref {conti_random}(ii) and \ref {th_eg_random}, 
the important parameter is $\gamma $, which is the infimum of $L_1$ and is smaller than or equal to $\E [L_1]$. 
We can interpret this as a characteristic feature 
whereby the (nearly) block liquidation eliminates the effect of positive jumps of $(L_t)_t$. 
However, there is another decomposition of $L_t$ such that 
\begin{eqnarray*}
L_t = \tilde{\gamma }t + \int ^t_0\int _{(0, \infty )}z\tilde{N}(dr, dz), 
\end{eqnarray*}
where $\tilde{\gamma }$ is given by (\ref {def_tilde_gamma}) and 
$$\tilde{N}(dr, dz) = N(dr, dz) - \nu (dz)dr.$$ 
This representation is essential from the viewpoint of martingale theory. 
Here, $\tilde{N}(\cdot, \cdot)$ is the compensator of $N(\cdot, \cdot)$ 
and $\tilde{\gamma }$ can be regarded as the ``expectation'' of the noise in MI. 
Just for a risk-neutral world (in which a trader is risk-neutral), 
as studied in Section \ref {section_comparison}, we can compare our model with 
the case of deterministic MI functions as in \cite {Kato} by setting $\tilde{\gamma } = 1$. 
Based on this, we conduct another numerical experiment with 
a constant value of $\tilde{\gamma }$. 

Note that in our example 
\begin{eqnarray}\label{temp_mean}
\tilde{\gamma } = \gamma + \alpha _1\beta _1
\end{eqnarray}
and 
\begin{eqnarray}\label{temp_var}
\frac{1}{t}\mathrm {Var}\left( \int ^t_0\int _{(0, \infty )}z\tilde{N}(dr, dz)\right) = \alpha _1\beta ^2_1 
\end{eqnarray}
hold. Here, (\ref {temp_mean}) (respectively, (\ref {temp_var})) corresponds to the mean 
(respectively, the variance) of the noise in the MI function at unit time. 
Comparisons in this subsection are performed with the following assumptions: 
We set the parameters $\beta _1$ and $\gamma $ to satisfy 
\begin{eqnarray*}
\gamma + \alpha _1\beta _1= 1, \ \ \alpha _1\beta _1^2= 0.5. 
\end{eqnarray*}
We examine the cases of $\alpha _1 = 0.5$ and $1$, 
and compare them with the case of $\gamma = 1$ and $\alpha _1 = 0$. 

Figure \ref{graph_phi1} shows the case of $\varphi = 1$, 
where the trader has a small amount of security holdings. 
Compared with the case in Section \ref {sec_fixed_gamma}, 
the forms of all optimal strategies are the same; that is, 
the trader should sell the entire amount at the same speed. 
The execution times for $\alpha _1 > 0$ are somewhat shorter than for $\alpha _1 = 0$. 

Figure \ref {graph_phi10} corresponds to the case of $\varphi = 10$. 
The forms of the optimal strategies are similar to the case of 
$\varphi = 10$, $\alpha _1 = 0, 1$ in Section \ref {sec_fixed_gamma}. 
Clearly, the speed of execution near the time horizon increases with increasing $\alpha _1$. 

The results for $\varphi = 100$ are shown in Figure \ref {graph_phi100}. 
The forms of the optimal strategies are similar to the case of $\varphi = 100$ in Section \ref {sec_fixed_gamma}. 
However, in contrast to the results in the previous subsection, 
the remaining amount of shares of the security at the time horizon is smaller for larger $\alpha _1$. 

Finally, we investigate the total MI cost introduced in \cite {Kato2} 
(which is essentially equivalent to an implementation shortfall (IS) cost 
\cite {Almgren-Chriss, Perold}): 
\begin{eqnarray*}
\mathrm {TC}(\varphi) = -\log \frac{V_T(0, \varphi, s)}{\varphi s}. 
\end{eqnarray*}
As noted at the beginning of this section, 
when the market is fully liquid and there is no MI, 
then the total proceeds of liquidating $\varphi$ shares of the security at $t=0$ are equal to $\varphi s$. 
In the presence of MI, however, the optimal total proceeds decrease to 
$V_T(0, \varphi, s) = \varphi s\times \exp (-\mathrm {TC}(\varphi))$. 
Thus, the total MI cost $\mathrm {TC}(\varphi)$ denotes the loss rate caused by MI in a risk-neutral world. 

Figure \ref{graph_MIcost} shows the total MI costs in the cases of $\varphi = 1$ and $10$. 
Here, we omit the case of $\varphi = 100$ because the amount of shares of the security 
is too large to complete the liquidation unless otherwise combining terminal block liquidations 
(which may crash the market). In both cases of $\varphi= 1$ and $10$, 
we find that the total MI cost decreases by increasing $\alpha _1$. 
Since the expected value $\tilde{\gamma }$ of the noise in MI is fixed, 
an increase in $\alpha _1$ implies a decrease in $\gamma $ and $\beta _1$. 
Risk-neutral traders seem to be more sensitive to the parameter $\gamma $ than to $\alpha _1$, 
and thus the trader can liquidate the security without concern about the volatility of the noise in MI. 
Therefore, the total MI cost for $\alpha _1 > 0$ is lower than that for $\alpha _1 = 0$. 

\begin{figure}[htbp]
\begin{center}
\includegraphics[scale=0.3]{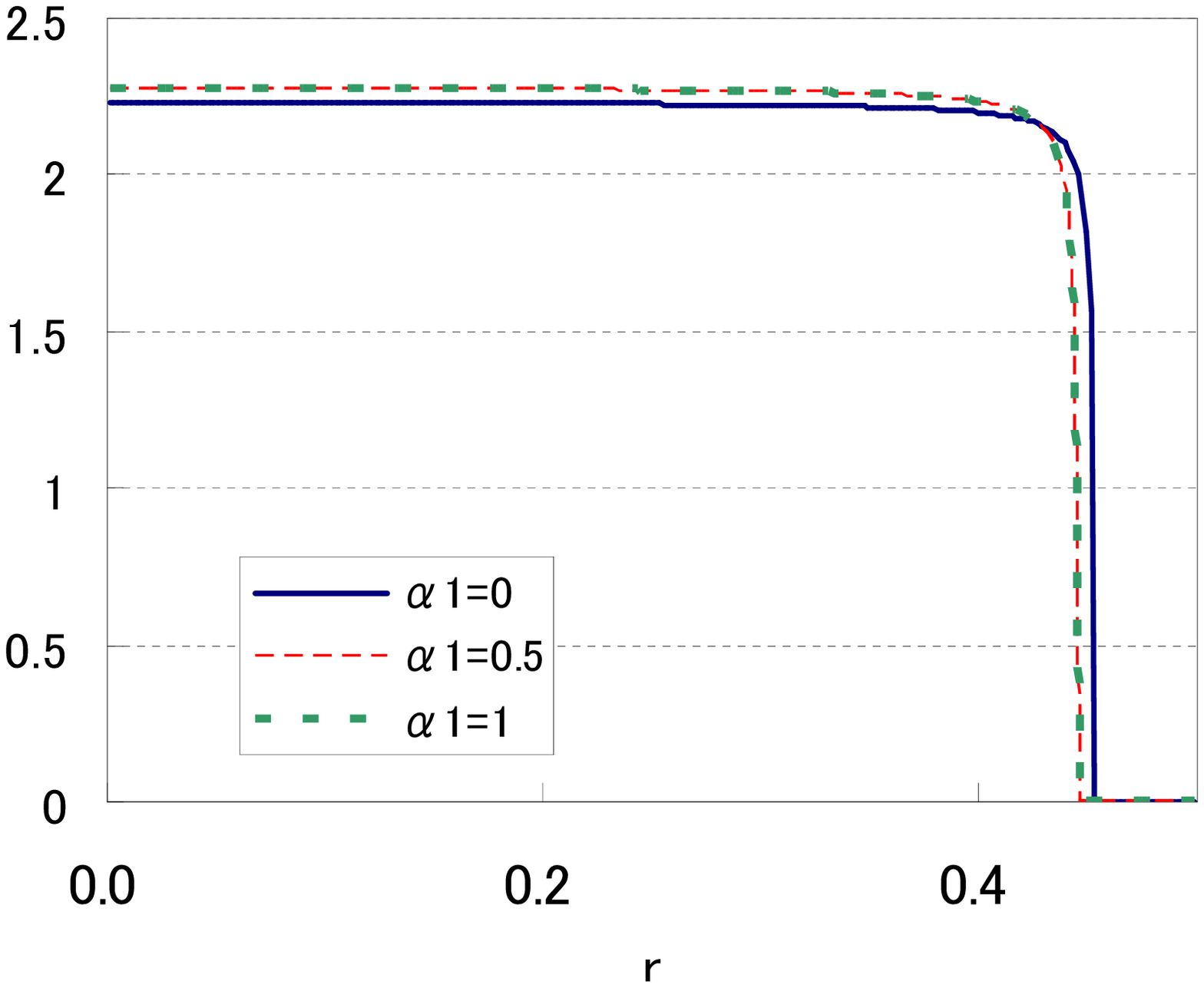}
\includegraphics[scale=0.3]{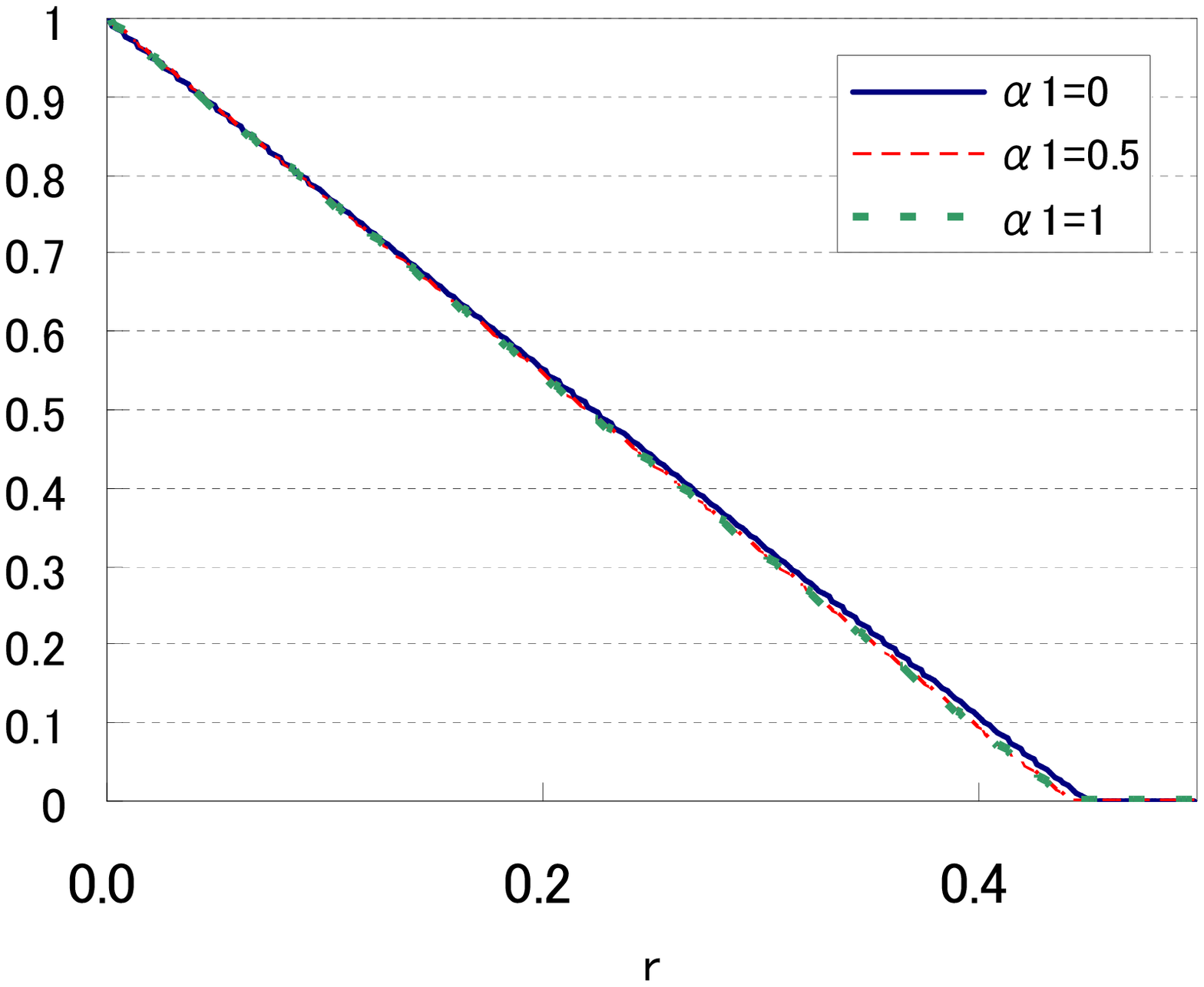}
\caption{Result for $\varphi = 1$ in the case of fixed $\tilde{\gamma }$. 
Left : The optimal strategy $\zeta _r$. 
Right : The amount of security holdings $\varphi _r$. 
}
\label{graph_phi1}
\includegraphics[scale=0.3]{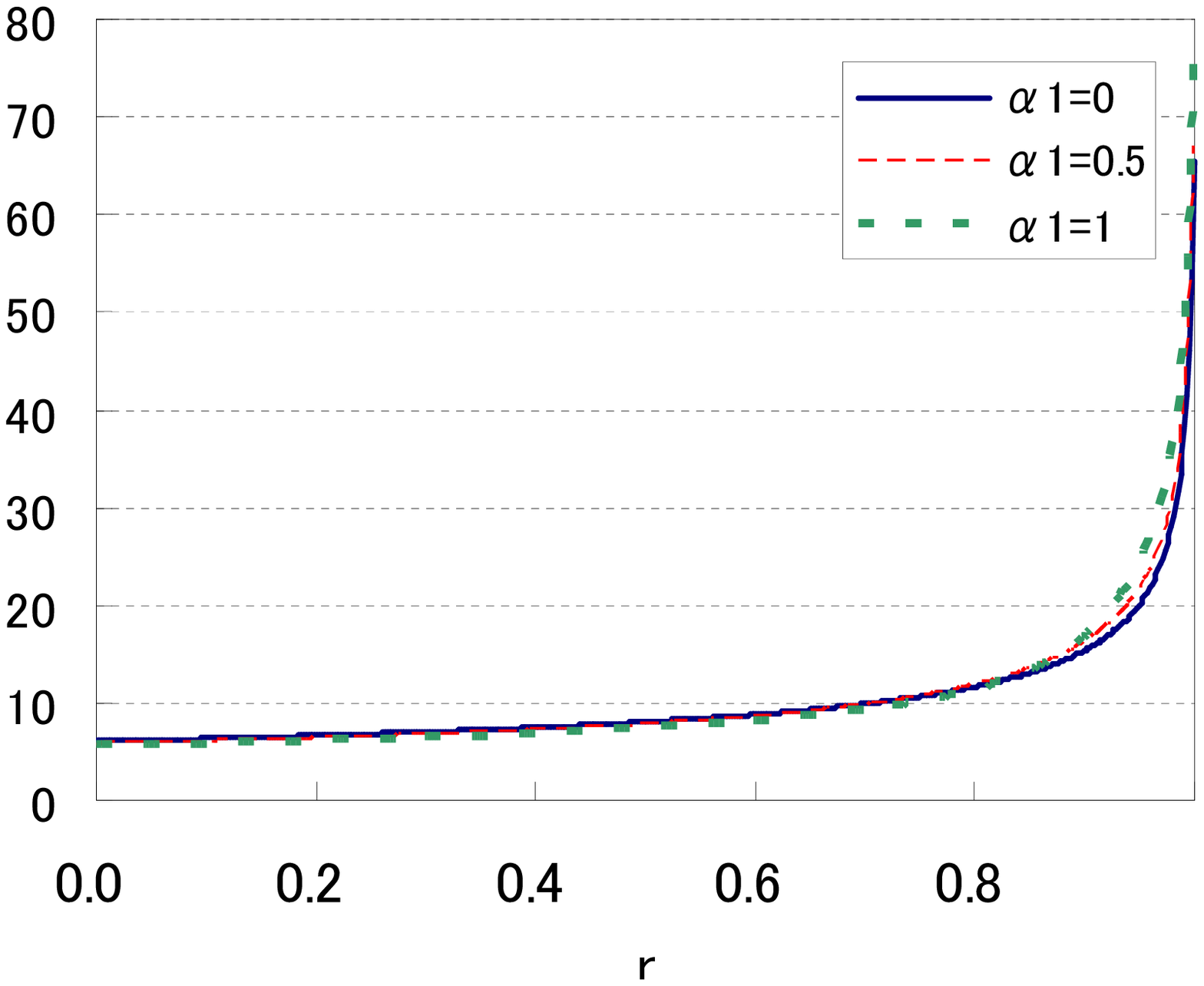}
\includegraphics[scale=0.3]{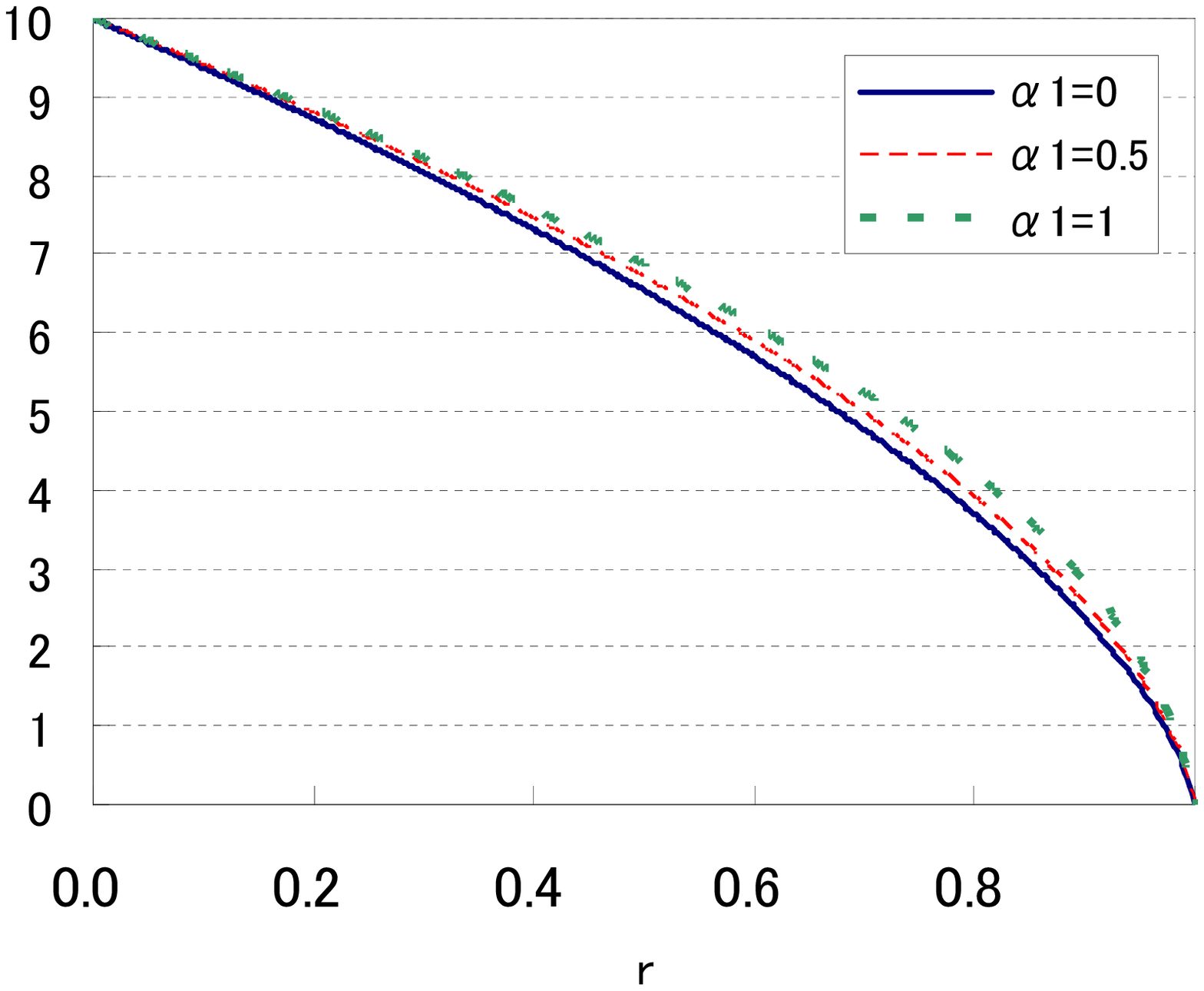}
\caption{Result for $\varphi = 10$ in the case of fixed $\tilde{\gamma }$. 
Left: The optimal strategy $\zeta _r$. 
Right : The amount of security holdings $\varphi _r$. 
}
\label{graph_phi10}
\includegraphics[scale=0.3]{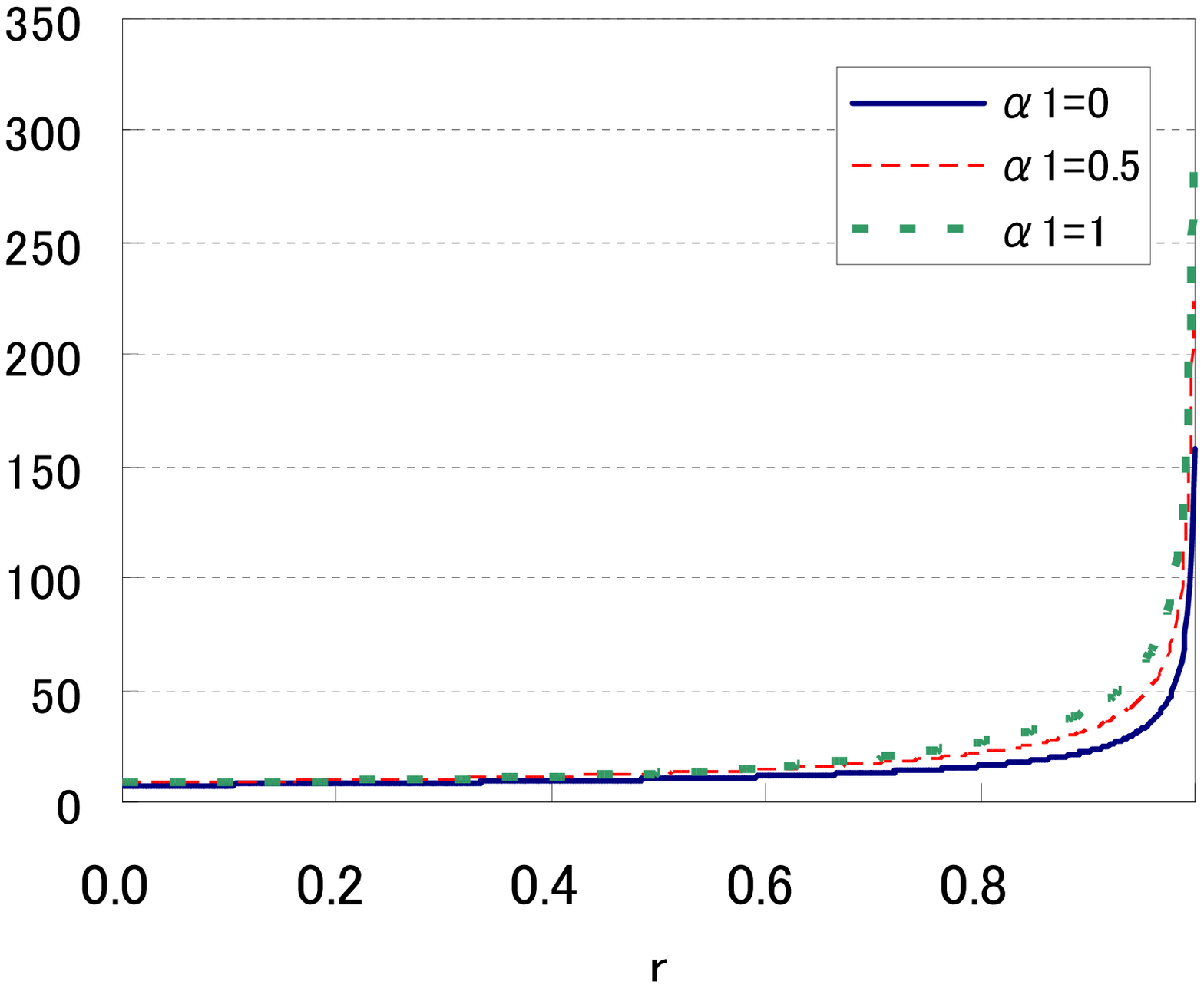}
\includegraphics[scale=0.3]{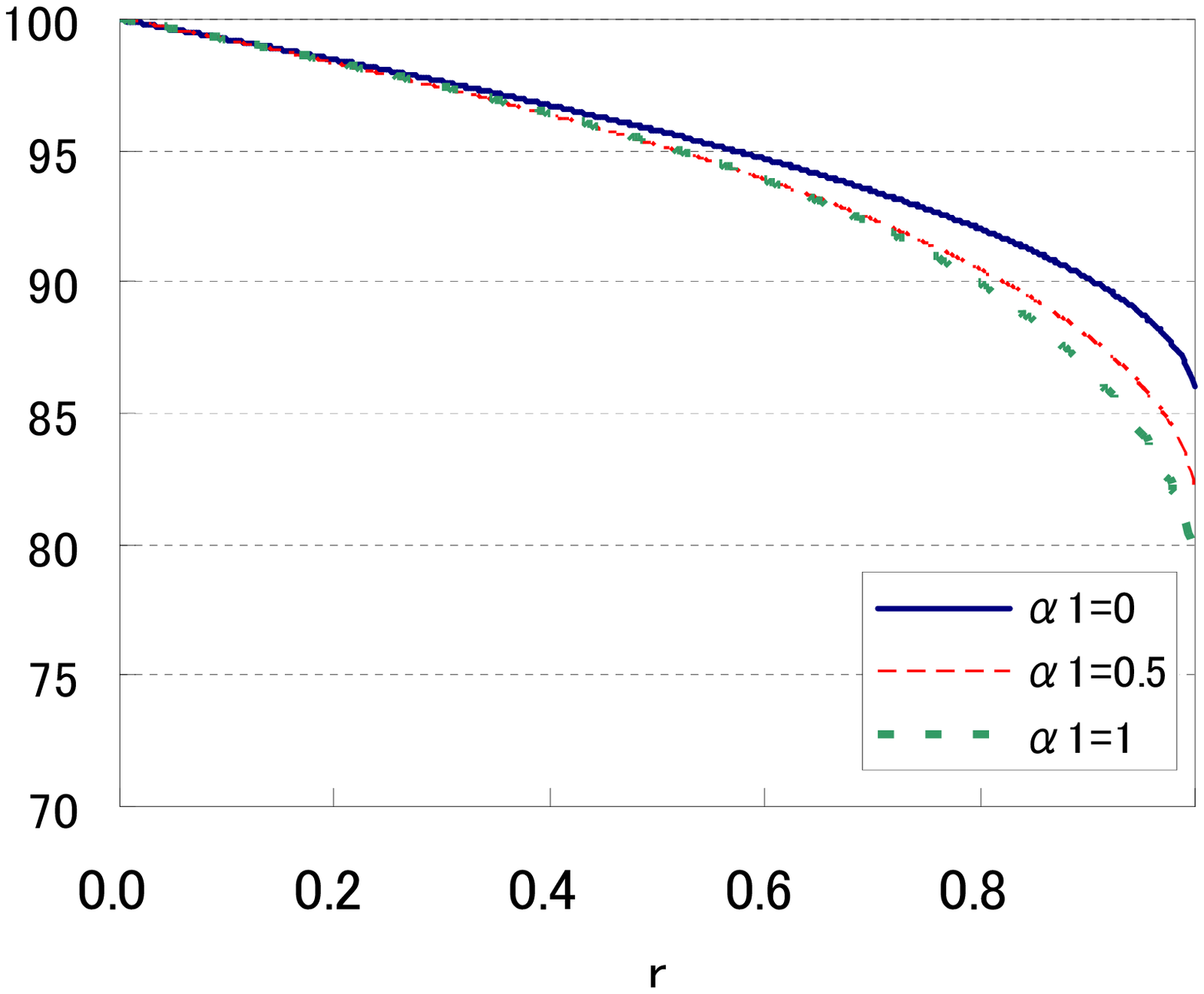}
\caption{Result for $\varphi = 100$ in the case of fixed $\tilde{\gamma }$. 
Left : The optimal strategy $\zeta _r$. 
Right : The amount of security holdings $\varphi _r$. 
}
\label{graph_phi100}
\end{center}
\end{figure}

\begin{figure}[htbp]
\begin{center}
\includegraphics[scale=0.32]{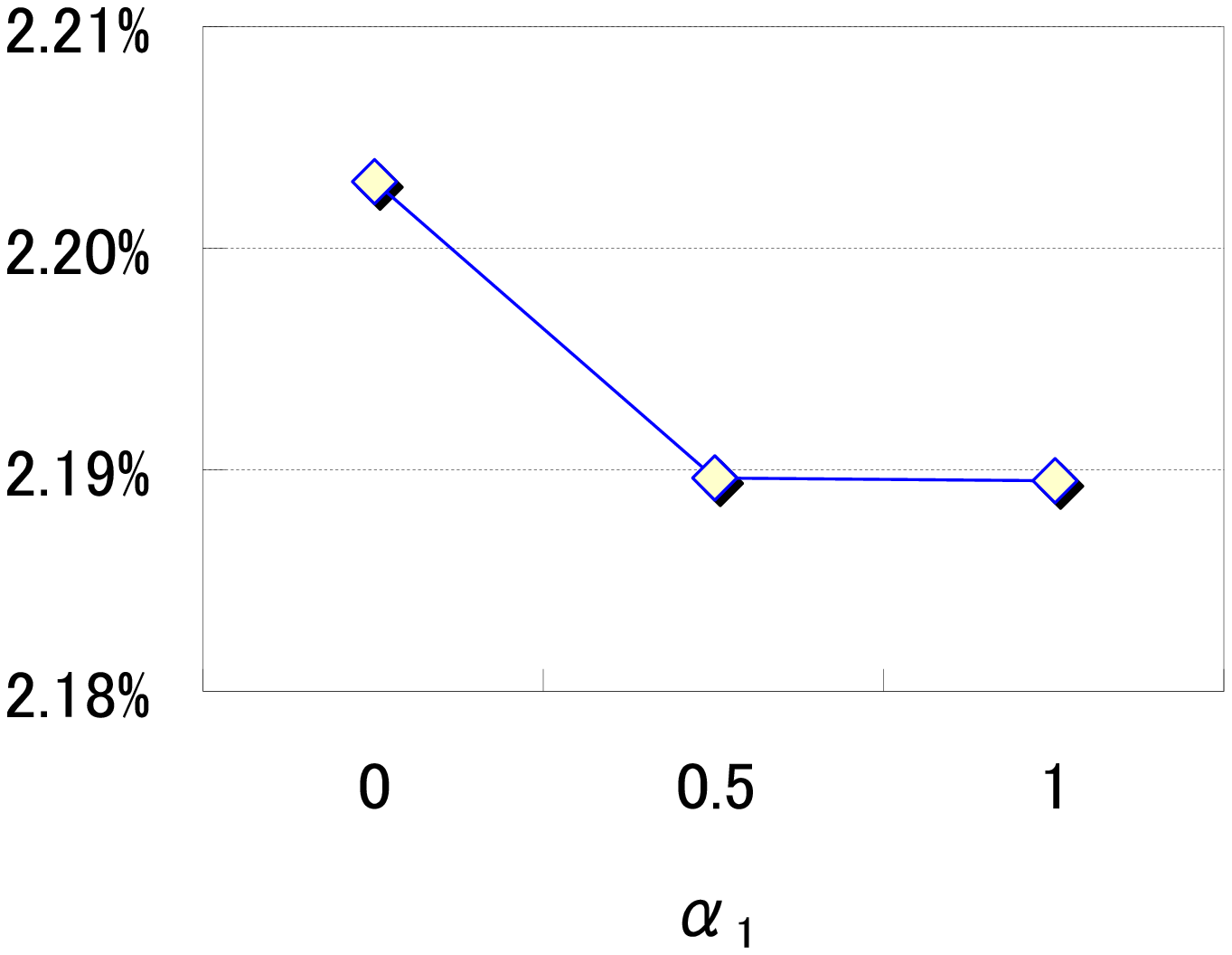}
\includegraphics[scale=0.32]{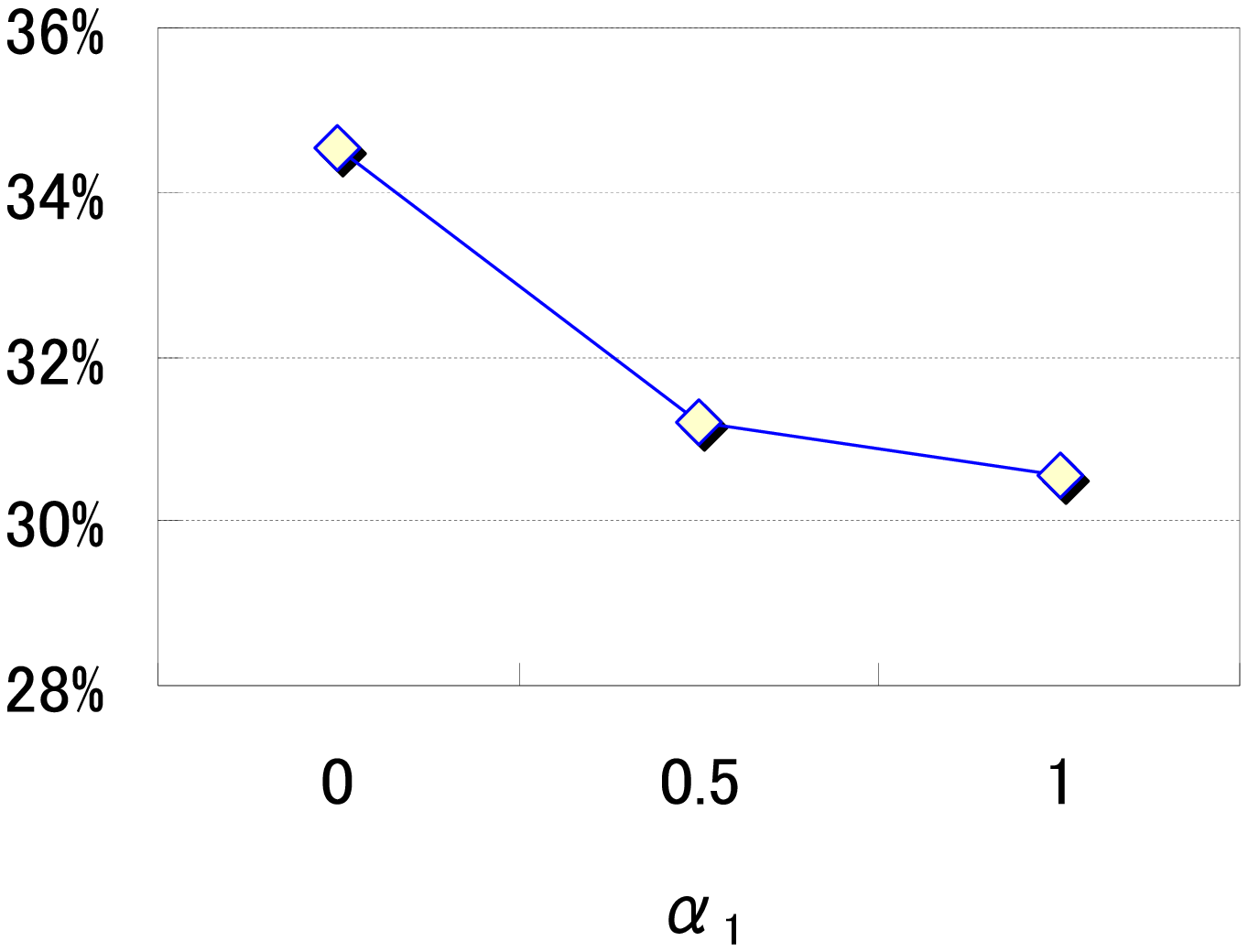}
\caption{Total MI cost $\mathrm {TC}(\varphi)$ for a risk-neutral trader. 
Left: the case of $\varphi = 1$. 
Right: the case of $\varphi = 10$. 
The horizontal axes denote the shape parameter $\alpha _1$ of the Gamma distribution.}
\label{graph_MIcost}
\end{center}
\end{figure}

\section{Concluding Remarks}\label{section_conclusion}

In this paper, 
we studied an optimal execution problem with uncertain MI by using the model derived in \cite {Ishitani-Kato_COSA1}. 
Our main results discussed in Sections \ref {section_Results} and \ref {sec_SO} are almost the same as in \cite {Kato}. 

When considering uncertainty in MI, 
there are two typical barometers of the ``level'' of MI: 
$\gamma $ and $\tilde{\gamma }$. 
By using the parameter $\gamma $, 
we can decompose MI into a deterministic part $\gamma g(\zeta _t)dt$ and 
a pure jump part $g(\zeta _t)\int _{(0, \infty )}zN(dt, dz)$. 
Then, the pure jump part can be regarded as the difference from 
the deterministic MI case studied in \cite {Kato}. 
On the other hand, 
as mentioned in Sections \ref {section_comparison} and \ref {section_examples}, 
the parameter $\tilde{\gamma }$ is important not only in martingale theory but also 
in a risk-neutral world. 
Studying $\tilde{\gamma }$ also provides some hints about actual trading practices. 
Regardless of whether we accommodate uncertainty into MI, 
it may result in an underestimate of MI for a risk-neutral trader. 

Studying the effects of uncertainty in MI in a risk-averse world is also meaningful. 
As mentioned in Section \ref {sec_SO}, when the deterministic part of the MI function is linear, 
the uncertainty in MI does not significantly influence the trader's behavior, even when the trader is risk-averse. 
In future work, we will investigate the case of nonlinear MI. 

Explicitly introducing trading volume processes is another important generalization. 
In some studies of the optimization problem of volume-weighted average price (VWAP) slippage, 
the trading volume processes are introduced as stochastic processes. 
For instance, \cite {Frei-Westray} studies a minimization problem of 
the tracking error of VWAP execution strategies (see \cite {Kato_VWAP} for a definition of VWAP execution strategies). 
In \cite {Frei-Westray}, a cumulative trading volume process is 
defined as a Gamma process. 
Moreover, \cite {Kato_VWAP} treats a generalized Almgren--Chriss model such that 
a temporary MI function depends on instantaneous trading volume processes, 
and shows that an optimal execution strategy of a risk-neutral trader is 
actually the VWAP execution strategy. 
Since a trading volume process is unobservable, 
we can regard it as a source of the uncertainty of MI functions. 
Therefore, studying the case where MI functions are affected by trading volumes is within our focus. 

Finally, in our settings the MI function is stationary in time, but in the real market 
the characteristics of MI change according to the time zone. 
Therefore, it is meaningful to study the case where the MI function is not time-homogeneous. 
This is another topic for future work.

\section{Proofs} \label{sec_proof}

We first recall some lemmas from \cite {Ishitani-Kato_COSA1}. 

\begin{lemma} \label{lemm_conti_u} 
Let $\Gamma _k$ $(k\in \Bbb {N})$ be sets, $u\in {\mathcal {C}}$, and let 
$(W^i(k,\gamma ), \varphi ^i(k,\gamma ), S^i(k,\gamma ))\in D$ 
$(\gamma \in \Gamma _k$, $k\in \Bbb {N}$, $i=1,2)$ be random variables. 
Assume that 
\begin{eqnarray*}
&&\lim _{k\rightarrow \infty }\sup _{\gamma \in \Gamma _k} 
\E [|W^1(k,\gamma )-W^2(k,\gamma )|^{m_1} + |\varphi ^1(k,\gamma )-\varphi ^2(k,\gamma )|^{m_2}\\
&&\hspace{57mm} + |S^1(k,\gamma )-S^2(k,\gamma )|^{m_3}] = 0
\end{eqnarray*}
and 
\begin{eqnarray*}
\sum ^2_{i=1}\sup _{k\in \Bbb {N}}\sup _{\gamma \in \Gamma _k}
\E [|W^i(k,\gamma )|^{m_4}+(S^i(k,\gamma ))^{m_4}] < \infty 
\end{eqnarray*}
for some $m_1, m_2, m_3 > 0$ and $m_4 > m_u$, where $m_u$ is as appeared in $(\ref {growth_C})$.
Then we have
\begin{eqnarray*}
&&\lim _{k\rightarrow \infty }\sup _{\gamma \in \Gamma _k}
\big| \E [u(W^1(k,\gamma ), \varphi ^1(k,\gamma ), S^1(k,\gamma ))] \\
&&\qquad \qquad \quad -  
\E [u(W^2(k,\gamma ), \varphi ^2(k,\gamma ), S^2(k,\gamma ))]\big| = 0 .
\end{eqnarray*}
\end{lemma}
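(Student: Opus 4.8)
The plan is to prove the lemma by a truncation argument that separates a compact region of the state space, where one exploits the uniform continuity of $u$, from its complement, where one exploits the polynomial growth bound $(\ref{growth_C})$ together with the uniform moment hypothesis. Writing $U^i(k,\gamma) := u(W^i(k,\gamma), \varphi^i(k,\gamma), S^i(k,\gamma))$ for $i = 1,2$ and noting that every $\varphi^i(k,\gamma)$ lies in $[0,\Phi_0]$ by definition of $D$ (so that only the $W$- and $S$-coordinates need truncation), it suffices, since $|\E[U^1(k,\gamma)] - \E[U^2(k,\gamma)]| \le \E[\,|U^1(k,\gamma) - U^2(k,\gamma)|\,]$, to show that $\sup_{\gamma\in\Gamma_k}\E[\,|U^1(k,\gamma) - U^2(k,\gamma)|\,] \to 0$ as $k\to\infty$.

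First I would record a uniform integrability bound. By $(\ref{growth_C})$ and the inequality $(a+b+c)^q \le 3^q(a^q+b^q+c^q)$ with $q := m_4/m_u > 1$, there is a constant $C$ depending only on $C_u$, $m_u$, $m_4$, and $\Phi_0$ such that
\[
|U^i(k,\gamma)|^{q} \le C\big(1 + |W^i(k,\gamma)|^{m_4} + (S^i(k,\gamma))^{m_4}\big),
\]
so the second hypothesis yields $M := \sup_{i}\sup_{k}\sup_{\gamma\in\Gamma_k}\E[|U^i(k,\gamma)|^{q}] < \infty$. Setting $A_R(k,\gamma) := \{\,|W^1| \vee |W^2| \vee S^1 \vee S^2 \le R\,\}$, Markov's inequality and the second hypothesis give $P(A_R(k,\gamma)^c) \le 4C'R^{-m_4}$ for a constant $C'$ independent of $k$ and $\gamma$, and then Hölder's inequality with exponents $q$ and $q/(q-1)$ gives
\[
\sup_{k}\sup_{\gamma\in\Gamma_k} \E\big[(|U^1(k,\gamma)| + |U^2(k,\gamma)|)\,1_{A_R(k,\gamma)^c}\big] \le 2M^{1/q}(4C'R^{-m_4})^{(q-1)/q} =: \varepsilon(R),
\]
with $\varepsilon(R)\downarrow 0$ as $R\uparrow\infty$.

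Then I would handle the compact part. Fixing $\eta > 0$ and choosing $R$ with $\varepsilon(R) < \eta/3$, the function $u$ is uniformly continuous on the compact set $K_R := [-R,R]\times[0,\Phi_0]\times[0,R]$, so there is $\delta > 0$ with $|u(x)-u(y)| < \eta/3$ whenever $x, y \in K_R$ and $|x-y| \le \delta$. With $B_{k,\gamma} := \{|W^1-W^2| + |\varphi^1-\varphi^2| + |S^1-S^2| > \delta\}$, Markov's inequality applied with exponents $m_1$, $m_2$, $m_3$ and the first hypothesis give $\sup_{\gamma\in\Gamma_k}P(B_{k,\gamma}) \to 0$ as $k\to\infty$. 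On $A_R(k,\gamma)\setminus B_{k,\gamma}$ both triplets lie in $K_R$ within distance $\delta$, so $|U^1-U^2| < \eta/3$ there; on $A_R(k,\gamma)\cap B_{k,\gamma}$ one has $|U^1-U^2| \le 2\sup_{K_R}u$; on $A_R(k,\gamma)^c$ one uses the bound above. Hence
\[
\E[\,|U^1(k,\gamma) - U^2(k,\gamma)|\,] \le \frac{\eta}{3} + 2\Big(\sup_{K_R}u\Big)\sup_{\gamma\in\Gamma_k}P(B_{k,\gamma}) + \varepsilon(R),
\]
and since $\sup_{K_R}u < \infty$ is a fixed constant, taking $k$ large makes the middle term $< \eta/3$; the right-hand side is then $< \eta$ uniformly in $\gamma\in\Gamma_k$, which finishes the proof as $\eta\downarrow 0$.

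I expect the first step --- making the tail estimate uniform in \emph{both} indices $k$ and $\gamma$ --- to be the only real obstacle; this is exactly what forces the strict inequality $m_4 > m_u$ (so that the Hölder exponent $q$ exceeds $1$) and the second displayed hypothesis. Once the order of choices ($R$ first, then $\delta$, then $k$) is fixed, the rest is a routine $\varepsilon$-$\delta$ argument.
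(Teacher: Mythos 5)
Your proof is correct: the reduction to $\sup_{\gamma}\E[\,|U^1-U^2|\,]$, the uniform tail bound via the growth condition $(\ref{growth_C})$ with H\"older exponent $q=m_4/m_u>1$, and the uniform continuity of $u$ on $[-R,R]\times[0,\Phi_0]\times[0,R]$ combined with Chebyshev for the bad set $B_{k,\gamma}$ fit together with the right order of quantifiers ($R$, then $\delta$, then $k$). Note that the paper itself gives no proof of this lemma -- it is recalled from the companion paper \cite{Ishitani-Kato_COSA1} -- but your truncation/uniform-integrability argument is the standard route for such a statement and matches what that reference does in substance.
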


\begin{lemma}\label{Lemma_Moment_Estimate} 
Let $Z(t; r, s) = \exp (Y(t; r, \log s))$ and 
$\hat{Z}(s) = \sup_{0\leq r\leq 1}Z(r; 0, s)$. 
Then, for each $m>0$, there is a constant 
$C_{m, K}>0$ depending only on $K$ and $m$ such that 
$E [\hat{Z}(s)^m]\leq C_{m, K} s^m$, 
where $K>0$ is a constant appearing in (\ref{Bdd_Lipschitz_Constant}).
\end{lemma}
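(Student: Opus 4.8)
The plan is to reduce the statement to a standard moment bound for the exponential of a stochastic integral with bounded coefficients. Recall that $Y(t;r,x)$ is the value at time $t$ of the solution of $dY_v=\sigma(Y_v)\,dB_v+b(Y_v)\,dv$ started from $x$ at time $r$ (the log-price dynamics when no execution occurs), so that
\[
Z(t;0,s)=s\exp\Big(\int_0^t b(Y(v;0,\log s))\,dv+\int_0^t\sigma(Y(v;0,\log s))\,dB_v\Big).
\]
Set $M_t=\int_0^t\sigma(Y(v;0,\log s))\,dB_v$. Since $|b|\le K$ by (\ref{Bdd_Lipschitz_Constant}), for every $m>0$ we have $\hat Z(s)^m\le s^m e^{mK}\sup_{0\le r\le 1}e^{mM_r}$, so it suffices to bound $\E[\sup_{0\le r\le 1}e^{mM_r}]$ by a constant depending only on $m$ and $K$; note that $s$ has already factored out multiplicatively.

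Next I would pass to Dol\'eans--Dade exponentials. Because $|\sigma|\le K$, we have $\langle M\rangle_r=\int_0^r\sigma(Y(v;0,\log s))^2\,dv\le K^2$ for $r\le 1$, so Novikov's criterion shows that $\mathcal{E}(\lambda M)_r=\exp(\lambda M_r-\tfrac{\lambda^2}{2}\langle M\rangle_r)$ is a genuine martingale on $[0,1]$ for every $\lambda\in\Bbb{R}$. Hence $e^{mM_r}\le e^{m^2K^2/2}\mathcal{E}(mM)_r$ and $\sup_{0\le r\le1}e^{mM_r}\le e^{m^2K^2/2}\sup_{0\le r\le1}\mathcal{E}(mM)_r$.

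The only delicate point is that Doob's $L^1$-maximal inequality does not by itself bound $\E[\sup_r\mathcal{E}(mM)_r]$, so I would bootstrap to $L^2$: from $\mathcal{E}(mM)_r^2\le e^{m^2K^2}\mathcal{E}(2mM)_r$ one gets $\sup_r\mathcal{E}(mM)_r\le e^{m^2K^2/2}\big(\sup_r\mathcal{E}(2mM)_r\big)^{1/2}$, Doob's $L^2$-inequality applied to the martingale $\mathcal{E}(2mM)$ gives $\E[\sup_{0\le r\le1}\mathcal{E}(2mM)_r^2]\le 4\,\E[\mathcal{E}(2mM)_1^2]$, and the same exponential manipulation yields $\E[\mathcal{E}(2mM)_1^2]\le e^{4m^2K^2}\E[\mathcal{E}(4mM)_1]=e^{4m^2K^2}$. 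Combining these estimates with the Cauchy--Schwarz (or Jensen) inequality produces a constant $C_{m,K}>0$ depending only on $m$ and $K$ such that $\E[\sup_{0\le r\le1}\mathcal{E}(mM)_r]\le C_{m,K}$, and chaining the inequalities above back to $\hat Z$ gives $\E[\hat Z(s)^m]\le C_{m,K}s^m$. The main obstacle is exactly this integrability issue; every other step is a routine use of the uniform bounds on $\sigma$ and $b$ together with standard exponential-martingale inequalities, and this is essentially the argument used in \cite{Ishitani-Kato_COSA1}.
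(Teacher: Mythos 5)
Your argument is correct: writing $Z(t;0,s)=s\exp(\int_0^t b\,dv+M_t)$ with $|b|,|\sigma|\le K$, factoring out $s^m e^{mK}$, and controlling $\E[\sup_{r\le 1}e^{mM_r}]$ via the Dol\'eans--Dade exponential together with Doob's $L^2$-maximal inequality is exactly the standard route, and every step (Novikov, the $\mathcal{E}(mM)_r^2\le e^{m^2K^2}\mathcal{E}(2mM)_r$ bootstrap, $\E[\mathcal{E}(4mM)_1]=1$) checks out with a constant depending only on $m$ and $K$. The present paper states this lemma without proof, recalling it from \cite{Ishitani-Kato_COSA1}, and your proof is essentially the argument given there.
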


\noindent
\begin{lemma} \label{Ishi_Lem}
Let $(X^{k, i}_r)_{r\in [0,1]}$, $i = 1, 2$, $ k \in \Bbb {N}$, 
be $\mathbb{R}$-valued $(\mathcal {F}_r)_r$-progressive processes satisfying 
\begin{eqnarray*}
X^{k, i}_r = x^{k, i} + \int ^r_0b(X^{k, i}_v)dv + \int ^r_0\sigma (X^{k, i}_v)dB_v + F^{k, i}_r , 
\ \ r\in [0,1],
\end{eqnarray*}
with $x^{k, i} \in \Bbb{R}$  
for $i = 1, 2$ and $k \in \Bbb {N}$, where 
$(F^{k, i}_r)_r$ are $(\mathcal {F}_r)_r$-adapted processes of bounded variation, 
and let $\Pi_k \subset [0,1]$, $k \in \Bbb {N}$, be Borel sets. 
Moreover, assume that 
\begin{description}
 \item[(i)] $x^{k, 1} - x^{k, 2}\longrightarrow 0, \ \ k\rightarrow \infty $, 
 \item[(ii)] 
$\lim_{k\to \infty} \left\{ D^k_1+\int_0^1 D^k_r dr \right\}= 0$,  
where 
\begin{eqnarray*}
D^k_r = \E \left [\sup_{v\in \Pi_k(r) }|F^{k, 1}_v - F^{k, 2}_v|\right ], \ \ 
\Pi_k(r)=([0,r]\cap \Pi_k)\cup \{r\}. 
\end{eqnarray*}
\end{description}
Then it holds that 
\begin{eqnarray*}
\E \left [\sup_{v \in \Pi_k } \left| X^{k, 1}_v - X^{k, 2}_v \right| \right ]  
\longrightarrow 0, \ \ k \rightarrow \infty . 
\end{eqnarray*}
\end{lemma}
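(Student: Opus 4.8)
\emph{Proof proposal for Lemma~\ref{Ishi_Lem}.} The plan is to run a Gronwall estimate for the difference process $\Delta_r := X^{k,1}_r - X^{k,2}_r$, being careful that the supremum lives on $\Pi_k$ rather than on all of $[0,1]$. Subtracting the two integral equations, write $\Delta_r = C_r + G^k_r$ with $G^k_r := F^{k,1}_r - F^{k,2}_r$ and
\[
C_r = (x^{k,1}-x^{k,2}) + \int_0^r\bigl(b(X^{k,1}_u)-b(X^{k,2}_u)\bigr)\,du + \int_0^r\bigl(\sigma(X^{k,1}_u)-\sigma(X^{k,2}_u)\bigr)\,dB_u,
\]
which has continuous paths. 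Two elementary remarks drive the argument. First, because $u\in\Pi_k(u)$ for every $u$, one has $|\Delta_u|\le\sup_{v\in\Pi_k(u)}|\Delta_v|$ and $|G^k_u|\le\sup_{v\in\Pi_k(u)}|G^k_v|=:\gamma^k_u$ \emph{pointwise} in $u$; this is precisely the point of adjoining $\{r\}$ to $[0,r]\cap\Pi_k$. Second, path-continuity of $C$ gives that the supremum of $|C_{\cdot}|$ over any subset of $[0,r]$ equals $C^*_r:=\sup_{0\le v\le r}|C_v|$, a non-decreasing process which, by boundedness of $\sigma$ and $b$ (together with Doob's inequality for the Brownian part), has finite expectation, bounded uniformly in $k$. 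Since $\Pi_k\subseteq\Pi_k(1)$, the two remarks give $|\Delta_v|\le C^*_r+\gamma^k_r$ for $v\in\Pi_k(r)$, hence $\E[\sup_{v\in\Pi_k}|\Delta_v|]\le\E[C^*_1]+D^k_1$; as $D^k_1\to0$ by (ii), it suffices to prove $\E[C^*_1]\to0$.

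First I would estimate $\E[C^*_r]$ for $r\in[0,1]$. The Lipschitz bound on $b$ and the first remark give $\int_0^r|b(X^{k,1}_u)-b(X^{k,2}_u)|\,du\le K\int_0^r|\Delta_u|\,du\le K\int_0^r(C^*_u+\gamma^k_u)\,du$; after taking expectations and using Tonelli this is $\le K\int_0^r(\E[C^*_u]+D^k_u)\,du$. For the stochastic integral I would use the Burkholder--Davis--Gundy inequality, the Lipschitz bound on $\sigma$, and $|\Delta_u|\le C^*_u+\gamma^k_u$ to bound
\begin{align*}
&\E\Bigl[\sup_{v\le r}\Bigl|\int_0^v\bigl(\sigma(X^{k,1}_u)-\sigma(X^{k,2}_u)\bigr)\,dB_u\Bigr|\Bigr]\\
&\qquad\le c\,\E\Bigl[\Bigl(\int_0^r(C^*_u)^2\,du\Bigr)^{1/2}\Bigr]+c\,\E\Bigl[\Bigl(\int_0^r(\gamma^k_u)^2\,du\Bigr)^{1/2}\Bigr].
\end{align*}
In the first term I would use $\bigl(\int_0^r(C^*_u)^2\,du\bigr)^{1/2}\le (C^*_r)^{1/2}\bigl(\int_0^r C^*_u\,du\bigr)^{1/2}$ followed by Young's inequality; this yields a contribution $\tfrac12\E[C^*_r]$ that is absorbed on the left, plus a term $\le c\int_0^r\E[C^*_u]\,du$. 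The second term is the genuinely new point, treated below. Collecting everything, one obtains $\E[C^*_r]\le a_k+c_K\int_0^r\E[C^*_u]\,du$ on $[0,1]$ with $a_k\to0$ (using (i) for the initial datum and (ii) for the rest), and Gronwall's lemma gives $\E[C^*_1]\le a_ke^{c_K}\to0$, completing the argument.

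It therefore remains to show $\E\bigl[\bigl(\int_0^r(\gamma^k_u)^2\,du\bigr)^{1/2}\bigr]=o(1)$. Write $\gamma^k_u=\max\bigl(\tilde\gamma^k_u,\,|G^k_u|\bigr)$ with $\tilde\gamma^k_u:=\sup_{v\in[0,u]\cap\Pi_k}|G^k_v|$. The ``grid part'' $\tilde\gamma^k_u$ is non-decreasing in $u$ and bounded by $\gamma^k_1$, so $\bigl(\int_0^r(\tilde\gamma^k_u)^2\,du\bigr)^{1/2}\le(\tilde\gamma^k_r)^{1/2}\bigl(\int_0^r\tilde\gamma^k_u\,du\bigr)^{1/2}$, whose expectation is $\le(D^k_1)^{1/2}\bigl(\int_0^1 D^k_u\,du\bigr)^{1/2}\to0$ by (ii). For the ``endpoint part'' $|G^k_u|$, using $|G^k_u|\le\gamma^k_u$ and the same Cauchy--Schwarz step bounds the expectation by $\bigl(\E[\sup_{u\le1}|G^k_u|]\bigr)^{1/2}\bigl(\int_0^1 D^k_u\,du\bigr)^{1/2}$; here --- and only here --- I would invoke the (mild, and valid in the present setting, since the $F^{k,i}$ are dominated by an integrable multiple of $L_1$ in view of \eqref{assumption_C}, cf.\ also Lemma~\ref{Lemma_Moment_Estimate}) uniform bound $\sup_k\E[\sup_{u\le1}|F^{k,1}_u-F^{k,2}_u|]<\infty$, so that this term too is $O\bigl((\int_0^1 D^k_u\,du)^{1/2}\bigr)\to0$.

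The step I expect to be the main obstacle is exactly this last one. The quadratic variation of the Brownian term forces control of $\int_0^r|\Delta_u|^2\,du$ over the \emph{whole} interval, whereas hypothesis (ii) only gives smallness of the finite-variation increments sampled along $\Pi_k$. The enlargement of $\Pi_k(u)$ by $\{u\}$ --- which makes the drift integral controllable pointwise by $C^*_u+\gamma^k_u$ --- and the systematic separation of the continuous part $C$ from the finite-variation part $G^k$ are what let the Gronwall loop close; the delicate point is converting the $L^1$-type smallness in (ii) into control of the quadratic variation, and that is where the above uniform moment bound is needed.
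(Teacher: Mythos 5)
This lemma is quoted from the companion paper \cite{Ishitani-Kato_COSA1} and is not reproved here, so I can only assess your argument on its own terms. Your architecture --- split $\Delta_r=C_r+G^k_r$ into the continuous part and the bounded-variation perturbation, reduce everything to $\E[C^*_1]\to0$, and close a Gronwall loop with the Young-inequality absorption of $\tfrac12\E[C^*_r]$ --- is sound, and you correctly locate the one delicate point. But your resolution of that point is a genuine gap. To control the Brownian bracket you need $\E[(\int_0^r(\gamma^k_u)^2\,du)^{1/2}]\to0$, and for the ``endpoint part'' $\int_0^r|G^k_u|^2\,du$ you import the hypothesis $\sup_k\E[\sup_{u\le1}|G^k_u|]<\infty$, which is not among the assumptions of the lemma and is not implied by (ii): take $\Pi_k=\{1\}$ and $G^k_u=k^2\,1_{[0,k^{-3}]}(u)$, so that $D^k_1=0$ and $\int_0^1D^k_u\,du=k^{-1}\to0$, yet $\int_0^1|G^k_u|^2\,du=k\to\infty$; your bound on the BDG term then diverges even though the conclusion of the lemma still holds in this example. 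Nor is the extra hypothesis as ``mild'' in the intended application as you claim: in Proposition \ref{thm2_pro3} the process $F^{(\delta),2}$ involves $g(\tilde\zeta_v)$ with $\tilde\zeta_v$ of order $\psi/\delta$ on the random set $\{L_{v-}\le\gamma\delta\}$, so a uniform bound on $\E[\sup_{u\le1}|F^{(\delta),1}_u-F^{(\delta),2}_u|]$ requires controlling $\delta^{-1}\E[\Delta L_{\tilde u(\delta)}]$, the expected overshoot jump of $L$ over the level $\gamma\delta$ --- doable via the compensation formula and (\ref{assumption_C}), but a separate argument you would have to supply.

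The gap can be closed with no extra hypothesis by exploiting the boundedness of $\sigma$ in (\ref{Bdd_Lipschitz_Constant}), which your proof never uses. Since $|\sigma(x)-\sigma(y)|\le\min(2K,\,K|x-y|)$, one has $|\sigma(X^{k,1}_u)-\sigma(X^{k,2}_u)|^2\le K^2\min(|\Delta_u|^2,\,2|\Delta_u|)$, and the elementary inequality $\min\bigl((a+b)^2,\,2(a+b)\bigr)\le 2a^2+4b$ for $a,b\ge0$, applied with $a=C^*_u$ and $b=\gamma^k_u$, gives
\[
\int_0^r|\sigma(X^{k,1}_u)-\sigma(X^{k,2}_u)|^2\,du\;\le\;2K^2\int_0^r(C^*_u)^2\,du+4K^2\int_0^r\gamma^k_u\,du .
\]
The first term is handled exactly by your Young/absorption step; the second contributes $\E[(\int_0^r\gamma^k_u\,du)^{1/2}]\le(\int_0^1D^k_u\,du)^{1/2}\to0$ by Jensen --- first power in $\gamma^k_u$, which is precisely what hypothesis (ii) controls. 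With that substitution the remainder of your proof goes through verbatim.
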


\begin{lemma}\label{lem_comparison} \ 
Let $t\in [0,1]$, $\varphi \geq 0$, $x\in \Bbb {R}$, 
$(\zeta _r)_{0\leq r\leq t}, (\zeta '_r)_{0\leq r\leq t}
\in \mathcal {A}_t(\varphi )$ and suppose 
$(X_r)_{0\leq r\leq t}$ $($resp., $(X'_r)_{0\leq r\leq t}$$)$ 
is given by $(\ref {SDE_X_g})$ with $(\zeta _r)_r$ 
$($resp., $(\zeta '_r)_r$$)$ and $X_0 = x \leq X'_0$. 
Suppose $\zeta _r\leq \zeta '_r$ for any $r\in [0,t]$ almost surely. 
Then $X_r\geq X'_r$ for any $r\in [0,t]$ almost surely. 
\end{lemma}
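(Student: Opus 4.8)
The plan is to establish the ordering by a one-dimensional comparison argument: apply Tanaka's (Meyer--Itô's) formula to the negative part of the difference of the two solutions, exploiting that both the shape function $g$ and the driver $(L_r)_r$ are non-decreasing. Put $\Delta_r := X_r - X'_r$, so that $\Delta_0 \ge 0$ by hypothesis and, subtracting the two instances of $(\ref{SDE_X_g})$,
\begin{align*}
\Delta_r = \Delta_0 + \int_0^r \big(\sigma(X_v) - \sigma(X'_v)\big)dB_v + \int_0^r \big(b(X_v) - b(X'_v)\big)dv + \int_0^r \big(g(\zeta'_v) - g(\zeta_v)\big)dL_v .
\end{align*}
Since $h \ge 0$, the function $g$ is non-decreasing, so $\zeta_v \le \zeta'_v$ gives $g(\zeta'_v) - g(\zeta_v) \ge 0$; and $(L_r)_r$, being a subordinator, has non-decreasing paths. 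Hence the last integral above is a non-decreasing process vanishing at $r=0$, and every jump of $\Delta$ is non-negative: $\Delta_v - \Delta_{v-} = (g(\zeta'_v) - g(\zeta_v))\,\Delta L_v \ge 0$. This is the structural point that lets the argument work even when $\nu$ has infinite mass (so that $(L_r)_r$ has infinitely many, possibly accumulating, small jumps): one never needs to patch together inter-jump intervals.

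Next I would expand $\Delta_r^-$ by the Meyer--Itô formula; all the terms below are integrable, since $b$ and $\sigma$ are bounded, $g(\zeta_v)$ and $g(\zeta'_v)$ are bounded (because $\zeta,\zeta'$ are), and $\E[L_1] < \infty$ by $(\ref{assumption_C})$. In that expansion: the net jump contribution equals $\sum_{v \le r}\big(\Delta_v^- - \Delta_{v-}^-\big) \le 0$, since $\Delta$ has only non-negative jumps; the continuous local-martingale term $-\int_0^r 1_{\{\Delta_{v-} \le 0\}}\big(\sigma(X_v) - \sigma(X'_v)\big)dB_v$ is a genuine martingale (its quadratic variation is bounded, $\sigma$ being bounded) and so has zero expectation; the continuous finite-variation contribution, on the set $\{\Delta_v \le 0\}$, is bounded above using $\gamma\big(g(\zeta'_v) - g(\zeta_v)\big) \ge 0$ (which enters with a helpful sign and may be discarded) together with $|b(X_v) - b(X'_v)| \le K|\Delta_v|$, giving at most $K\int_0^r \Delta_v^-\,dv$; and the semimartingale local time of $\Delta$ at $0$ vanishes, because $d\langle \Delta^c\rangle_v = \big(\sigma(X_v) - \sigma(X'_v)\big)^2 dv \le K^2\Delta_v^2\,dv$, which is the Yamada--Watanabe condition $(\rho(u) = u)$. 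Taking expectations therefore gives
\begin{align*}
\E[\Delta_r^-] \le \E[\Delta_0^-] + K\int_0^r \E[\Delta_v^-]\,dv = K\int_0^r \E[\Delta_v^-]\,dv ,
\end{align*}
whence $\E[\Delta_r^-] = 0$ for all $r \in [0,t]$ by Gronwall's inequality. Thus $X_r \ge X'_r$ almost surely for each fixed $r$, and then simultaneously for all $r \in [0,t]$ by the right-continuity of $\Delta$.

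The main obstacle is the jump term $-g(\zeta_r)\,dL_r$: when $\nu$ is infinite there is no finite collection of inter-jump intervals on which to run the classical Brownian comparison theorem. The Meyer--Itô route circumvents this entirely, the only input being that monotonicity of $g$ and of $(L_r)_r$ forces every jump of the difference process to be non-negative, so that the jump terms in the expansion of $\Delta_r^-$ carry a favourable sign. The one remaining technical point — that the local time of $\Delta$ at $0$ vanishes — is classical and uses nothing beyond the Lipschitz continuity of $\sigma$.
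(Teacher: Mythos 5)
The paper itself does not prove Lemma \ref{lem_comparison} here --- it is one of the lemmas recalled from \cite{Ishitani-Kato_COSA1} --- so your argument has to be judged on its own merits. Its analytic core is sound and is a genuinely clean way to handle the infinite-activity case: with $\Delta_r=X_r-X'_r$, the monotonicity of $g$ and of $(L_r)_r$ makes $\int_0^r(g(\zeta'_v)-g(\zeta_v))\,dL_v$ non-decreasing, so every jump of $\Delta$ is non-negative and the jump contribution to the Meyer--It\^o expansion of $\Delta_r^-$ collapses to $\sum_{v\le r}(\Delta_v^--\Delta_{v-}^-)\le 0$; the stochastic-integral term is a true martingale because $\sigma$ is bounded; the drift term is dominated by $K\int_0^r\Delta_v^-\,dv$ (this uses that $\Delta_{v-}=\Delta_v$ for Lebesgue-a.e.\ $v$, a point worth stating); and the local time of $\Delta$ at $0$ vanishes by the occupation-density estimate $d\langle\Delta^c\rangle_v\le K^2\Delta_v^2\,dv$. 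Integrability of each term follows from (\ref{assumption_C}) together with the boundedness of $b$, $\sigma$, $g(\|\zeta\|_\infty)$ and $g(\|\zeta'\|_\infty)$, so Gronwall applies and yields $\E[\Delta_r^-]=0$ for all $r$, \emph{provided} $\E[\Delta_0^-]=0$.

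That proviso is the one genuine problem. You assert ``$\Delta_0\ge 0$ by hypothesis,'' but the hypothesis as printed is $X_0=x\le X'_0$, which gives $\Delta_0\le 0$. Note that with the printed inequality the lemma is in fact false: take $\sigma\equiv b\equiv 0$, $\zeta\equiv\zeta'$ and $X_0<X'_0$; then $X_r-X'_r=X_0-X'_0<0$ for every $r$, contradicting the asserted conclusion $X_r\ge X'_r$. So the ``$\le$'' must be a typo for ``$\ge$'' --- which is exactly the direction a comparison lemma of this shape needs (larger initial log-price and smaller selling speed yield a larger log-price), and it is the direction your Gronwall step requires. Your argument is therefore a correct proof of the intended statement, but as written it silently replaces the stated hypothesis by its opposite; you should flag the sign explicitly rather than attribute $\Delta_0\ge0$ to the hypothesis, since otherwise the step $\E[\Delta_0^-]=0$ is unjustified relative to the statement you set out to prove. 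Everything downstream of that single sign is correct.
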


\subsection{Proof of Theorem \ref {conti_random} }\label{subsec_conti}
Continuity in $(w, \varphi , s)$ can be easily proved in the same manner as in the previous study \cite {Kato}, 
so we focus on the continuity in $t$ (uniformly on any compact subset of $D$). 

First of all, we prove the following lemma: 
\begin{lemma}\label{eval_0}
Assume $h(\infty ) = \infty $. Then, for any $t\in [0,1]$, $\varphi \in [0, \Phi_0]$, 
and $(\zeta _r)_{0\leq r\leq t}\in \mathcal {A}_t(\varphi )$,
\begin{eqnarray}\label{temp_lemma_1}
\E \Big[ \int ^r_0\exp \Big( -\int ^v_0g(\zeta _{v'})dL_{v'}\Big) \zeta _v dv \Big]
\leq \phi (r), \ \ r\in [0,t], 
\end{eqnarray}
where $\phi (r),\ r\in (0,1]$ is a continuous function depending only on function 
$h(\zeta )$ and $\Phi _0$, 
such that $\lim _{r\rightarrow 0}\phi (r) = 0$.  
\end{lemma}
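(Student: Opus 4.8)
The plan is to establish an upper bound on the integrand that holds \emph{pathwise} in $(\zeta_r)_r$ and then take expectations; the essential input is that the hypothesis $h(\infty)=\infty$ makes $g$ grow superlinearly, which is exactly what prevents a trader from liquidating a non-negligible quantity in infinitesimal time however large $||\zeta||_\infty$ may be. First I would exploit that $(L_r)_r$ is a subordinator with drift $\gamma\ge0$: decomposing $L_v=\gamma v+(\text{non-decreasing jump part})$ yields, $P$-a.s., $\int_0^v g(\zeta_{v'})\,dL_{v'}\ge\gamma\int_0^v g(\zeta_{v'})\,dv'$ for every $v$, hence $\exp\bigl(-\int_0^v g(\zeta_{v'})\,dL_{v'}\bigr)\le\exp\bigl(-\gamma\int_0^v g(\zeta_{v'})\,dv'\bigr)$. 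Thus it suffices to bound the deterministic quantity $\int_0^r\exp\bigl(-\gamma\int_0^v g(\zeta_{v'})\,dv'\bigr)\zeta_v\,dv$ uniformly over non-negative measurable $(\zeta_v)_{v\in[0,r]}$ with $\int_0^r\zeta_v\,dv\le\Phi_0$.

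For that deterministic bound I would fix a threshold $M>0$ (large enough that $g(M)>0$, which is possible since $h(\infty)=\infty$) and split $[0,r]$ into $\{v:\zeta_v\le M\}$ and $\{v:\zeta_v>M\}$. On the first set the exponential is $\le1$, so its contribution is $\le\int_{\{\zeta_v\le M\}}\zeta_v\,dv\le Mr$. On the second set I would use that $g(\zeta)/\zeta=\frac1\zeta\int_0^\zeta h(u)\,du$ is non-decreasing in $\zeta$ with limit $h(\infty)=\infty$, so that $\zeta_v\le\frac{M}{g(M)}g(\zeta_v)$ whenever $\zeta_v>M$; since $\frac{d}{dv}\int_0^v g(\zeta_{v'})\,dv'=g(\zeta_v)$, this gives
\[
\int_{\{\zeta_v>M\}}\exp\Bigl(-\gamma\int_0^v g(\zeta_{v'})\,dv'\Bigr)\zeta_v\,dv
\le\frac{M}{g(M)}\int_0^r\exp\Bigl(-\gamma\int_0^v g(\zeta_{v'})\,dv'\Bigr)g(\zeta_v)\,dv
\le\frac{M}{\gamma\,g(M)}.
\]
Hence the deterministic quantity is $\le Mr+\frac{M}{\gamma g(M)}$ for every such $M$, and $\le\Phi_0$ trivially. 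I would then \emph{define} $\phi(r):=\min\bigl\{\Phi_0,\ \inf_{M}\bigl(Mr+\frac{M}{\gamma g(M)}\bigr)\bigr\}$, the infimum taken over $M$ with $g(M)>0$. One checks that $\phi$ is non-decreasing, finite on $(0,\infty)$, and, being the minimum of $\Phi_0$ and a concave function of $r$, continuous on $(0,1]$. Since $\phi(r)\le Mr+\frac{M}{\gamma g(M)}$ for each such $M$, we get $\limsup_{r\downarrow0}\phi(r)\le\frac{M}{\gamma g(M)}$, whence $\lim_{r\downarrow0}\phi(r)=0$ on letting $M\to\infty$ and using $g(M)/M\to\infty$. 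Taking expectations in the pathwise bound proves $(\ref{temp_lemma_1})$, with $\phi$ depending only on $h$, $\Phi_0$ (and the drift $\gamma$).

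The main obstacle is the uniformity over all of $\mathcal{A}_t(\varphi)$: because $||\zeta||_\infty$ may be arbitrarily large, the naive estimate $\int_0^r\zeta_v\,dv\le||\zeta||_\infty r$ is of no use, and the crux is to convert ``large speed'' into ``large impact'' via the superlinearity of $g$; this is exactly where $h(\infty)=\infty$ enters, and why there is no counterpart when $h(\infty)<\infty$. When $\gamma>0$ the pathwise comparison above suffices. In the degenerate case $\gamma=0$ the exponential need not decay on the event that $(L_r)_r$ stays small on $[0,r]$, so one has to invoke the standing assumptions on $(L_r)_r$ inherited from \cite{Ishitani-Kato_COSA1} to control the jump contribution; I expect this to be the only genuinely delicate point.
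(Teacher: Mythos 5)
Your argument for the case $\gamma>0$ is correct and takes a genuinely different route from the paper's. The pathwise reduction $\int_0^v g(\zeta_{v'})\,dL_{v'}\ge\gamma\int_0^v g(\zeta_{v'})\,dv'$, the split at a threshold $M$, the monotonicity of $g(\zeta)/\zeta$ (so that $\zeta_v\le \frac{M}{g(M)}g(\zeta_v)$ on $\{\zeta_v>M\}$), and the bound $\int_0^r e^{-\gamma A_v}\,dA_v\le 1/\gamma$ with $A_v=\int_0^v g(\zeta_{v'})\,dv'$ together give the clean deterministic estimate $Mr+\frac{M}{\gamma g(M)}$, and $g(M)/M\ge\frac12 h(M/2)\to\infty$ is exactly where $h(\infty)=\infty$ enters. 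This is simpler than the paper's argument, which never reduces to a deterministic problem: it keeps $\pi_v=\int_0^v g(\zeta_{v'})\,dL_{v'}$, converts $dv$ into $dL_v/\tilde{\gamma}$ via the compensation formula, truncates at the stopping time $\tau_R=\inf\{v:\pi_v>R\}\wedge r$ (beyond which $e^{-\pi_{v-}}\le e^{-R}$), and applies Jensen's inequality for the convex $g$ under the probability measure $(\tilde{\gamma}r)^{-1}dL_v(\omega)P(d\omega)$, yielding $r\,g^{-1}(R/(\tilde{\gamma}r))+\Phi_0e^{-R}$ with an $r$-dependent choice of $R$. (A cosmetic point common to both constructions: the resulting $\phi$ also depends on the L\'evy data of $(L_r)_r$, not only on $h$ and $\Phi_0$ as the statement asserts.)

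The genuine gap is the case $\gamma=0$, which the lemma does not exclude and which you explicitly leave open. There your pathwise bound degenerates to $\exp(-\int_0^v g\,dL)\le 1$, and no patch within the same framework can work: the decay of the exponential must then come entirely from the jumps of $(L_r)_r$, a stochastic effect that is invisible after your reduction to a deterministic control problem. Indeed, if $\gamma=0$ and $\nu$ is a finite measure, then $L$ vanishes identically on $[0,r]$ with probability $e^{-\nu((0,\infty))r}$, so no $\omega$-by-$\omega$ estimate of the kind you use can produce a bound tending to $0$; whatever bound is available must average over jump scenarios. This is not a peripheral case but the reason the paper's proof is organized the way it is: the drift $\gamma$ is replaced throughout by the mean rate $\tilde{\gamma}=\E[L_1]=\gamma+\int_{(0,\infty)}z\,\nu(dz)$, via the identity $\E[\int_0^r e^{-\pi_{v-}}\zeta_v\,dv]=\tilde{\gamma}^{-1}\E[\int_0^r e^{-\pi_{v-}}\zeta_v\,dL_v]$, and the stopping-time/Jensen argument sketched above uses only $\tilde{\gamma}>0$, never $\gamma>0$. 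To complete your proof you would have to supply this missing half, and it is the harder half.
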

\begin{proof} [Proof of Lemma \ref {eval_0}]
We may assume that $\tilde{\gamma } > 0$. 
Let $\pi _r = \int ^r_0g(\zeta _v)dL_v$ and 
$\tau _R = \inf \{ v\in [0,t]\ ; \ \pi _v>R \} \wedge r$ 
for $r\in (0,t]$ and $R>0$. Since $(\pi _v)_v$ is nondecreasing and
$(\exp (-\pi _{v-}) \zeta _v)_v$ is left-continuous, we have that
\begin{align}
\E\Big[ \int ^r_0\exp ( -\pi _v) \zeta _v dv \Big] &\leq
\E\Big[ \int ^r_0\exp ( -\pi _{v-}) \zeta _v dv \Big] \nonumber 
=\frac{1}{\tilde{\gamma}}\E\Big[ \int ^r_0\exp ( -\pi _{v-}) \zeta _v dL_v \Big] \\
&\leq \frac{1}{\tilde{\gamma}}\E\Big[ \int ^{(\tau _R+\varepsilon)\wedge r}_0 \zeta _v dL_v \Big] 
+\frac{e^{-R}}{\tilde{\gamma}} \E\Big[ \int ^r_{(\tau _R+\varepsilon)\wedge r}\zeta _v dL_v \Big] 
\label{Ineq_E_int_0_to_r_exp_mPIv_zeta_v_dv}
\end{align}
holds for $r\in (0,t]$, $R>0$ and $\varepsilon >0$. 
Using the left-continuity of $(\zeta _v)_v$, we obtain
\begin{align*}
\frac{e^{-R}}{\tilde{\gamma}} \E\Big[ \int ^r_{(\tau _R+\varepsilon)\wedge r}\zeta _v dL_v \Big] 
\leq \frac{e^{-R}}{\tilde{\gamma}} \E\Big[ \int ^r_{0}\zeta _v dL_v \Big] 
=e^{-R}\int ^r_{0}\E [ \zeta _v ] dv  \leq \Phi _0e^{-R}. 
\end{align*}
The first term on the right side of (\ref {Ineq_E_int_0_to_r_exp_mPIv_zeta_v_dv}) is rewritten as 
\begin{align}\label{Ineq_E_int_0_to_tauR_zeta_v_dv}
\frac{1}{\tilde{\gamma}}\E\Big[ \int ^{(\tau _R+\varepsilon)\wedge r}_0 \zeta _v dL_v \Big] 
=r \E\Big[ \int^r_0 \zeta _v 1_{[0, \tau _R+\varepsilon]}(v) \frac{dL_v}{\tilde{\gamma}r}\Big]. 
\end{align}
Since $g(\zeta )$ is convex and 
$(\tilde{\gamma} r)^{-1}dL_v(\omega)P(d\omega)$ 
is a probability measure on \\
$([0, r]\times \Omega, \mathcal{B}([0,r])\otimes \mathcal{F})$, 
we apply the Jensen inequality to obtain 
\begin{align*}
& g\Big( \E\Big[ \int^r_0 \zeta _v 1_{[0, \tau _R+\varepsilon]}(v) \frac{dL_v}{\tilde{\gamma}r}\Big] \Big)
\leq \E\Big[ \int^r_0 g(\zeta _v 1_{[0, \tau _R+\varepsilon]}(v)) \frac{dL_v}{\tilde{\gamma}r}\Big]
=\frac{\E [ \pi_{(\tau _R+\varepsilon)\wedge r}]}{\tilde{\gamma}r} .
\end{align*}
Combining this with (\ref{Ineq_E_int_0_to_tauR_zeta_v_dv}) 
we get 
\begin{align*}
&\frac{1}{\tilde{\gamma}}\E\Big[ \int ^{(\tau _R+\varepsilon)\wedge r}_0 \zeta _v dL_v \Big] 
\leq r g^{-1}\left( \frac{\E [ \pi_{(\tau _R+\varepsilon)\wedge r}]}{\tilde{\gamma}r} \right) ,
\end{align*}
where $g^{-1}(y) := \sup \{ \zeta \in [0,\infty ) \ ; \ g(\zeta ) = y \} $, $y\geq 0$. 
Since $\big( \int^v_0 \zeta_{v'}dL_{v'} \big)_v$ and $(\pi_v)_v$ are right-continuous, 
and $g^{-1}(y)$ is a continuous function on $y\in [0, \infty)$, 
we have that
\begin{align*}
&\frac{1}{\tilde{\gamma}}\E\Big[ \int ^{\tau _R}_0 \zeta _v dL_v \Big] 
\leq 
\lim_{\varepsilon \to 0}
r g^{-1}\left( \frac{\E [ \pi_{(\tau _R+\varepsilon)\wedge r}]}{\tilde{\gamma}r} \right)
=r g^{-1}\left( \frac{\E [ \pi_{\tau _R}]}{\tilde{\gamma}r} \right) 
\leq r g^{-1}\left( \frac{R}{\tilde{\gamma}r} \right) . 
\end{align*}
Summarizing the above arguments, we arrive at 
\begin{eqnarray*}
\E \left[ \int ^r_0\exp ( -\pi _v) \zeta _v dv\right]  \leq 
r g^{-1}\left( \frac{R}{\tilde{\gamma}r}\right) + \Phi _0e^{-R}. 
\end{eqnarray*}
Therefore, if we can find a positive function $R(r)$ that satisfies 
\begin{eqnarray}\label{cond_Rr}
R(r) \longrightarrow \infty \ \ \mathrm {and} \ \ 
r g^{-1}\left (\frac{R(r)}{\tilde{\gamma}r}\right ) \longrightarrow 0\ \ \mathrm {as} \ \ r\rightarrow 0, 
\end{eqnarray}
we complete the proof of (\ref {temp_lemma_1}). 
To construct such an $R(r)$, mimicking the proof of Lemma B.12 in \cite {Kato}, we define 
\begin{eqnarray*}
R(r) = \tilde{\gamma} r g(M(r)), \ \ 
M(r) = f^{-1}\left( \frac{1}{r}\right), \ \ 
f(\zeta ) = \zeta \sqrt{h\left( \frac{\zeta }{2}\right) }, \ \ r > 0, 
\end{eqnarray*}
where the inverse function $f^{-1}(y)$ is defined in the same manner as $g^{-1}(y)$. 
We can easily verify (\ref {cond_Rr}) by the same arguments as in \cite {Kato}. 
\end{proof}

The following proposition can be proved by the same proof as Theorem 3.1(ii) in \cite {Kato} 
in combination with Lemma \ref{eval_0} and Proposition \ref {th_semi}. 
\begin{proposition}
Assume $h(\infty)=\infty$. Then for any compact set $E\subset D$, 
\begin{align*}
\lim _{t\downarrow 0} \sup _{(w,\varphi ,s)\in E} 
\vert  V_{t}(w,\varphi ,s;u)-u(w, \varphi ,s)\vert = 0.
\end{align*}
\end{proposition}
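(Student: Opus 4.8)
The plan is to establish the claim in two halves, $\liminf_{t\downarrow 0}\inf_{(w,\varphi,s)\in E}\bigl(V_t(w,\varphi,s;u)-u(w,\varphi,s)\bigr)\ge 0$ and $\limsup_{t\downarrow 0}\sup_{(w,\varphi,s)\in E}\bigl(V_t(w,\varphi,s;u)-u(w,\varphi,s)\bigr)\le 0$, following the line of argument of Theorem 3.1(ii) in \cite{Kato} with Lemma \ref{eval_0} playing the role of its deterministic counterpart. I would first dispose of the lower bound, which is routine: taking the admissible strategy $\zeta\equiv 0$ freezes $(W_r,\varphi_r)\equiv(w,\varphi)$ and leaves $S_t=\exp(X_t)$ driven only by the bounded-coefficient diffusion, so $S_t\to s$ in probability as $t\downarrow 0$ with moments controlled by Lemma \ref{Lemma_Moment_Estimate}; the growth condition (\ref{growth_C}) then makes $\{u(w,\varphi,S_t)\}_{t}$ uniformly integrable, and uniform continuity of $u$ on a compact neighbourhood of $E$ gives $\E[u(w,\varphi,S_t)]\to u(w,\varphi,s)$ uniformly on $E$; since $V_t(w,\varphi,s;u)\ge\E[u(w,\varphi,S_t)]$, the lower bound follows.

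For the upper bound, fix $t\in(0,1]$, $(w,\varphi,s)\in E$ and an arbitrary $\zeta\in\mathcal{A}_t(\varphi)$, and write $X_v=\log s+M_v+A_v-\pi_v$ with $M_v=\int_0^v\sigma(X_r)\,dB_r$, $A_v=\int_0^v b(X_r)\,dr$ and $\pi_v=\int_0^v g(\zeta_{v'})\,dL_{v'}$, so that $S_v=s\,e^{M_v+A_v-\pi_v}\le s\,e^{\Xi_t}e^{-\pi_v}$, where $\Xi_t:=\sup_{0\le v\le t}|M_v|+Kt$. Using $|\sigma|,|b|\le K$, Doob's maximal inequality for the submartingales $e^{\pm\lambda M_v}$, and the supermartingale property of $e^{\lambda M_v-\lambda^2\langle M\rangle_v/2}$, one obtains uniform-in-$\zeta$ bounds $\E[e^{p\Xi_t}]\le C_p$ for every $p>0$ and $\E[\Xi_t^2]\to0$ as $t\downarrow0$. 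Monotonicity of $u$ then yields $u(W_t,\varphi_t,S_t)\le u(W_t,\varphi,s\,e^{\Xi_t})$, and since $W_t=w+\int_0^t\zeta_v S_v\,dv\le w+s\,e^{\Xi_t}\Delta_t$ with $\Delta_t:=\int_0^t\zeta_v e^{-\pi_v}\,dv$, also $u(W_t,\varphi_t,S_t)\le u(w+s\,e^{\Xi_t}\Delta_t,\varphi,s\,e^{\Xi_t})$. The decisive input is Lemma \ref{eval_0}, which gives $\E[\Delta_t]\le\phi(t)\to0$; combined with the deterministic bound $\Delta_t\le\varphi\le\Phi_0$ this also forces $\E[\Delta_t^2]\le\Phi_0\phi(t)\to0$, all uniformly in $\zeta$ and in $(w,\varphi,s)\in E$.

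It then remains to show $\E[u(w+s\,e^{\Xi_t}\Delta_t,\varphi,s\,e^{\Xi_t})]\le u(w,\varphi,s)+\varepsilon_t$ with $\varepsilon_t\to0$ uniformly on $E$. I would fix a small $\eta>0$ and split the expectation according to the event $\{s\,e^{\Xi_t}\Delta_t\le\eta,\ \Xi_t\le\eta\}$ and its complement: on the good event the integrand is at most $u(w+\eta,\varphi,s\,e^{\eta})$, which tends to $u(w,\varphi,s)$ uniformly on $E$ as $\eta\downarrow0$ by uniform continuity of $u$; the complementary event has probability $O\!\left(\eta^{-1}\phi(t)^{1/2}+\eta^{-2}\E[\Xi_t^2]\right)\to0$, and its contribution is dominated via Cauchy--Schwarz and H\"older, using the growth bound (\ref{growth_C}) and the uniform moment estimates on $\Xi_t$, $\Delta_t$ and $W_t$. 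Taking the supremum over $\zeta$, then over $E$, and finally letting $t\downarrow0$ followed by $\eta\downarrow0$ gives the upper bound. Proposition \ref{th_semi} enters, as in \cite{Kato}, to patch this limiting passage together: the identity $V_{t+s}u=Q_t(Q_s u)$ combined with the elementary estimate $Q_t v_1-Q_t v_2\le\sup(v_1-v_2)$ makes $r\mapsto\sup_E(V_r u-u)$ essentially subadditive, which is what propagates the bound to all small $t$. I expect the main difficulty to be purely in the bookkeeping of uniformity (in $\zeta$ and in $(w,\varphi,s)\in E$): one must carry the smallness supplied by Lemma \ref{eval_0} through the nonlinearity and polynomial growth of $u$ while simultaneously controlling the diffusive price fluctuation $\Xi_t$, so that neither effect destroys the other.
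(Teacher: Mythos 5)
Your argument is correct and is essentially the proof the paper intends: the paper disposes of this proposition by citing the proof of Theorem 3.1(ii) of \cite{Kato} together with Lemma \ref{eval_0}, and your reconstruction — lower bound via $\zeta\equiv 0$, upper bound via the pathwise domination $S_t\le s\,e^{\Xi_t}$, $W_t\le w+s\,e^{\Xi_t}\Delta_t$ with Lemma \ref{eval_0} supplying $\E[\Delta_t]\le\phi(t)\to 0$ uniformly in $\zeta$ and in $E$, followed by a uniform-integrability splitting using the growth bound (\ref{growth_C}) — is exactly that argument. The only cosmetic point is that your invocation of Proposition \ref{th_semi} is not actually needed once the direct estimate holds for all small $t$, but this does not affect correctness.
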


Next we consider the case where $h(\infty)<\infty$. 
Hereinafter, for each $(w, \varphi, s) \in D$ and $(\zeta _r)_r\in \mathcal {A}_t(\varphi )$, 
we denote by $\Xi _t(w,\varphi ,s ; (\zeta _r)_r)$ the ordered triplet of processes 
$(W_r, \varphi _r, S_r)_{0\leq r\leq t}$ given by the differential equations in (\ref {SDE_X_g}).

\begin{proposition}\label{thm2_pro1} 
Assume $h(\infty ) < \infty $. Then for any compact set  $E\subset D$ we have 
\begin{eqnarray*}
\limsup _{t\downarrow 0 } \sup _{(w,\varphi ,s)\in E}
(Ju(w,\varphi ,s)-V_t(w,\varphi ,s;u))\leq 0. 
\end{eqnarray*}
\end{proposition}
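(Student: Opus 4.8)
The plan is to construct, for each initial point $(w,\varphi,s)\in E$, an explicit nearly optimal strategy of the form $\zeta^\delta_r = \frac{\psi}{\delta}1_{[0,\delta]}(r)$ appearing in \eqref{almost_block} (with the maximizing $\psi\in[0,\varphi]$ depending on $(w,\varphi,s)$), and to show that the resulting expected utility $\E[u(\Xi_t(w,\varphi,s;(\zeta^\delta_r)_r))]$ approaches $Ju(w,\varphi,s)$ uniformly over $E$ as $t\downarrow 0$ (after first sending $\delta\to 0$ in an appropriate regime, e.g. $\delta = \delta(t)\to 0$ with $\delta(t)\le t$). Since $V_t(w,\varphi,s;u)\ge \E[u(\Xi_t(w,\varphi,s;(\zeta^\delta_r)_r))]$ by definition, this yields the claimed $\limsup$ bound.

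First I would analyze the triplet $(W_r,\varphi_r,S_r)$ under $\zeta^\delta$. On $[0,\delta]$ we have $\varphi_r = \varphi - \frac{\psi r}{\delta}$, so $\varphi_\delta = \varphi-\psi$, and the log-price satisfies $dX_r = \sigma(X_r)dB_r + b(X_r)dr - g(\psi/\delta)dL_r$. Because $h(\infty)<\infty$, $g(\psi/\delta) = \int_0^{\psi/\delta}h\le h(\infty)\cdot\psi/\delta$, and more precisely $g(\psi/\delta)\to h(\infty)\psi/\delta\cdot(1+o(1))$; integrating against $dL_r = \gamma\,dr + (\text{jump part})$ over $[0,\delta]$, the drift contribution $\gamma g(\psi/\delta)\delta \to \gamma h(\infty)\psi$ as $\delta\to 0$, the Brownian and bounded-drift terms vanish (they are $O(\sqrt\delta)$ and $O(\delta)$ by \eqref{Bdd_Lipschitz_Constant}), and — this is the key point mentioned in Remark (ii) — the jump part $g(\psi/\delta)\int_0^\delta\int_{(0,\infty)}zN(dr,dz)$ converges to $0$ in probability, since the number of jumps in $[0,\delta]$ tends to $0$ while $g(\psi/\delta)$ grows only linearly; one controls this using \eqref{assumption_C} and the fact that $\E\big[\int_0^\delta\int z N(dr,dz)\big] = (\tilde\gamma-\gamma)\delta$. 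Hence $X_\delta \to \log s - \gamma h(\infty)\psi$ and $S_\delta\to s\,e^{-\gamma h(\infty)\psi}$, while $W_\delta = \int_0^\delta \zeta^\delta_r S_r\,dr \to \frac{1-e^{-\gamma h(\infty)\psi}}{\gamma h(\infty)}s$ (this last limit is a deterministic ODE computation: $W$ accumulates $\frac{\psi}{\delta}S_r\,dr$ while $S_r$ decays like $s\exp(-\gamma h(\infty)\frac{\psi r}{\delta})$). On the remaining interval $[\delta,t]$ the strategy is ``no trading,'' so $(W,\varphi)$ is frozen and $S$ moves only by its own (bounded-coefficient) diffusion over a time interval of length $t-\delta\le t\to 0$, hence $S_t\to S_\delta$.

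The remaining work is to upgrade these convergences in probability/distribution to convergence of $\E[u(\cdot)]$ uniformly over $E$. Here I would invoke Lemma \ref{lemm_conti_u}: take $\Gamma_k$ indexed by $(w,\varphi,s)\in E$ together with a sequence $t_k\downarrow 0$ (and a matched $\delta_k\downarrow 0$), let $(W^1,\varphi^1,S^1)$ be the genuine triplet $\Xi_{t_k}(w,\varphi,s;(\zeta^{\delta_k}_r)_r)$ and $(W^2,\varphi^2,S^2)$ be the deterministic limit $\big(w+\frac{1-e^{-\gamma h(\infty)\psi}}{\gamma h(\infty)}s,\ \varphi-\psi,\ s\,e^{-\gamma h(\infty)\psi}\big)$ (interpreting $\frac{1-e^{-\gamma h(\infty)\psi}}{\gamma h(\infty)}$ as $\psi$ when $\gamma h(\infty)=0$); the uniform $L^{m_i}$-convergence hypothesis of Lemma \ref{lemm_conti_u} follows from the moment estimate Lemma \ref{Lemma_Moment_Estimate} (which gives the uniform moment bound on $S$, and hence on $W$ via $W_t\le w + \Phi_0\,\hat Z(s)$), together with the convergences just sketched, and $\psi=\psi(w,\varphi,s)$ can be chosen measurably. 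Then Lemma \ref{lemm_conti_u} yields $\E[u(\Xi_{t_k})]\to u\big(w+\tfrac{1-e^{-\gamma h(\infty)\psi}}{\gamma h(\infty)}s,\varphi-\psi,se^{-\gamma h(\infty)\psi}\big)$ uniformly over $E$; taking $\psi$ to be (close to) the maximizer in the definition of $Ju$ finishes the argument.

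The main obstacle I anticipate is the careful handling of the jump term: one must show $g(\psi/\delta)\int_0^\delta\int z\,N(dr,dz)\to 0$ with the right uniformity in $(w,\varphi,s)$ and with control strong enough ($L^p$ for suitable $p$, or probability plus uniform integrability via Lemma \ref{Lemma_Moment_Estimate}) to feed into Lemma \ref{lemm_conti_u}. Because $g$ has only linear growth at infinity ($h(\infty)<\infty$), the scaling works out, but this is exactly where the ``significant improvement in the strength of the proofs'' alluded to in Remark (i) is needed — in the deterministic-$L$ case of \cite{Kato} there is no jump term at all. A secondary technical point is that $\psi$ maximizing $Ju$ need not be unique; one circumvents this by choosing a near-maximizer and using continuity of $u$ and of the map $\psi\mapsto\big(w+\tfrac{1-e^{-\gamma h(\infty)\psi}}{\gamma h(\infty)}s,\varphi-\psi,se^{-\gamma h(\infty)\psi}\big)$, all of which are uniform on the compact set $E\times[0,\Phi_0]$.
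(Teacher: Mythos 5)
Your overall strategy coincides with the paper's: pick the near-block strategy for each $\psi\in[0,\varphi]$, show the resulting triplet is close to the deterministic limit appearing in $Ju$, and transfer this to expected utilities via Lemma \ref{lemm_conti_u}, using $V_t\ge \E[u(\Xi_t)]$. Two simplifications the paper makes that you could adopt: it takes $\delta=t$ (i.e.\ the constant-rate strategy $\zeta_r=\psi/t$ on all of $[0,t]$, so there is no separate ``holding'' phase to control), and it never proves two-sided convergence of the utility to $u\big(w+F(\psi)s,\varphi-\psi,se^{-\gamma h(\infty)\psi}\big)$ --- it only needs the one-sided bound $I_2\ge u\big(w+F(\psi)s,\varphi-\psi,se^{-\gamma h(\infty)\psi}\big)$, which is immediate from monotonicity of $u$ in $(w,s)$ and the elementary inequality $t\,g(\psi/t)\le h(\infty)\psi$. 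This also disposes of your worry about selecting a (near-)maximizer $\psi$: all estimates are uniform over $(w,\varphi,s)\in E$ and $\psi\in[0,\varphi]$ simultaneously, so one just takes suprema at the end.

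The one step in your sketch that would fail as written is the treatment of the jump term. Bounding $g(\psi/\delta)\int_0^\delta\!\int z\,N(dr,dz)$ by its first moment gives $g(\psi/\delta)\,(\tilde\gamma-\gamma)\delta\le h(\infty)\psi(\tilde\gamma-\gamma)$, which does \emph{not} vanish as $\delta\downarrow 0$; so \eqref{assumption_C} together with $\E\big[\int_0^\delta\!\int zN\big]=(\tilde\gamma-\gamma)\delta$ is not enough. Likewise, ``the number of jumps in $[0,\delta]$ tends to $0$'' is false for infinite-activity subordinators --- in particular for the Gamma subordinator of Section \ref{section_examples}, which has $\nu((0,\infty))=\infty$ and hence infinitely many jumps in every interval. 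What actually makes the term vanish is the a.s.\ small-time law $L_t/t\to\gamma$ (Sato, used as \eqref{conv_L} in the paper), which gives $g(\psi/t)\,(L_t-\gamma t)\le h(\infty)\varphi^{*}\,\big(L_t/t-\gamma\big)\to 0$ a.s.; since the relevant quantity $1-\exp\big(t g(\varphi^{*}/t)\{\gamma-L_t/t\}\big)$ is bounded between $0$ and $1$, dominated convergence upgrades this to the $L^1$ statement needed for Lemma \ref{lemm_conti_u}, uniformly over $E$ and $\psi$. With that substitution your argument closes; without it, the key limit \eqref{LOLN} is unproved.
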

\begin{proof}
Take any $t\in (0,1)$, $(w,\varphi, s)\in E$, and $\psi \in [0, \varphi]$. 
Set $(\zeta _r)_r \in \mathcal{A}_t(\varphi )$ by 
$\zeta _r = \frac{\psi }{t}\ (0\leq r\leq t)$, 
and let $(W_r,\varphi _r, S_r)_{0\leq r\leq t}=\Xi _t(w, \varphi, s; (\zeta _r)_r)$ and $X_r=\log S_r$. 
A standard argument leads us to 
\begin{align*}
\E\Big[ \sup_{r\in [0, t]}\vert \exp (X_r) - s\exp \left(-g({\psi}/{t})L_r \right) \vert \Big]
&\leq C_K s \sqrt{t}, \\
\E\left[ \big\vert W_t - w - \psi s \int_0^1 \exp \left(-g({\psi}/{t})L_{tv} \right) dv \big\vert \right] 
&\leq C_K \psi s \sqrt{t}
\end{align*}
for some $C_K > 0$. 
Thus, using Lemma \ref{lemm_conti_u}, we get 
\begin{align}\nonumber 
&\sup_{\substack{(w, \varphi ,s)\in E \\ \psi \in [0,\varphi ]}}
\left\{ 
I_1((\zeta _r)_r) -
V_t(w,\varphi , s;u) \right\} \\
&\quad \leq \sup_{\substack{(w, \varphi ,s)\in E \\ \psi \in [0,\varphi ]}}
\left\{
I_1((\zeta _r)_r)
- \E [u(W_t,\varphi_t , \exp(X_t)) ]
\right\} \longrightarrow 0 \ \ \,\,t \downarrow 0, 
\label{prop6_conv1} 
\end{align}
where 
\begin{eqnarray*}
I_1((\zeta _r)_r) = \E [u(w+\psi s\int_0^1 \exp (-g(\psi /t)L_{tv})dv,\varphi - \psi ,
s\exp \left(-g(\psi /t)L_t \right)) ]. 
\end{eqnarray*}

Next we will show 
\begin{eqnarray}\label{LOLN}
\sup_{\substack{(w, \varphi ,s)\in E \\ \psi \in [0,\varphi ]}}
\left| I_1((\zeta _r)_r) - I_2((\zeta _r)_r)\right | 
\longrightarrow 0, \ \ t\downarrow 0, 
\end{eqnarray}
where 
\begin{align*}
I_2((\zeta _r)_r) = \E \Big[ u\Big( w+\psi s\int_0^1 \exp (-g(\psi /t)\gamma tv)dv,\varphi - \psi ,
s\exp \left(-g(\psi /t)\gamma t \right) \Big) \Big]. 
\end{align*}
Theorem 9.43.20 in \cite {Sato} implies 
\begin{eqnarray}\label{conv_L}
\lim _{t\downarrow 0 }\frac{L_t}{t} = \gamma  \ \ \mathrm {a.s.} 
\end{eqnarray}
Hence, we obtain 
\begin{align*}
&\sup_{\substack{(w, \varphi ,s)\in E \\ \psi \in [0,\varphi ]}}
\E \left[ \vert \exp(-g(\psi /t)\gamma t) - \exp(-g(\psi /t)L_t) \vert \right] \\
&\quad \leq \E\left[ 1 - \exp\left( t g(\varphi ^* /t) 
\left\{ \gamma - \frac{L_t}{ t} \right\}\right) \right]
\longrightarrow 0, \ \ t\downarrow 0, 
\end{align*}
where we denote $\varphi ^* := \sup _{(w, \varphi , s)\in E} \varphi$. 
Similarly, we obtain 
\begin{equation*}
\lim_{t\downarrow 0 }\sup_{\substack{(w, \varphi ,s)\in E \\ \psi \in [0,\varphi ]}}
\E \left[ \left\vert \psi s \int_0^1 
\left\{ \exp(-g(\psi /t)\gamma tv) -\exp(-g(\psi /t) L_{tv}) \right\} dv
\right\vert \right] = 0 .
\end{equation*}
Thus we get (\ref{LOLN}) by using Lemma $\ref {lemm_conti_u}$. 

We now complete the proof of Proposition \ref {thm2_pro1}. 
By the monotonicity of $u(w,\varphi ,s)$ (especially in $w$ and $s$) and the inequality 
$(0\leq ) t g(\psi /t)\leq \psi h(\infty)$, 
we see that 
\begin{align*}
I_2((\zeta _r)_r) \geq u(w+F(\psi)s,\varphi - \psi , s e^{-\gamma h(\infty )\psi }) , 
\end{align*}
where 
\begin{eqnarray*}
F(\psi) = \int_0^{\psi}e^{-\gamma h(\infty )p}dp 
= \psi \int^1_0 \exp (-\gamma h(\infty) \psi v) dv. 
\end{eqnarray*}
Therefore, 
\begin{align}\label{temp_F0}
\sup _{(w,\varphi ,s)\in E}(Ju(w,\varphi ,s)-V_t(w,\varphi ,s;u)) 
\leq 
\sup_{\substack{(w, \varphi ,s)\in E \\ \psi \in [0,\varphi ]}}\left( I_2((\zeta _r)_r) - \E [u(W_t, \varphi _t, S_t)] \right) . 
\end{align}
Now our assertion is shown immediately from (\ref {prop6_conv1}), (\ref{LOLN}), and (\ref {temp_F0}). 
\end{proof}

\vspace{0,5cm}
\begin{proposition}\label{thm2_pro2} 
Assume $h(\infty ) < \infty $. Then for any compact set $E\subset D$ , 
\begin{eqnarray*}
\limsup _{t\downarrow 0 } \sup _{(w,\varphi ,s)\in E}
(V_t(w,\varphi ,s;u)-Ju(w,\varphi ,s))\leq 0. 
\end{eqnarray*}
\end{proposition}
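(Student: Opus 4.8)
The plan is to bound $V_t(w,\varphi,s;u)$ from above, uniformly over $(w,\varphi,s)\in E$ and over \emph{all} admissible strategies, by a quantity that tends to $Ju(w,\varphi,s)$ as $t\downarrow 0$. Write $\varphi^*=\sup_{(w,\varphi,s)\in E}\varphi$ and, for a fixed $\zeta=(\zeta_r)_{0\le r\le t}\in\mathcal A_t(\varphi)$, set $\psi_r=\int_0^r\zeta_v\,dv$, so that $\varphi_t=\varphi-\psi_t$ and $\psi_t\in[0,\varphi]$. First I would perform the same reduction as in the proof of Proposition \ref{thm2_pro1}: using the ``standard argument'' there (the drift and Brownian parts of $X$ contribute $O(\sqrt t)$) together with Lemma \ref{Lemma_Moment_Estimate} one gets, with a constant $C_K$ independent of $(w,\varphi,s)\in E$ and of $\zeta$ (the impact term only pushes the log-price down, so the dependence on $\zeta$ enters only through $\int_0^t\zeta_r\,dr\le\varphi^*$),
\begin{align*}
\E\Big[\sup_{0\le r\le t}\big|S_r-\widetilde S_r\big|\Big]\le C_K s\sqrt t,\qquad
\E\big[|W_t-\widetilde W_t|\big]\le C_K\varphi^* s\sqrt t,
\end{align*}
where $\widetilde S_r=s\exp\!\big(-\!\int_0^r g(\zeta_v)\,dL_v\big)$ and $\widetilde W_t=w+s\int_0^t\zeta_r\widetilde S_r\,dr$. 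Feeding these bounds, together with the uniform moment estimates of Lemma \ref{Lemma_Moment_Estimate}, into Lemma \ref{lemm_conti_u} (along any sequence $t_k\downarrow 0$, with $\Gamma_k$ the set of pairs consisting of a point of $E$ and a strategy in $\mathcal A_{t_k}(\varphi)$) shows that $\sup_{E}\sup_{\zeta\in\mathcal A_t(\varphi)}\big(\E[u(W_t,\varphi_t,S_t)]-\E[u(\widetilde W_t,\varphi_t,\widetilde S_t)]\big)\to 0$. Hence it suffices to bound $\E[u(\widetilde W_t,\varphi_t,\widetilde S_t)]$.

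For the main step assume $\gamma h(\infty)>0$, fix $\varepsilon\in(0,h(\infty))$, and use $\zeta^{-1}g(\zeta)=\zeta^{-1}\int_0^\zeta h\to h(\infty)$ to pick $M_\varepsilon>0$ with $g(\zeta)\ge(h(\infty)-\varepsilon)\zeta$ for $\zeta\ge M_\varepsilon$. Splitting $G_r:=\int_0^r g(\zeta_v)\,dv$ over $\{\zeta_v\ge M_\varepsilon\}$ and its complement gives the pathwise bound
\begin{align*}
G_r\ \ge\ (h(\infty)-\varepsilon)\int_0^r\zeta_v 1_{\{\zeta_v\ge M_\varepsilon\}}\,dv\ \ge\ (h(\infty)-\varepsilon)(\psi_r-M_\varepsilon r)\ \ge\ (h(\infty)-\varepsilon)\psi_r-h(\infty)M_\varepsilon t,\quad r\in[0,t].
\end{align*}
Since $\int_0^r g(\zeta_v)\,dL_v\ge\gamma G_r$ (the jumps of $L$ are nonnegative) and $\int_0^t\zeta_r e^{-\gamma(h(\infty)-\varepsilon)\psi_r}\,dr=\frac{1-e^{-\gamma(h(\infty)-\varepsilon)\psi_t}}{\gamma(h(\infty)-\varepsilon)}$, it follows with $\lambda_t:=e^{\gamma h(\infty)M_\varepsilon t}\to 1$ that
\begin{align*}
\widetilde S_t\ \le\ s\lambda_t\,e^{-\gamma(h(\infty)-\varepsilon)\psi_t},\qquad
\widetilde W_t\ \le\ w+s\lambda_t\,\frac{1-e^{-\gamma(h(\infty)-\varepsilon)\psi_t}}{\gamma(h(\infty)-\varepsilon)}.
\end{align*}
Monotonicity of $u$ in its first and third arguments then yields, writing $J_\varepsilon u$ for the functional obtained from $Ju$ by replacing $h(\infty)$ with $h(\infty)-\varepsilon$ and noting $\psi_t\in[0,\varphi]$,
\begin{align*}
u(\widetilde W_t,\varphi_t,\widetilde S_t)\ \le\ u\Big(w+s\lambda_t\tfrac{1-e^{-\gamma(h(\infty)-\varepsilon)\psi_t}}{\gamma(h(\infty)-\varepsilon)},\ \varphi-\psi_t,\ s\lambda_t e^{-\gamma(h(\infty)-\varepsilon)\psi_t}\Big)\ \le\ J_\varepsilon u(w,\varphi,s\lambda_t).
\end{align*}
The case $\gamma h(\infty)=0$ is easier: either $h\equiv0$ (then $\widetilde S_t=s$, $\widetilde W_t=w+s\psi_t$) or $\gamma=0$ (then $\widetilde S_t\le s$, $\widetilde W_t\le w+s\psi_t$), and in both cases $u(\widetilde W_t,\varphi_t,\widetilde S_t)\le u(w+\psi_t s,\varphi-\psi_t,s)\le Ju(w,\varphi,s)$, so the reduction step already finishes.

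It remains to take the two limits in the right order. With $\varepsilon$ fixed, taking expectations, combining with the reduction step, and letting $t\downarrow 0$ — using that $J_\varepsilon u$ is continuous (a supremum of jointly continuous functions over the compact parameter set $\psi\in[0,\varphi]$), so $J_\varepsilon u(w,\varphi,s\lambda_t)\to J_\varepsilon u(w,\varphi,s)$ uniformly on $E$ — gives
\begin{align*}
\limsup_{t\downarrow 0}\ \sup_{(w,\varphi,s)\in E}\big(V_t(w,\varphi,s;u)-Ju(w,\varphi,s)\big)\ \le\ \sup_{(w,\varphi,s)\in E}\big(J_\varepsilon u(w,\varphi,s)-Ju(w,\varphi,s)\big).
\end{align*}
Finally, as $\varepsilon\downarrow 0$ the maps $\theta\mapsto(1-e^{-\gamma\theta\psi})/(\gamma\theta)$ and $\theta\mapsto e^{-\gamma\theta\psi}$ converge to their $\theta=h(\infty)$ values uniformly in $\psi\in[0,\varphi^*]$, and since $u$ is uniformly continuous on compacts, $J_\varepsilon u\to Ju$ uniformly on $E$; hence the right-hand side tends to $0$, which is the assertion.

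The genuinely delicate point is the pathwise lower bound on $G_r$ as $t\downarrow0$: one must exploit that to liquidate a nontrivial amount $\psi_r$ over the vanishing horizon $[0,t]$ the speed $\zeta_v$ must be large on a set carrying almost all of the mass $\int_0^r\zeta_v\,dv$, where $g$ is already nearly linear with slope $h(\infty)$, while the ``slow'' part $\{\zeta_v<M_\varepsilon\}$ contributes only $O(M_\varepsilon t)$. The other thing requiring care is keeping the SDE estimates in the reduction step uniform over the whole class $\mathcal A_t(\varphi)$ — which works because the impact acts only downward on $\log S$ — and interleaving the limits $t\downarrow0$ and $\varepsilon\downarrow0$ correctly (fix $\varepsilon$, send $t\downarrow0$, then send $\varepsilon\downarrow0$).
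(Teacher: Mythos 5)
Your argument is correct, and its skeleton coincides with the paper's: first replace $(W_t,S_t)$ by the explicit exponentials $\widetilde W_t,\widetilde S_t$ via the $O(\sqrt t)$ SDE estimates and Lemma \ref{lemm_conti_u} (uniformly over $E$ and over all of $\mathcal A_t(\varphi)$), then bound pathwise using monotonicity of $u$ together with a lower bound on the cumulative impact that depends on the strategy only through $\eta_r=\int_0^r\zeta_v\,dv$, and finally invoke the definition of $Ju$ as a supremum over $\psi\in[0,\varphi]$. Where you genuinely diverge is in that key lower bound. The paper exploits convexity of $g$ and Jensen's inequality with respect to normalized Lebesgue measure on $[0,r]$ to get
\begin{align*}
\int_0^r g(\zeta_v)\,dv \;\ge\; r\,g\!\left(\tfrac{\eta_r}{r}\right)=\int_0^{\eta_r}h(\zeta/r)\,d\zeta,
\end{align*}
and then controls the gap to $h(\infty)\eta_r$ by the single deterministic error $\widetilde\varepsilon_t=\int_0^{\Phi_0}\bigl(h(\infty)-h(\zeta/t)\bigr)\,d\zeta\to0$, so the whole proof runs in one limit $t\downarrow0$ and lands directly on $Ju$. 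You instead truncate at a level $M_\varepsilon$ above which $g(\zeta)\ge(h(\infty)-\varepsilon)\zeta$ (a Ces\`aro consequence of $h(\zeta)\to h(\infty)$, with the slow set $\{\zeta_v<M_\varepsilon\}$ contributing only $O(M_\varepsilon t)$), which buys an argument that never uses convexity of $g$ --- only the existence of the limit $g(\zeta)/\zeta\to h(\infty)$ --- at the price of the auxiliary functional $J_\varepsilon u$ and a second limit $\varepsilon\downarrow0$. You order the two limits correctly, and the remaining details are sound: the pathwise bound $u(\widetilde W_t,\varphi_t,\widetilde S_t)\le J_\varepsilon u(w,\varphi,s\lambda_t)$ holds for every $\omega$ even though $\psi_t$ is random, and the uniform convergences $J_\varepsilon u(\cdot,\cdot,s\lambda_t)\to J_\varepsilon u$ and $J_\varepsilon u\to Ju$ on $E$ follow from uniform continuity of $u$ on compacts since both exponential factors converge uniformly in $\psi\in[0,\varphi^*]$. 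Both routes are valid; the paper's is a little shorter, while yours isolates the minimal property of $g$ actually needed.
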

\begin{proof}
Take any $t \in (0, 1)$, $(w, \varphi , s) \in E$, and $(\zeta_r)_r \in \mathcal{A}_t(\varphi)$. 
Denote\\ 
$(W_r,\varphi _r, \allowbreak S_r)_{0\leq r\leq t}=\Xi _t(w, \varphi , s ; (\zeta _r)_r)$ 
and $X_r=\log S_r$. 
Since $g$ is convex, the Jensen inequality implies
\begin{align*}
\int_0^r g(\zeta _v)dL_v \geq 
\gamma \int_0^r g(\zeta _v)dv
\geq \gamma r g\left( \frac{1}{r} \int_0^r \zeta_v dv\right)
= \gamma \int_0^{\eta_r}h(\zeta / r)d\zeta ,\quad r\in [0, t],
\end{align*}
where $\eta _r = \int_0^r \zeta_v dv $. 
Then we have 
\begin{align}\label{Th2_Uniform_Estimate1}
&u\Big( w + s \int_0^t \zeta_r \exp \big(-\int_0^r g(\zeta _v)dL_v \big) dr, 
\varphi - \eta_t, s e^{-\int_0^t g(\zeta _v)dL_v } \Big) \nonumber \\ 
&\quad \leq 
u\Big( w + s \int_0^t \zeta_r \exp \big( -\gamma \int_0^{\eta_r}h(\zeta/r)d\zeta \big)dr , 
\varphi - \eta_t, s e^ {-\gamma \int_0^{\eta_t}h(\zeta /t)d\zeta } \Big) . 
\end{align}

As in the proof of Proposition \ref {thm2_pro1}, we get 
\begin{align}
\E\Big[ \sup_{r\in [0, t]}\Big\vert \exp (X_r) - s\exp \big(-\int_0^r g(\zeta _v)dL_v \big) \Big\vert \Big]
&\leq C_K s \sqrt{t}, \label{Th2_Uniform_Estimate2}\\
\E\left[ \Big\vert W_t - w - s \int_0^t \zeta_r \exp \big(-\int_0^r g(\zeta _v)dL_v \big) dr \Big\vert \right] 
&\leq C_K \Phi_0 s \sqrt{t} \label{Th2_Uniform_Estimate3}
\end{align}
for some $C_K > 0$. 
Then we can apply Lemma \ref{lemm_conti_u} with 
(\ref{Th2_Uniform_Estimate2}) and (\ref{Th2_Uniform_Estimate3}) to obtain
\begin{align}
&\sup_{\substack{(w, \varphi ,s)\in E \\ (\zeta_r)_r \in \mathcal{A}_t(\varphi)}}
\Big\vert
\E\Big[ u\Big( w + s \int_0^t \zeta_r \exp \big(-\int_0^r g(\zeta _v)dL_v \big) dr, 
\varphi - \eta_t, s e^{-\int_0^t g(\zeta _v)dL_v } \Big)\Big]
\nonumber \\
&\qquad \qquad \quad -
\E[u(W_t, \varphi_t, S_t)]
\Big\vert 
\longrightarrow 0\ \ \mathrm {as} \ \ t \downarrow 0. \label{Th2_Uniform_Conv1}
\end{align}

We can also see that 
\begin{align}
\sup_{r\in [0, t]}\Big\vert \exp \left( - \gamma \int_0^{\eta_r} h(\zeta / r) d\zeta \right)
- e^{- \gamma h(\infty ) \eta_r} \Big\vert 
&\leq 2\gamma \widetilde{\varepsilon}_t , \label{Th2_Uniform_Estimate4}\\
\Big\vert \E \Big[ \int_0^t \zeta_r \Big\{
\exp \Big( -\gamma \int_0^{\eta_r}h(\zeta / r)d\zeta \Big)
-e^{- \gamma h(\infty ) \eta_r} \Big\} dr \Big] \Big\vert 
&\leq 2\gamma \Phi_0 \widetilde{\varepsilon}_t , \label{Th2_Uniform_Estimate5}
\end{align}
where 
$\widetilde{\varepsilon}_t = \int_0^{\Phi_0} \big( h(\infty) - h(\zeta /t) \big)d\zeta 
(\longrightarrow 0, \ \ t\downarrow 0)$. 
Applying Lemma \ref{lemm_conti_u} again with (\ref{Th2_Uniform_Estimate4}) and (\ref{Th2_Uniform_Estimate5}), 
we have that 
\begin{align}
&\sup_{\substack{(w, \varphi ,s)\in E \\ (\zeta_r)_r \in \mathcal{A}_t(\varphi)}}
\Big\vert
\E\Big[ u\Big( w + s \int_0^t \zeta_r \exp \Big( -\gamma \int_0^{\eta_r}h(\zeta / r)d\zeta \Big)dr , 
\varphi - \eta_t, s e^{-\gamma \int_0^{\eta_t}h(\zeta / t)d\zeta } \Big) \Big]\nonumber \\
&\qquad \qquad \quad - \E\Big[
u\Big(w + s \int_0^t \zeta_r e^{- \gamma h(\infty ) \eta_r} dr, 
\varphi - \eta_t, s e^{- \gamma h(\infty ) \eta_t} \Big) \Big]
\Big\vert \longrightarrow 0\ \ \mathrm {as} \ \ t \downarrow 0. \label{Th2_Uniform_Conv2}
\end{align}
Moreover, from the definition of $Ju(w, \varphi , s)$, we see that 
\begin{align}\nonumber 
&\sup_{\substack{(w, \varphi ,s)\in E \\ (\zeta_r)_r \in \mathcal{A}_t(\varphi)}}
\Big\{
\E\Big[ u\Big(w + s \int_0^t \zeta_r e^{- \gamma h(\infty ) \eta_r} dr, 
\varphi - \eta_t, s e^{- \gamma h(\infty ) \eta_t}\Big) \Big]
-Ju(w, \varphi , s) \Big\} \\
&\quad = \sup_{\substack{(w, \varphi ,s)\in E \\ (\zeta_r)_r \in \mathcal{A}_t(\varphi)}}
\Big\{
\E\Big[ u\Big(w + s F(\eta _t), \varphi - \eta_t, s e^{- \gamma h(\infty ) \eta_t}\Big) \Big] - Ju(w, \varphi , s) 
\Big\} \leq 0.
\label{Th2_Uniform_Ineqality3}\end{align}
Combining (\ref{Th2_Uniform_Estimate1}), (\ref{Th2_Uniform_Conv1}), 
(\ref{Th2_Uniform_Conv2}), and 
(\ref{Th2_Uniform_Ineqality3}), we obtain our assertion.
\end{proof}

Finally, we consider the continuity with respect to $t\in (0, 1]$. 
\begin{proposition}\label{thm2_pro3}
Let $E\subset D$ be a compact set.  Then we have the following: \\
$\mathrm {(i)}$ \ \ $\lim _{t'\uparrow t }
\sup _{(w,\varphi ,s)\in E}
\vert V_{t'}(w,\varphi ,s;u) - V_{t}(w,\varphi ,s;u)\vert = 0$, \ $t\in (0,1]$.\\
$\mathrm {(ii)}$ \ $\lim _{t'\downarrow t }
\sup _{(w,\varphi ,s)\in E}
\vert V_{t'}(w,\varphi ,s;u) - V_{t}(w,\varphi ,s;u)\vert = 0$, \ $t\in (0,1)$.
\end{proposition}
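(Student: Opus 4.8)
The plan is to reduce both one-sided continuity statements to the semigroup property (Proposition \ref{th_semi}) combined with the right-continuity estimates already established. For part (i), fix $t\in(0,1]$ and $t'<t$, and write $t=t'+(t-t')$. By the semigroup property, $V_t(w,\varphi,s;u)=Q_{t'}Q_{t-t'}u(w,\varphi,s)=V_{t'}(w,\varphi,s;Q_{t-t'}u)$. Thus the difference $|V_{t'}(w,\varphi,s;u)-V_t(w,\varphi,s;u)|$ is controlled by $|V_{t'}(w,\varphi,s;u)-V_{t'}(w,\varphi,s;Q_{t-t'}u)|$, i.e.\ by how much the ``input'' utility function changes. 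The strategy is: first control the sup-norm-type distance between $u$ and $Q_{t-t'}u=V_{t-t'}(\cdot;u)$ on a suitable enlarged compact set (using the already-proved right-continuity at $0$: either the Proposition preceding \ref{thm2_pro1} when $h(\infty)=\infty$, or Propositions \ref{thm2_pro1} and \ref{thm2_pro2} together when $h(\infty)<\infty$, noting that $Ju\to u$ is false in general but $V_{t-t'}u$ still converges to $Ju$, so one must instead feed $Q_{t-t'}u$ as the new utility and track the discrepancy); then invoke the a priori moment estimates (Lemma \ref{Lemma_Moment_Estimate}) to ensure that the state processes $(W_{t'},\varphi_{t'},S_{t'})$ started from $E$ stay, uniformly in the strategy, inside a compact set on which this discrepancy is small, and apply Lemma \ref{lemm_conti_u} to convert the $L^1$-closeness of utilities into closeness of expectations uniformly over strategies. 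A subtlety is that $Q_{t-t'}u$ need not lie close to $u$ in the $h(\infty)<\infty$ case; the cleaner route is to bound $|V_{t'}(w,\varphi,s;u)-V_{t'}(w,\varphi,s;Ju)|$ is not what we want either, so instead we directly estimate $\sup_{(\zeta_r)_r}|\E[u(W_{t'},\varphi_{t'},S_{t'})]-\E[(Q_{t-t'}u)(W_{t'},\varphi_{t'},S_{t'})]|$ by pulling the inner supremum out and using that $Q_{t-t'}u(w',\varphi',s')-u(w',\varphi',s')\to 0$ uniformly on compacts as $t-t'\downarrow0$ when $h(\infty)=\infty$, while when $h(\infty)<\infty$ we compare both sides with the block-liquidation value and use Propositions \ref{thm2_pro1}--\ref{thm2_pro2}.

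For part (ii), fix $t\in(0,1)$ and $t'>t$, and write $t'=t+(t'-t)$, so $V_{t'}(w,\varphi,s;u)=V_t(w,\varphi,s;Q_{t'-t}u)$ by Proposition \ref{th_semi}. Again the difference reduces to $|V_t(w,\varphi,s;u)-V_t(w,\varphi,s;Q_{t'-t}u)|$, and the same mechanism applies: control $Q_{t'-t}u-u$ on an enlarged compact set via the right-continuity results, push through Lemma \ref{lemm_conti_u} using uniform moment bounds on $(W_t,\varphi_t,S_t)$. Here $t$ is bounded away from $0$, which actually makes the uniform moment estimates and the passage to a fixed compact enclosing set even more straightforward than in part (i); the only care needed is that the compact set on which we need $Q_{t'-t}u\approx u$ must be chosen uniformly in the admissible strategies, which Lemma \ref{Lemma_Moment_Estimate} supplies since $\|\zeta\|_\infty<\infty$ and $\int_0^t\zeta_r\,dr\le\Phi_0$ give deterministic-in-distribution tail control of $W_t$ and $S_t$ independent of the particular strategy.

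The main obstacle I expect is handling the $h(\infty)<\infty$ case in part (i) near the left endpoint: there $Q_{r}u$ does \emph{not} converge to $u$ as $r\downarrow0$ but to $Ju\neq u$, so one cannot naively say the input utility is barely perturbed. The resolution is to note that for $t'>0$ the operator $Q_{t'}$ already ``absorbs'' one block-liquidation-type jump, so $Q_{t'}(Q_ru)$ and $Q_{t'}u$ are compared not through $Q_ru\approx u$ but through $Q_{t'}Q_ru=Q_{t'+r}u$ and the already-known joint behavior; concretely, one shows $V_{t'+r}u\to V_{t'}u$ by writing $V_{t'+r}u = V_{t'}(Q_ru)$ and using that $Q_ru \le Q_r'u$-type monotonicity plus $Q_ru \ge u$ (since $Q_r$ is order-preserving and $Q_r u \ge \E[u(W_r,\varphi_r,S_r)]$ for the zero strategy, which gives $Q_ru\ge u$) to sandwich, then $Q_ru\to Ju$ pointwise and dominated convergence inside $V_{t'}$ via Lemma \ref{lemm_conti_u}. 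Making the monotonicity/sandwich argument fully uniform over the compact set $E$ and over admissible strategies is the delicate point, and I would spend the bulk of the proof there; parts involving $h(\infty)=\infty$ and all of part (ii) are comparatively routine applications of Propositions \ref{th_semi}, the preceding right-continuity propositions, and Lemmas \ref{lemm_conti_u}--\ref{Lemma_Moment_Estimate}.
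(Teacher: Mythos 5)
Your overall skeleton is the right one and matches the paper's: decompose via the semigroup property (Proposition \ref{th_semi}), use the right-continuity-at-$0$ results (the proposition preceding \ref{thm2_pro1} when $h(\infty)=\infty$, Propositions \ref{thm2_pro1}--\ref{thm2_pro2} when $h(\infty)<\infty$) applied with the utility $V_tu\in\mathcal{C}$ in place of $u$, and make everything uniform over $E$ and over strategies via Lemmas \ref{lemm_conti_u} and \ref{Lemma_Moment_Estimate}. You also correctly identify the real obstacle: when $h(\infty)<\infty$ the limit produced by this reduction is $J(V_tu)$ (equivalently $Q_{t'}(Ju)$ in your other decomposition), not $V_tu$. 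The paper isolates exactly this as the missing ingredient, namely the inequality $JV_t(w,\varphi,s;u)\leq V_t(w,\varphi,s;u)$ for $t\in(0,1)$, and devotes essentially the whole proof to it.

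Where your proposal has a genuine gap is in how you propose to close that last step. First, the sandwich you invoke rests on the claim $Q_ru\geq u$, justified by the zero strategy; but the zero strategy yields $\E[u(w,\varphi,S_r)]$ with $S_r$ the freely fluctuating price, and since $u$ is only non-decreasing in $s$ (not invariant under the diffusion, whose drift may be negative), there is no pointwise comparison with $u(w,\varphi,s)$ --- so $Q_ru\geq u$ and the monotonicity of $t\mapsto Q_tu$ are false in general. Second, even with a valid sandwich, identifying the limit $J(V_{t'}u)$ with $V_{t'}u$ (or $Q_{t'}(Ju)$ with $Q_{t'}u$) is precisely the nontrivial content: it says that an instantaneous block sale composed with an optimal continuation cannot beat the value over the same horizon, and it does not follow from order-preservation or dominated convergence. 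The paper proves it by an explicit construction: for each $\psi\in[0,\varphi]$ and each $(\zeta_r)_r\in\mathcal{A}_t(\varphi-\psi)$ one superimposes a burst of selling, $\tilde\zeta_r=(\psi/\delta)1_{[0,\gamma\delta]}(L_{r-})+\zeta_r$, and shows via Lemma \ref{Ishi_Lem}, the random time $\tilde u(\delta)=\sup\{v:L_{v-}\leq\gamma\delta\}$ and Lemma \ref{temp_tilde_u} that as $\delta\downarrow0$ the resulting terminal triplet converges in the sense required by Lemma \ref{lemm_conti_u} to that of $(\zeta_r)_r$ started from the post-block state $(w+F(\psi)s,\varphi-\psi,se^{-\gamma h(\infty)\psi})$. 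This strategy-level approximation, uniform in the superimposed $(\zeta_r)_r$, is the heart of the argument and is absent from your proposal.
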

\begin{proof}
All we have to do is to show that 
\begin{eqnarray}
JV_t(w,\varphi,s;u)\leq V_t(w,\varphi,s;u) , 
\quad (w, \varphi , s) \in D , \quad t \in (0, 1) 
\label{obj_t_ineq}
\end{eqnarray}
under $h(\infty ) < \infty $, 
because all the other assertions are obtained 
in the same way as in the proof of Proposition B.17 in \cite {Kato} 
combined with Proposition \ref {th_semi} and (\ref {obj_t_ineq}). 

Take any $t \in (0, 1)$, $(w, \varphi , s) \in D$, $\psi \in [0,\varphi ]$, 
and $(\zeta_r)_{0\leq r\leq t} \in \mathcal{A}_t(\varphi - \psi )$. 
Define 
$(W_r,\varphi _r, S_r)_{0\leq r\leq t}=\Xi _t(w+F(\psi)s, \varphi - \psi, s e^{-\gamma h(\infty) \psi} ; (\zeta _r)_r)$ 
and $X_r=\log S_r$. 
For any $\delta \in (0,t)$, we define $(\tilde{\zeta} _r)_{0\leq r\leq t}\in \mathcal{A}_t(\varphi)$ by 
$\tilde{ \zeta }_r = (\psi / \delta )1_{[0,\gamma \delta ]} (L_{r-}) + \zeta_r$. 
Note that the admissibility of $(\tilde {\zeta _r})_r$ comes from $L_r \geq \gamma r$. 
Furthermore, we denote $(\tilde{W}_r,\tilde{\varphi }_r,\tilde{S}_r)_{0\leq r\leq t} $ 
$=$ $\Xi _t(w, \varphi , s ; (\tilde{\zeta} _r)_r)$ and $\tilde{X}_r = \log \tilde{S}_r$. 

From the definition, we have that
\begin{align*}
&X_r = \log s + 
\int _0^r \sigma (X_v)dB_v + \int _0^r b(X_v)dv + F^{(\delta), 1}_r, \\
&\tilde{X}_r = \log s + 
\int _0^r \sigma (\tilde{X}_v)dB_v + 
\int _0^r b(\tilde{X}_v)dv + F^{(\delta), 2}_r, \quad \mbox{for}\ r\in [0 ,t], 
\end{align*}
where 
\begin{eqnarray*}
F^{(\delta), 1}_r = -\gamma h(\infty ) \psi -  \int _0^r g(\zeta_v)dL_v, 
\quad F^{(\delta), 2}_r =- \int _0^r g(\tilde{ \zeta }_v)dL_v . 
\end{eqnarray*}
We will apply Lemma \ref{Ishi_Lem} with $F^{(\delta), 1}_r$, $F^{(\delta), 2}_r$, 
and $\Pi^{(\delta)} = [\delta , t]$ to show 
\begin{align}\label{Prop8_IshiLem_result}
\E \Big[ \sup_{r\in [\delta ,t]} |\tilde{X}_r - X_r| \Big] \longrightarrow 0 , \ \ 
\delta \downarrow 0.
\end{align}
Set 
$D^{(\delta)}_r = \E \Big[ \sup_{v\in \Pi^{(\delta)}(r)}
\vert F^{(\delta), 1}_v - F^{(\delta), 2}_v \vert \Big]$. 
Obviously it holds that $\Pi^{(\delta)}(r)=[\delta , r]$ ($r\geq \delta$), 
$\{r \}$ ($r< \delta$) and 
\begin{align*}
D^{(\delta)}_t + \int _0^t D^{(\delta)}_r dr 
\leq (2-\delta) \E \Big[ \sup_{v\in [\delta, t]}\vert F^{(\delta), 1}_v - F^{(\delta), 2}_v \vert \Big] 
+ \int _0^{\delta} \E[\vert F^{(\delta), 1}_r - F^{(\delta), 2}_r \vert] dr . 
\end{align*}
Since $(L_v)_v$ is nondecreasing, we see that 
\begin{align*}
\tilde{u}(\delta):=\sup \{v\in [0, t]; L_{v-}\leq \gamma \delta \} 
= \sup \{v\in [0, t]; L_{v}\leq \gamma \delta \} . 
\end{align*}
Moreover, $\tilde{u}(\delta)\leq \delta $ holds from the definition of $(L_r)_r$. 
Then we have 
\begin{align}
&F^{(\delta), 2}_r-F^{(\delta), 1}_r = \gamma h(\infty ) \psi 
- \frac{1}{\delta} \int^{r\wedge \tilde{u}(\delta)}_0 
\Big\{ \int^{\psi}_0 h\left (\frac{1}{\delta}\zeta ' + \zeta_v\right ) d\zeta ' \Big\} dL_v
\label{Diff_F_delta_eq_1}\\
&\quad =h(\infty ) \psi \Big\{ \gamma - \frac{L_{r\wedge \tilde{u}(\delta)}}{\delta}\Big\} 
+ \frac{1}{\delta} \int^{r\wedge \tilde{u}(\delta)}_0 \Big\{ \int^{\psi}_0 
\Big( h(\infty) - h\left (\frac{1}{\delta}\zeta ' + \zeta_v\right ) \Big) d\zeta ' \Big\} dL_v
\nonumber 
\end{align}
for $0\leq r \leq t$. From (\ref{Diff_F_delta_eq_1}), we have 
\begin{align}
&\E \Big[ \sup_{v\in [\delta, t]}\vert F^{(\delta), 1}_v - F^{(\delta), 2}_v \vert \Big] \nonumber \\
&\quad \leq h(\infty)\psi\E \left [ \gamma - \frac{ L_{\tilde{u}(\delta)}}{\delta }\right ] + 
\gamma \int^{\psi}_0 \left( h(\infty)-h\left (\frac{1}{\delta}\zeta '\right )\right ) d\zeta ' ,
\label{temp_last_ineq_1}\\
&\int _0^{\delta} \E[\vert F^{(\delta), 1}_r - F^{(\delta), 2}_r \vert] dr
\leq \delta h(\infty ) \psi \gamma 
+ \delta \tilde{\gamma} \int^{\psi}_0 \left ( h(\infty)-h\left (\frac{1}{\delta}\zeta '\right )\right ) d\zeta '. 
\label{temp_last_ineq_2}
\end{align}
The second terms of the right sides of both (\ref {temp_last_ineq_1}) and 
(\ref {temp_last_ineq_2}) converge to $0$ as $\delta \downarrow 0$. 
Moreover we can show the following lemma: 
\begin{lemma}\label{temp_tilde_u}
$\frac{\tilde{u}(\delta )}{\delta } \longrightarrow 1, \ \ \delta \downarrow 0$ a.s. 
\end{lemma}
By the above lemma and (\ref {conv_L}), 
we have 
\begin{align}\label{conv_Lu}
\frac{L_{\tilde{u}(\delta )}}{\delta}
\longrightarrow \gamma, \ \ \delta \downarrow 0 
\ \ \mathrm {a.s.} 
\end{align} 
Then the dominated convergence theorem implies that the first term of the right side of 
(\ref {temp_last_ineq_1}) also converges to $0$ as $\delta \downarrow 0$. 
Now we arrive at 
\begin{eqnarray*}
D^{(\delta)}_t + 
\int _0^t D^{(\delta)}_r dr \longrightarrow 0 , \ \ 
\delta \downarrow 0, 
\end{eqnarray*}
which immediately implies (\ref {Prop8_IshiLem_result}) together with Lemma \ref {Ishi_Lem}. 

A standard argument with (\ref {Prop8_IshiLem_result}) gives 
\begin{align}
&\E \Big[\sup _{r\in [\delta ,t]} |\exp(\tilde{X}_r)-\exp(X_r)|^{1/2}\Big]
\nonumber \\
&\quad \leq 
(2sC_{1, K})^{1/2}
\E \Big[\sup _{r\in [\delta ,t]} |\tilde{X}_r - X_r|\Big]^{1/2}
\longrightarrow 0, \ \ \delta \downarrow 0 . \label{Prop8_Conv_E_Diff_exp_X}
\end{align}
On the other hand, we see that 
\begin{align*}
\E[ \vert W_t - \tilde{W}_t \vert ^{1/2} ]  
& \leq 
J_1 + J_2 + J_3, 
\end{align*}
where 
\begin{align*}
J_1 &= \E\Big[\Big\vert 
\frac{\psi}{\delta} \int^{\tilde{u}(\delta )}_0 \exp(\tilde{X}_r)dr
-s\int^{\psi}_0 e^{-\gamma h(\infty) p} dp
\Big\vert ^{1/2} \Big], \\
J_2 &= E\Big[ \Big\{ \int^t_{\delta} \zeta_r \vert \exp (\tilde{X}_r)- \exp (X_r) \vert dr \Big\}^{1/2} \Big] , \\
J_3 &= \E\Big[ \Big\{ \int^{\delta}_0 \zeta_r \vert \exp (\tilde{X}_r)- \exp (X_r) \vert dr \Big\}^{1/2} \Big] .
\end{align*}
Easily we get 
\begin{align}
&J_2 
\leq \sqrt{\varphi - \psi} \E \Big[ \sup _{r\in [\delta ,t]} \big|e^{\tilde{X}_r}-e^{X_r}\big|^{1/2}\Big] 
\longrightarrow 0, \ \ \delta \downarrow 0, 
\nonumber \\
&J_3 
\leq (\delta \|\zeta \|_{\infty})^{1/2}
\E \Big[ \sup _{r\in [0, \delta]} \{ e^{\tilde{X}_r}+e^{X_r} \}^{1/2}\Big]
\longrightarrow 0, \ \ \delta \downarrow 0 
\nonumber
\end{align}
by virtue of (\ref {Prop8_Conv_E_Diff_exp_X}) and Lemma \ref {Lemma_Moment_Estimate}. 
As for $J_1$, a similar calculation to (\ref{Diff_F_delta_eq_1}) gives 
\begin{align}\nonumber 
J_1 
&\quad \leq \sqrt{sC_{1, K}\psi }\E \Big[1-\frac{\tilde{u}(\delta )}{\delta }\Big]^{1/2}\nonumber \\
&\qquad + \sqrt{\psi } \E\left[ \left( \frac{1}{\delta}\int^{\delta}_0 
\left\vert \exp (\tilde{X}_r) - s \exp \left(-\frac{\gamma h(\infty) \psi r}{\delta}\right )\right\vert dr  
\right)^{1/2} \right] \nonumber \\
&\quad \leq 
\sqrt{sC_{1, K}\psi } \E \Big[1-\frac{\tilde{u}(\delta )}{\delta }\Big]^{1/2} + 
\sqrt{ s(1+C_{1, K})\psi } \left\{ A^{1/2}_1 + A^{1/2}_2\right\} , 
\label{Prop8_decomp2_Diff_W} 
\end{align}
where 
\begin{align*}
A_1 &= \frac{1}{\delta}\E \left [ \int^{\delta}_0 \left \{ 
\Big\vert \int^r_0 \sigma (\tilde{X}_v)dB_v \Big\vert 
+\Big\vert \int^r_0 b(\tilde{X}_v)dv \Big\vert 
+\Big\vert \int^r_0 g(\zeta_v)dL_v \Big\vert \right \} dr\right] , \\
A_2 &= \frac{1}{\delta}\E \left [  \int^{\delta}_0 
\Big\vert \int^r_0 (g(\tilde{ \zeta }_v)-g(\zeta_v)) dL_v - \frac{\gamma h(\infty)\psi r}{\delta}\Big\vert
dr \right ] . 
\end{align*}
Straightforward calculations lead us to 
\begin{eqnarray}\label{Prop8_decomp4_Diff_W} 
A_1 \leq 
\frac{2K}{3}\sqrt{\delta } + \frac{(K + \tilde{\gamma }g(||\zeta ||_\infty ))\delta }{2} .
\end{eqnarray}
Moreover, 
by Lemma \ref {temp_tilde_u} and (\ref {conv_Lu}), we see that 
\begin{align}
A_2
&\quad \leq \gamma \int^{\psi}_0 (h(\infty)-h(\zeta'/\delta))d\zeta'
+\frac{ \psi h(\infty)}{\delta} \E \Big[ \frac{1}{\delta } \int^{\delta }_0 
\vert \gamma r -L_{r\wedge \tilde{u}(\delta )}\vert dr \Big] 
\nonumber \\
&\quad \leq \gamma \int^{\psi}_0 (h(\infty)-h(\zeta'/\delta))d\zeta'
+\psi h(\infty) \E \Big[ \frac{1}{\delta} \int^{\tilde{u}(\delta )}_0 \Big\{ \frac{L_r}{r}-\gamma \Big\} dr \Big]
\nonumber \\
&\qquad + \psi h(\infty) \E \Big[ \Big( 1-\frac{\tilde{u}(\delta )}{\delta }\Big)
\Big\{ \gamma \Big( 1-\frac{\tilde{u}(\delta )}{\delta }\Big) 
+ \Big(\gamma - \frac{L_{\tilde{u}(\delta )}}{\delta}\Big)\Big\}\Big]
\longrightarrow 0, \ \ \delta \downarrow 0 . \label{Prop8_decomp6_Diff_W} 
\end{align}
Combining Lemma \ref{Lemma_Moment_Estimate}, Lemma \ref{lem_comparison}, 
(\ref{Prop8_decomp2_Diff_W}), 
(\ref{Prop8_decomp4_Diff_W}), 
and (\ref{Prop8_decomp6_Diff_W}), 
we get $J_1\longrightarrow 0$ as $\delta \downarrow 0$, 
hence we arrive at $\lim_{\delta \downarrow 0}\E[ \vert W_t - \tilde{W}_t \vert ^{1/2} ] = 0$. 
Therefore, by Lemma \ref {lemm_conti_u} we obtain 
\begin{align*}
&\E[u(W_t,\varphi_t,\exp(X_t))] -V_t(w,\varphi,s;u)\\
&\quad \leq \lim_{\delta \downarrow 0}|\E[u(W_t,\varphi_t,\exp(X_t))] - 
\E[u(\tilde{X}_t,\tilde{\varphi }_t,\exp(\tilde{X}_t))]| = 0. 
\end{align*}
Since $(\zeta_r)_{0\leq r\leq t} \in \mathcal{A}_t(\varphi - \psi )$ is arbitrary, 
we get 
\begin{eqnarray*}
V_t(w+F(\psi)s, \varphi - \psi, 
s e^{-\gamma h(\infty) \psi};u)\leq V_t(w,\varphi,s;u). 
\end{eqnarray*}
for an arbitrary $\psi \in [0, \varphi ]$. 
Now we complete the proof of (\ref{obj_t_ineq})\,.  
\end{proof}

\begin{proof}[Proof of Lemma \ref {temp_tilde_u}]
We may assume $\gamma > 0$. 
Fix any $\varepsilon \in (0, 1)$ and set $\varepsilon ' =\gamma \varepsilon / (1 - \varepsilon ) $. 
By (\ref {conv_L}), we see that 
for almost all $\omega $, 
there exists a $\delta _0 = \delta _0(\omega ) > 0$ such that 
$L_\delta / \delta < \gamma + \varepsilon '$ for each 
$\delta \in (0, \delta _0)$. 
Let $\delta _1 = \delta _1(\omega ) = (1 + \varepsilon ' / \gamma )^{-1}\delta _0$ and 
take any $\delta \in (0, \delta _1)$. 
Moreover, let $\delta ' = (1 + \varepsilon ' / \gamma )^{-1}\delta $. 
Then we see that $\delta ' < \delta _0$ and thus 
$L_{\delta '} < (\gamma  + \varepsilon ')\delta ' = \gamma \delta $. 
By this inequality and the definition of $\tilde{u}(\delta )$, we get 
$1 \geq \tilde{u}(\delta ) / \delta  \geq \delta ' / \delta = 1 - \varepsilon $, 
which implies the assertion. 
\end{proof}

\subsection{Proof of Theorem \ref {th_LL}}\label{sec_proof_th_LL}
We can confirm assertion (i) by applying It\^o's formula to $\overline{S}_r$ and $\overline {W}_r$. 
By a similar argument to that in Section 7.9 in \cite {Kato}, we obtain 
\begin{eqnarray*}
\E [U(\overline {W}_t)] 
&\leq & 
U\Bigg( \bar{w} + 
\int ^t_0\E \Bigg [ \frac{1 - e^{-\gamma \alpha _0\overline {\varphi }_r}}{\gamma \alpha _0}
\hat{b}(\overline {S}_re^{\gamma \alpha _0\overline {\varphi }_r})\\
&&\hspace{23mm} - 
\int _{(0, \infty )}
\frac{e^{\gamma \alpha _0\overline {\varphi }_r} - 1}{\gamma \alpha _0}
\overline {S}_r(1-e^{-\alpha _0\zeta _rz})\nu (dz) \Bigg]dr\Bigg) 
\end{eqnarray*}
for any $(\overline {\varphi }_r)_r\in \overline {\mathcal {A}}_t(\varphi )$ 
by virtue of the Jensen inequality. 
Since $\hat{b}$ is non-positive, 
the function $U$ is non-decreasing, and the terms
\begin{eqnarray*}
1 - e^{-\gamma \alpha _0\overline {\varphi }_r}, \ 
e^{\gamma \alpha _0\overline {\varphi }_r} - 1, \ 
1-e^{-\alpha _0\zeta _rz}
\end{eqnarray*}
are all non-negative, we see that 
$\E [U(\overline {W}_t)]\leq U(\overline {w})$ for any $(\overline {\varphi }_r)_r\in \overline {\mathcal {A}}_t(\varphi )$, 
which implies $\overline {V}^\varphi _t(\bar{w}, \bar{s})\leq U(\bar{w})$. 
The opposite inequality $\overline {V}^\varphi _t(\bar{w}, \bar{s})\geq U(\bar{w})$ is obtained 
similarly to the result in Section 7.9 in \cite {Kato}. 
This completes the proof. 

\subsection {Proof of Proposition \ref {th_comp_noise}} \label{sec_proof_th_comp_noise} 

The following proposition immediately leads us to (\ref {comp_noise}). 

\begin{proposition} \ 
$V^n_k(w, \varphi , s ; u_\mathrm {RN}) \geq \bar{V}^n_k(w, \varphi , s ; u_\mathrm {RN})$, 
where $V^n_k$ is defined as in \cite {Ishitani-Kato_COSA1} and $\bar{V}^n_k$ is 
obtained from $V^n_k$ by replacing $c^n_k$ with $\tilde{\gamma }$. 
\end{proposition}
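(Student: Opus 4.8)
The plan is to prove this discrete-time inequality by (backward) induction on $k$, the only analytic input being Jensen's inequality applied to the convex map $x\mapsto e^{-ax}$ ($a\ge 0$). First I would recall from \cite{Ishitani-Kato_COSA1} the one-step Bellman recursion characterising $V^n_k$: writing $g_n$ for the discrete impact shape and $c^n_k$ for the impact noise, over the $k$-th period the security price is multiplied by a strictly positive Brownian factor $\Lambda^n_k$ (with $\E[\Lambda^n_k]\in(0,\infty)$ not depending on the control, since the Brownian part of the log-price dynamics in $(\ref{SDE_X_g})$ does not involve $\zeta$) and by $e^{-c^n_k g_n(\psi)}$, where $\psi\in[0,\varphi]$ is the quantity sold at the start of the period, so that $\psi$ is chosen before $c^n_k$ and $\Lambda^n_k$ are observed and is independent of them. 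For the risk-neutral utility $u_{\mathrm{RN}}(w,\varphi,s)=w$ one has, exactly as in \cite{Kato} and \cite{Ishitani-Kato_COSA1}, the affine representation $V^n_k(w,\varphi,s;u_{\mathrm{RN}})=w+s\,f^n_k(\varphi)$ with $f^n_k\ge 0$, and this affine form is preserved by the recursion; the same applies to $\bar V^n_k(w,\varphi,s;u_{\mathrm{RN}})=w+s\,\bar f^n_k(\varphi)$, the recursion for $\bar V^n_k$ being identical except that $c^n_k$ is replaced by the constant $\tilde\gamma$.

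Then I would run the induction on the coefficient functions, proving $f^n_k\ge\bar f^n_k\ge 0$ for every $k$. For $k=0$ both are $0$. For the inductive step, insert the affine forms into the one-step operator: for a fixed $\psi\in[0,\varphi]$,
\[
\E\big[V^n_{k-1}(w+\psi s,\varphi-\psi,\,s\,\Lambda^n_k e^{-c^n_k g_n(\psi)};u_{\mathrm{RN}})\big]
= w+\psi s + s\,\E[\Lambda^n_k]\,\E\big[e^{-c^n_k g_n(\psi)}\big]\,f^n_{k-1}(\varphi-\psi),
\]
using the independence of $\Lambda^n_k$, $c^n_k$ and $\psi$. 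By Jensen's inequality and $\E[c^n_k]=\tilde\gamma$ (as in the setting of \cite{Ishitani-Kato_COSA1}), $\E[e^{-c^n_k g_n(\psi)}]\ge e^{-\tilde\gamma g_n(\psi)}$; since $s\,\E[\Lambda^n_k]\ge 0$ and $f^n_{k-1}\ge\bar f^n_{k-1}\ge 0$ by the induction hypothesis, the right-hand side is at least $w+\psi s + s\,\E[\Lambda^n_k]\,e^{-\tilde\gamma g_n(\psi)}\,\bar f^n_{k-1}(\varphi-\psi)$, which is precisely the corresponding one-step quantity for $\bar V^n_k$. Taking the supremum over $\psi\in[0,\varphi]$ gives $f^n_k(\varphi)\ge\bar f^n_k(\varphi)$, hence $V^n_k\ge\bar V^n_k$, while $f^n_k\ge 0$ is clear (sell nothing). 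Finally, $(\ref{comp_noise})$ follows by taking $k=[nt]$ and letting $n\to\infty$, using the convergence $V^n_{[nt]}\to V_t$ and $\bar V^n_{[nt]}\to\bar V_t$ from \cite{Ishitani-Kato_COSA1}.

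I do not expect a genuine analytic obstacle: the inequality $\E[e^{-c^n_k g_n(\psi)}]\ge e^{-\tilde\gamma g_n(\psi)}$ is elementary. The real work is bookkeeping — recalling the precise discrete recursion and discretisation conventions of \cite{Ishitani-Kato_COSA1} so that (i) the impact noise $c^n_k$ enters the price multiplicatively and independently of the period's Brownian increment and of the control chosen at the start of the period, and (ii) the affine-in-$(w,s)$ reduction for $u_{\mathrm{RN}}$ is legitimate and stable under the recursion. If one prefers to avoid this reduction, the same induction runs verbatim with the datum ``$w+s\,f^n_k(\varphi)$'' replaced by the weaker one ``$V^n_{k-1}(\cdot,\cdot,s)$ is nondecreasing and convex in $s$'' — also propagated by the recursion — the Jensen step then being a conditional Jensen inequality in the $s$-variable given the Brownian path, $\E[V^n_{k-1}(\cdot,\cdot,s\Lambda^n_k e^{-c^n_k g_n(\psi)})\mid B]\ge V^n_{k-1}(\cdot,\cdot,s\Lambda^n_k e^{-\tilde\gamma g_n(\psi)})$.
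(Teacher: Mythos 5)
Your key mechanism is exactly the one the paper uses: since $c^n_l$ is independent of $\mathcal{F}^n_l$ and $\E[c^n_l]=\tilde\gamma$, Jensen's inequality for the convex map $x\mapsto e^{-x g_n(\psi)}$ gives $e^{-\tilde\gamma g_n(\psi^n_l)}\leq \E[e^{-c^n_l g_n(\psi^n_l)}\mid\mathcal{F}^n_l]$. Where you differ is the scaffolding. The paper does not set up a Bellman recursion or an induction at all: it takes an \emph{arbitrary} admissible strategy $(\psi^n_l)_l$ for the deterministic-impact problem, writes the risk-neutral objective as the explicit sum $\E[W^n_k]=w+\sum_{l}\E[\psi^n_l S^n_l \exp(-\E[c^n_l\mid\mathcal{F}^n_l]g_n(\psi^n_l))]$, applies the conditional Jensen inequality termwise inside that sum, bounds the result by $V^n_k$, and takes the supremum over strategies. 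This strategy-wise comparison needs no structural representation of the value function whatsoever, which is what lets it go through for the general state-dependent coefficients $\sigma(x)$, $b(x)$ of the model.

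That is also where your proposal has a genuine gap. Your primary route rests on the affine representation $V^n_k(w,\varphi,s;u_{\mathrm{RN}})=w+s\,f^n_k(\varphi)$ together with the claim that the per-period Brownian factor $\Lambda^n_k$ has an expectation not depending on the control or the current price. Both of these require $\sigma$ and $b$ to be constant (the Black--Scholes case of Section 5); in the model of Sections 1--4 the log-price satisfies $dX_r=\sigma(X_r)dB_r+b(X_r)dr-g(\zeta_r)dL_r$ with state-dependent Lipschitz coefficients, so the Brownian increment of $X$ over a period depends on the price path, hence on the control, and the affine form is not available (the paper itself only obtains $V_t=w+sf(t,\varphi)$ under the additional Black--Scholes assumptions and condition $[D]$). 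Your fallback --- propagating ``nondecreasing and convex in $s$'' through the recursion --- does not repair this: monotonicity in $s$ follows from comparison, but convexity of $s\mapsto V^n_{k-1}(\cdot,\cdot,s)$ is not preserved by a diffusion with state-dependent $\sigma,b$, and your conditional Jensen step needs exactly that convexity. A further, minor, bookkeeping point: in the discretisation of \cite{Ishitani-Kato_COSA1} the sale at step $l$ executes at the post-impact price, so the proceeds term is $\psi^n_l S^n_l e^{-c^n_l g_n(\psi^n_l)}$ rather than $\psi^n_l S^n_l$; Jensen must therefore be applied to the proceeds as well as to the continuation price (harmless, since both carry the same factor $e^{-c^n_l g_n(\psi^n_l)}$, and it is precisely the proceeds term to which the paper applies it). The clean fix is to abandon the ansatz and argue as the paper does, strategy by strategy.
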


\begin{proof}
We use the notation of \cite {Ishitani-Kato_COSA1}. 
Take any $(\psi ^n_l)_l\in \mathcal {A}^n_k(\varphi )$ and 
let $(W^n_l, \varphi ^n_l, S^n_l)_l = \Xi ^n_k(w, \varphi , s ; (\psi ^n_l)_l)$ be 
the triplet for $\bar{V}^n_k(w, \varphi , s ; u_{\mathrm {RN}})$. 
Since $c^n_l$ is independent of $\mathcal {F}^n_l$, 
the Jensen inequality implies 
\begin{align*}
\E [W^n_k] 
&=
w + \sum ^{k-1}_{l = 0}\E [\psi ^n_lS^n_l\exp (-\E [c^n_l | \mathcal {F}^n_l]g_n(\psi ^n_l))]\\
&\leq  
w + \sum ^{k-1}_{l = 0}\E [\psi ^n_lS^n_l\E [\exp (-c^n_lg_n(\psi ^n_l)) | \mathcal {F}^n_l]] \leq 
V^n_k(w, \varphi , s ; u_\mathrm {RN}). 
\end{align*}
Since $(\psi ^n_l)_l$ is arbitrary, we obtain the assertion. 
\end{proof}

\par\bigskip\noindent
{\bf Acknowledgment.} 
The authors are grateful to Prof.~Tai-Ho Wang (Baruch College, The City University of New York) 
for helpful comments and discussions on the subject matter. 
In addition, the authors thank the reviewer for various comments 
and constructive suggestions to improve the quality of the paper. 

\bibliographystyle{amsplain}

\end{document}